\documentclass[11pt]{article}
\usepackage[margin=1in]{geometry}
\usepackage{amsmath,amssymb,amsthm}
\usepackage{algorithm,algpseudocode}
\usepackage{booktabs}
\usepackage{hyperref}
\usepackage{cleveref}
\usepackage{graphicx}
\usepackage{xcolor}
\usepackage{enumitem}
\usepackage{bm}
\usepackage{mathtools}
\usepackage{quantikz}
\usepackage{bbm}
\usepackage{subfig}
\usepackage{float}
\newcommand{\SU}{\mathrm{SU}}
\newcommand{\su}{\mathfrak{su}}
\newcommand{\CNOT}{\mathrm{CNOT}}

\newcommand{\cM}{\mathcal{M}}
\newcommand{\cN}{\mathcal{N}}
\newcommand{\cH}{\mathcal{H}}
\newcommand{\cU}{\mathcal{U}}
\newcommand{\cV}{\mathcal{V}}
\newcommand{\bbR}{\mathbb{R}}
\newcommand{\bbC}{\mathbb{C}}

\newtheorem{theorem}{Theorem}[section]
\newtheorem{lemma}[theorem]{Lemma}
\newtheorem{proposition}[theorem]{Proposition}
\newtheorem{corollary}[theorem]{Corollary}
\theoremstyle{plain}
\newtheorem{definition}[theorem]{Definition}
\newtheorem{assumption}[theorem]{Assumption}
\theoremstyle{plain}
\newtheorem{remark}[theorem]{Remark}

\newcommand{\bbN}{\mathbb{N}}
\newcommand{\bbE}{\mathbb{E}}
\newcommand{\bbP}{\mathbb{P}}

\title{%
  Identity-Paired Progressive Depth Training:\\
  When Trainability Persists Beyond Expressibility%
}
\author{%
  Athanasios Hadjidimoulas, Tirthak Patel, and Anastasios Kyrillidis\\
  Computer Science, Rice University \\ 
  QuanTAS research cluster, Ken Kennedy Institute at Rice University\\
  \texttt{\{th81, tirthak.patel, anastasios\}@rice.edu}
}
\date{}
\begin{document}
\maketitle
\begin{abstract}
Variational Quantum Algorithms (VQAs) are a leading paradigm for
near-term quantum computing, yet their training suffers from
sensitivity to circuit depth, initialization, and landscape pathologies
such as barren plateaus.  We study \emph{progressive depth training}
(PDT)---a layerwise curriculum that trains a shallow circuit before
appending new layers---and identify a fundamental obstacle: fixed
entangling gates (CNOTs) in hardware-efficient ans\"atze cause
\emph{initialization shock}, an energy spike when new layers are
added.  We propose \emph{identity-paired progressive depth training}
(IP-PDT), which appends forward/inverse block pairs---each consisting
of a standard rotation$+$CNOT block followed by its reverse---that
compose to the identity at initialization.  Because the adjacent CNOT
rings cancel, the effective circuit retains only \textit{a single
entangling layer} surrounded by \textit{overparameterized local
rotations}.  We prove a simple \textit{Reachable Set Saturation Theorem}:
under this construction the variational manifold expands exactly once
(when post-entangler rotations are first introduced) and then
\emph{saturates}; all subsequent depth increases provide pure
overparameterization of single-qubit unitaries.  Despite this
saturation, progressive addition of rotation parameters can continue to
improve optimization outcomes---a phenomenon we term
\emph{trainability beyond expressibility}.  We formalize IP-PDT as a
continuation method on nested manifolds, prove monotone energy
guarantees under an acceptance rule, and connect energy error to
ground-state fidelity through spectral-gap inequalities.
A detailed resource analysis shows that IP-PDT achieves lower total gate cost
than both baselines by eliminating most CNOT gates.  

Experiments span
six-qubit Transverse-Field Ising, Tilted Ising, and Random Ising
Hamiltonians, a broader nine-Hamiltonian benchmark, and system sizes up to $n=16$
qubits.  IP-PDT matches or outperforms
full-depth baselines on Hamiltonians whose ground states are
well-approximated by the single-entangler reachable set, with
particularly strong gains in limited-budget regimes.  At $n=16$ the
hardware-efficient baseline degrades sharply while the
single-entangler methods remain trainable.  
\end{abstract}
\section{Introduction}\label{sec:intro}
Variational quantum algorithms (VQAs) convert limited-depth quantum
evolution into an optimization problem driven by classical
computation~\cite{peruzzo2014vqe,cerezo2021vqa}.  At their core lies
the minimization of an energy expectation
$E(\bm\theta)=\langle\boldsymbol{\psi}(\bm\theta)|\mathbf{H}|\boldsymbol{\psi}(\bm\theta)\rangle$, where
$\mathbf{H}$ is a problem Hamiltonian and $|\boldsymbol{\psi}(\bm\theta)\rangle$ is produced
by a parameterized quantum circuit (PQC).  As circuit depth grows,
\emph{expressibility} improves---the set of reachable quantum states
expands---yet \emph{trainability} can simultaneously degrade through
vanishing gradients~\cite{mcclean2018barren}, rugged
landscapes~\cite{anschuetz2022beyond}, and sensitivity to
initialization.

A natural remedy is to treat depth as a \emph{curriculum}: optimize a
shallow circuit, then grow.  This \emph{progressive depth training}
(PDT) strategy is employed in ADAPT-VQE~\cite{grimsley2019adaptive},
layerwise training~\cite{skolik2021layerwise}, and initialization
transfer schemes~\cite{grant2019initialization}.  However, a practical
obstacle remains relatively less-addressed.  In hardware-efficient
ans\"atze (HEA)~\cite{kandala2017hardware}, each layer contains fixed
entangling gates (e.g., CNOTs) that \emph{cannot} be set to zero.
Appending a new layer often disrupts the trained quantum state,
producing an ``initialization shock''---a spike in energy that
can erase the progress of prior stages.
The core difficulty is that entangling gates in the HEA are
non-parameterized and therefore cannot be continuously deactivated:
there is no parameter setting that makes a CNOT ring act as the
identity, so every depth expansion irreversibly commits to additional
entanglement.

\medskip\noindent\textbf{Our approach.}
We propose \emph{identity-paired progressive depth training} (IP-PDT),
in which each depth expansion appends a \emph{pair} of blocks---one
standard (rotations $\to$ CNOT ring) and one reverse (reversed CNOT
ring $\to$ inverse rotations)---that compose to the identity at
initialization.  The key mechanism is that the CNOT ring and its
reverse, being adjacent in the circuit, cancel exactly
($\mathbf{U}_{\mathrm{ent}} \mathbf{U}_{\mathrm{ent}}^{-1}=\mathbf{I}$).  The resulting effective
circuit therefore contains only \textit{a single entangling layer}
(from the initial depth-1 block) surrounded by
\textit{overparameterized local rotations}.  All depth expansions
beyond the first provide no additional entanglement---they are pure
overparameterization of single-qubit unitaries. 

Yet, as we show empirically, progressive
training on these redundant parameters continues to yield consistent
optimization improvements.
This observation challenges the prevailing intuition that deeper
entanglement is needed for better VQA performance and points to a regime
where \emph{trainability gains can persist beyond expressibility
gains}.
Our contributions are summarized as follows:\vspace{-0.2cm}
\begin{enumerate}[leftmargin=*,itemsep=2pt]
  \item \textit{Identity-Paired PDT (IP-PDT):} A layerwise growth
    strategy that eliminates initialization shock by appending
    forward/reverse block pairs whose CNOT gates cancel, leaving only
    local rotations around a fixed entangling layer
    (\Cref{sec:method}).  Identity pairing is the \emph{architectural}
    ingredient---it produces the single-entangler circuit with its
    gate-count savings and guarantees exact energy preservation at each
    expansion; the optimization improvement itself is driven by the
    progressive curriculum, as our ablations show.\vspace{-0.1cm}
  \item \textit{Reachable Set Saturation Theorem:} We prove that the
    variational manifold of IP-PDT circuits expands \emph{exactly once}
    (at the first depth increase) and then saturates; subsequent
    expansions are strictly overparameterization of $\SU(2)$ rotations
    (\Cref{thm:saturation}).\vspace{-0.1cm}
  \item \textit{Continuation-theoretic framework:} We formalize IP-PDT
    as a homotopy continuation method on nested manifolds and prove
    monotone energy guarantees under an acceptance rule
    (\Cref{sec:theory}).\vspace{-0.1cm}
  \item \textit{Basin preservation and fidelity bounds:} We track the
    shallow-stage minimizer under identity expansion via the implicit
    function theorem (\Cref{thm:continuation}), and \emph{separately}
    prove that gradient descent on the previous-stage parameters---new
    layers held fixed---contracts at a geometric rate
    within the inherited basin (\Cref{thm:basin}); joint optimization is
    left to future work.  We connect energy error to ground-state
    fidelity through spectral-gap inequalities
    (\Cref{sec:basin,sec:gap_fidelity}).\vspace{-0.1cm}
  \item \textit{Resource estimation:} We provide a comparison
    of gate counts, per-step costs, and cumulative training costs for
    the baseline, na\"ive PDT, and IP-PDT, showing that IP-PDT achieves
    the lowest total gate cost (\Cref{sec:resource}).\vspace{-0.1cm}
  \item \textit{Systematic experiments:} On Ising
    Hamiltonians (TFIM, Tilted Ising, Random Ising), IP-PDT matches or
    outperforms full-depth baselines and improves on naive PDT across
    most instances, with the largest gains in limited-budget regimes;
    ablations indicate that the progressive curriculum drives these gains
    (\Cref{sec:experiments}).\vspace{-0.1cm}
\end{enumerate}

\medskip\noindent\textbf{Related work.}
Our work intersects three research threads.
\textit{Progressive and layerwise VQE methods.}
ADAPT-VQE~\cite{grimsley2019adaptive} grows circuits by greedily
selecting operators from a pool; it is adaptive but expensive and, to our knowledge,
lacks closed-form convergence guarantees.
Skolik et al.~\cite{skolik2021layerwise} propose layerwise training
with zero initialization of new parameters, avoiding the operator
selection cost but not addressing the CNOT-induced initialization
shock.
Closest in spirit, \v{Z}unkovi\v{c} et al.~\cite{zunkovic2026adiabatic}
run VQE iteratively along an \emph{adiabatic path} of Hamiltonians,
warm-starting each optimization from the previous solution and deriving
conditions under which gradient-based training avoids barren plateaus and
local optima---continuation in Hamiltonian space, complementary to our
depth-space curriculum.
In the classical setting, Szlendak et al.~\cite{szlendak2024rpt}
cast progressive training as randomized coordinate descent and prove
convergence for smooth objectives via a unified quantity~$L_\mathbf{P}$.
Our work differs from this thread in that we exploit the specific
structure of quantum circuits to provide lossless depth growth
with provable energy monotonicity.

\textit{Identity-block and warm-start initialization.}
Grant et al.~\cite{grant2019initialization} propose identity blocks
to mitigate barren plateaus by initializing new layers near the
identity.  Their construction applies to individual parameterized
gates; ours pairs entire HEA blocks (including CNOTs) and
exploits the resulting cancellation structure, which eliminates
entangling gates rather than merely initializing them.
Layer-wise pretraining is a long-standing idea in classical deep
learning; the quantum layerwise scheme of Skolik et
al.~\cite{skolik2021layerwise} likewise uses progressive
initialization, but, to our knowledge, neither faces the entanglement barrier specific
to quantum circuits that we address here.

\textit{Overparameterization in quantum circuits.}
Larocca et al.~\cite{larocca2023theory} characterize the onset of
overparameterization in quantum neural networks, showing that
gradient-free regions vanish above a critical parameter count.
Holmes et al.~\cite{holmes2022connecting} connect expressibility to
gradient magnitudes.  Our results address a \emph{different} invariant
than the dynamical Lie algebra (DLA) that sets Larocca's
overparameterization onset.  The Saturation Theorem shows the SE
reachable \emph{set} stops growing at $D=2$ (\Cref{thm:saturation}):
each qubit's post-entangler factor is confined to $\SU(2)$, so the
ansatz adds only $O(n)$ parameters per layer rather than ranging over the
$4^n-1$ dimensions of $\mathfrak{su}(2^n)$ that a deep HEA can explore.
We do not equate the two invariants---computing the SE DLA exactly is
left open---but this gap between $O(n)$ parameter growth and the
exponential-dimensional algebra is the structural reason
behind both effects we observe: the single-qubit expressibility ceiling
we prove (\Cref{thm:saturation}).

\medskip\noindent\textbf{Paper organization.}
\Cref{sec:prelim} establishes notation and background.
\Cref{sec:method} describes IP-PDT and the circuit families.
\Cref{sec:theory} presents the theoretical analysis.
\Cref{sec:experiments} reports experiments.
\Cref{sec:conclusion} concludes.
\section{Preliminaries}\label{sec:prelim}
We collect the mathematical background needed to follow the
theoretical developments.  Readers familiar with quantum computing
may proceed directly to \Cref{sec:method}; the notation table
(\Cref{tab:notation}) suffices as a reference.

\medskip\noindent\textbf{Hilbert space and Dirac notation.}
A system of $n$ qubits lives in the Hilbert space
$\cH=(\bbC^{2})^{\otimes n}\cong\bbC^{2^n}$, where $\otimes$ denotes
the tensor (Kronecker) product.  We write quantum states in
\emph{Dirac notation}: a column vector $\boldsymbol{\psi}\in\bbC^{2^n}$ with
$\|\boldsymbol{\psi}\|=1$ is denoted $|\boldsymbol{\psi}\rangle$ (a ``ket''), and its conjugate
transpose is $\langle\boldsymbol{\psi}|$ (a ``bra'').  The inner product of two
states is $\langle\boldsymbol{\phi}|\boldsymbol{\psi}\rangle\in\bbC$, and the expected value of
an observable $\mathbf{H}$ (a Hermitian matrix) in state $|\boldsymbol{\psi}\rangle$ is
$\langle\boldsymbol{\psi}|\mathbf{H}|\boldsymbol{\psi}\rangle\in\bbR$.
The standard computational-basis states for $n$ qubits are
$|b_1 b_2\cdots b_n\rangle$ with $b_q\in\{0,1\}$, forming an
orthonormal basis of $\cH$.  The all-zeros state
$|0\rangle^{\otimes n}=|0\cdots 0\rangle$ is abbreviated
$|0^n\rangle$ and serves as the default initial state of a quantum
circuit.

\medskip\noindent\textbf{Tensor products.}
If $\mathbf{A}$ is an operator on qubit~$q_1$ and $\mathbf{B}$ on qubit~$q_2$, the
joint operator $\mathbf{A}\otimes \mathbf{B}$ acts on the tensor-product space
$\bbC^2\otimes\bbC^2=\bbC^4$.  The notation
$\bigotimes_{q=1}^n \mathbf{A}_q = \mathbf{A}_1\otimes \mathbf{A}_2\otimes\cdots\otimes \mathbf{A}_n$
describes independent (i.e., non-entangling) operations on each
qubit.  A crucial algebraic identity for tensor products of
square matrices is:
\begin{equation}\label{eq:tensor_mult}
  \Bigl(\bigotimes_{q=1}^n \mathbf{A}_q\Bigr)
  \Bigl(\bigotimes_{q=1}^n \mathbf{B}_q\Bigr)
  = \bigotimes_{q=1}^n (\mathbf{A}_q \mathbf{B}_q),
\end{equation}
which follows from the mixed-product property of the Kronecker
product.  This identity is used repeatedly in our analysis.

\medskip\noindent\textbf{Pauli matrices.}
The three Pauli matrices:
\[
  \mathbf{X}=\begin{pmatrix}0&1\\1&0\end{pmatrix},\quad
  \mathbf{Y}=\begin{pmatrix}0&-i\\i&0\end{pmatrix},\quad
  \mathbf{Z}=\begin{pmatrix}1&0\\0&-1\end{pmatrix}
\]
are Hermitian, unitary, traceless $2\times 2$ matrices.  Together with
the identity $\mathbf{I}_2$, they form a basis for all $2\times 2$ complex
matrices.  Any $n$-qubit Hamiltonian can be decomposed as a real
linear combination of $n$-fold tensor products of Pauli matrices
(called \emph{Pauli strings}):
$\mathbf{H}=\sum_k c_k \mathbf{P}_{k,1}\otimes\cdots\otimes \mathbf{P}_{k,n}$, where each
$\mathbf{P}_{k,q}\in\{\mathbf{I},\mathbf{X},\mathbf{Y},\mathbf{Z}\}$.

\medskip\noindent\textbf{Pauli rotation gates.}
For $\sigma\in\{\mathbf{X},\mathbf{Y},\mathbf{Z}\}$ and angle $\theta\in\bbR$, the Pauli
rotation is defined as:
\begin{equation}\label{eq:pauli_rot}
  \mathbf{R}_{\sigma}(\theta)
  = \exp \bigl(-i\tfrac{\theta}{2}\sigma\bigr)
  = \cos \bigl(\tfrac{\theta}{2}\bigr) \mathbf{I}_2
    - i\sin \bigl(\tfrac{\theta}{2}\bigr) \sigma.
\end{equation}
Key properties: $\mathbf{R}_{\sigma}(\theta)$ is unitary,
$\mathbf{R}_{\sigma}(\theta)^{\dagger}=\mathbf{R}_{\sigma}(-\theta)$, and
$\mathbf{R}_{\sigma}(\theta_1)\mathbf{R}_{\sigma}(\theta_2)
=\mathbf{R}_{\sigma}(\theta_1+\theta_2)$.

\medskip\noindent\textbf{Parameterized rotations.}
For angles $(\alpha,\beta,\gamma)\in\bbR^3$, define the forward
single-qubit rotation:
\begin{equation}\label{eq:fwd_rot}
  \mathbf{R}^{\mathrm{fwd}}(\alpha,\beta,\gamma)
  = \mathbf{R}_Y(\alpha) \mathbf{R}_Z(\beta) \mathbf{R}_X(\gamma)
\end{equation}
and its inverse:
\begin{equation}\label{eq:inv_rot}
  \mathbf{R}^{\mathrm{inv}}(\alpha,\beta,\gamma)
  = \mathbf{R}_X(-\gamma) \mathbf{R}_Z(-\beta) \mathbf{R}_Y(-\alpha).
\end{equation}
\begin{figure}[t]
\centering
\subfloat[Forward rotation $\mathbf{R}^{\mathrm{fwd}}(\alpha,\beta,\gamma)$]{%
\begin{quantikz}[row sep=0.3cm, column sep=0.5cm]
  \qw & \gate{\mathbf{R}_Y(\alpha)} & \gate{\mathbf{R}_Z(\beta)} & \gate{\mathbf{R}_X(\gamma)} & \qw
\end{quantikz}%
\label{fig:single_fwd}%
}
\hspace{1.2cm}
\subfloat[Inverse rotation $\mathbf{R}^{\mathrm{inv}}(\alpha,\beta,\gamma)$]{%
\begin{quantikz}[row sep=0.3cm, column sep=0.5cm]
  \qw & \gate{\mathbf{R}_X(-\gamma)} & \gate{\mathbf{R}_Z(-\beta)} & \gate{\mathbf{R}_Y(-\alpha)} & \qw
\end{quantikz}%
\label{fig:single_inv}%
}
\caption{Single-qubit rotation gates.  (a)~The forward rotation
applies $\mathbf{R}_Y$, $\mathbf{R}_Z$, $\mathbf{R}_X$ in sequence.  (b)~The inverse rotation
negates all angles \emph{and} reverses the gate order, yielding the
adjoint: $\mathbf{R}^{\mathrm{inv}}=[\mathbf{R}^{\mathrm{fwd}}]^{\dagger}$.
Composing~(a) followed by~(b) yields $\mathbf{I}_2$
(\Cref{lem:single_qubit_inv}).}
\label{fig:single_rotations}
\end{figure}
\begin{lemma}[Single-qubit inverse cancellation]
\label{lem:single_qubit_inv}
  For all $(\alpha,\beta,\gamma)\in\bbR^3$,
  \[
    \mathbf{R}^{\mathrm{fwd}}(\alpha,\beta,\gamma) 
    \mathbf{R}^{\mathrm{inv}}(\alpha,\beta,\gamma) = \mathbf{I}_2.
  \]
\end{lemma}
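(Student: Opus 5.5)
The plan is to expand both factors using the definitions \eqref{eq:fwd_rot} and \eqref{eq:inv_rot}, then collapse the product from the middle outward with the additive composition law $\mathbf{R}_{\sigma}(\theta_1)\mathbf{R}_{\sigma}(\theta_2)=\mathbf{R}_{\sigma}(\theta_1+\theta_2)$ recorded after \eqref{eq:pauli_rot}. Concretely, I will write
\[
  \mathbf{R}^{\mathrm{fwd}}\mathbf{R}^{\mathrm{inv}}
  = \mathbf{R}_Y(\alpha)\mathbf{R}_Z(\beta)\mathbf{R}_X(\gamma)\,
    \mathbf{R}_X(-\gamma)\mathbf{R}_Z(-\beta)\mathbf{R}_Y(-\alpha).
\]
The two innermost factors share the same Pauli axis and have opposite angles. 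Hence $\mathbf{R}_X(\gamma)\mathbf{R}_X(-\gamma)=\mathbf{R}_X(0)=\mathbf{I}_2$.

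The argument then proceeds as a three-step telescoping. After removing the $X$ pair, the factors $\mathbf{R}_Z(\beta)\mathbf{R}_Z(-\beta)$ become adjacent and cancel in the same way. This leaves $\mathbf{R}_Y(\alpha)\mathbf{R}_Y(-\alpha)=\mathbf{I}_2$. Each step uses associativity of matrix multiplication and the fact that $\mathbf{R}_\sigma(0)=\cos(0)\mathbf{I}_2-i\sin(0)\sigma=\mathbf{I}_2$, which is immediate from \eqref{eq:pauli_rot}.

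An equivalent route is to note that $\mathbf{R}^{\mathrm{inv}}=[\mathbf{R}^{\mathrm{fwd}}]^\dagger$. This holds because $(ABC)^\dagger=C^\dagger B^\dagger A^\dagger$ and $\mathbf{R}_\sigma(\theta)^\dagger=\mathbf{R}_\sigma(-\theta)$. Unitarity of $\mathbf{R}^{\mathrm{fwd}}$, as a product of unitaries, then gives the result.

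The only point requiring care is justifying the composition law itself, and the paper states it as a known property. If it needed proof, I would use $\sigma^2=\mathbf{I}_2$ and the commutativity of $\exp(-i\tfrac{\theta_1}{2}\sigma)$ with $\exp(-i\tfrac{\theta_2}{2}\sigma)$, since both are exponentials of multiples of the same matrix. There is no genuine obstacle; the main thing to get right is the ordering, namely that the reversed gate order in $\mathbf{R}^{\mathrm{inv}}$ makes matching axes adjacent.
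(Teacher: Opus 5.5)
Your proposal is correct and matches the paper's proof. Both expand $\mathbf{R}^{\mathrm{fwd}}\mathbf{R}^{\mathrm{inv}}$ into the same six-factor chain and telescope the same-axis pairs from the innermost $X$ pair outward, using $\mathbf{R}_\sigma(\theta)\mathbf{R}_\sigma(-\theta)=\mathbf{R}_\sigma(0)=\mathbf{I}_2$. Your alternative adjoint-plus-unitarity route is also valid (the paper itself records $\mathbf{R}^{\mathrm{inv}}=[\mathbf{R}^{\mathrm{fwd}}]^\dagger$ in its appendix), but it is not needed here.
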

\begin{proof}
Substitute the definitions~\eqref{eq:fwd_rot} of
$\mathbf{R}^{\mathrm{fwd}}$ and~\eqref{eq:inv_rot} of
$\mathbf{R}^{\mathrm{inv}}$, and concatenate the two products into a single
chain of five factors:
\begin{equation}\label{eq:fwd_inv_chain0}
  \mathbf{R}^{\mathrm{fwd}}(\alpha,\beta,\gamma) 
  \mathbf{R}^{\mathrm{inv}}(\alpha,\beta,\gamma)
  = \mathbf{R}_Y(\alpha) \mathbf{R}_Z(\beta) 
    \mathbf{R}_X(\gamma) \mathbf{R}_X(-\gamma) 
    \mathbf{R}_Z(-\beta) \mathbf{R}_Y(-\alpha).
\end{equation}
The composition law $\mathbf{R}_{\sigma}(\theta_1)\mathbf{R}_{\sigma}(\theta_2)
=\mathbf{R}_{\sigma}(\theta_1+\theta_2)$ stated
after~\eqref{eq:pauli_rot} (with $\mathbf{R}_{\sigma}(0)=\mathbf{I}_2$) gives, for
each single axis $\sigma\in\{\mathbf{X},\mathbf{Y},\mathbf{Z}\}$, the inversion identity
$\mathbf{R}_{\sigma}(\theta)\mathbf{R}_{\sigma}(-\theta)
=\mathbf{R}_{\sigma}(0)=\mathbf{I}_2$. We apply it to the innermost adjacent pair
and work outward; at each step the surviving outer factors are
unchanged because the cancelled pair contracts to $\mathbf{I}_2$:
\begin{align}
  \mathbf{R}^{\mathrm{fwd}} \mathbf{R}^{\mathrm{inv}}
  &= \mathbf{R}_Y(\alpha) \mathbf{R}_Z(\beta) 
     \underbrace{\mathbf{R}_X(\gamma) \mathbf{R}_X(-\gamma)}_{= \mathbf{R}_X(0)=\mathbf{I}_2} 
     \mathbf{R}_Z(-\beta) \mathbf{R}_Y(-\alpha) \notag\\
  &= \mathbf{R}_Y(\alpha) 
     \underbrace{\mathbf{R}_Z(\beta) \mathbf{R}_Z(-\beta)}_{= \mathbf{R}_Z(0)=\mathbf{I}_2} 
     \mathbf{R}_Y(-\alpha) \notag\\
  &= \underbrace{\mathbf{R}_Y(\alpha) \mathbf{R}_Y(-\alpha)}_{= \mathbf{R}_Y(0)=\mathbf{I}_2}
   = \mathbf{I}_2. \label{eq:fwd_inv_chain}
\end{align}
The cancellation telescopes because $\mathbf{R}^{\mathrm{inv}}$ reverses
\emph{both} the angles \emph{and} the gate order relative to
$\mathbf{R}^{\mathrm{fwd}}$. Since the chain~\eqref{eq:fwd_inv_chain0} has no global
phase and each contraction is exact, the identity holds for every
$(\alpha,\beta,\gamma)\in\bbR^3$.
\end{proof}

\noindent\textbf{The special unitary group $\SU(2)$.}
The group $\SU(2)$ consists of all $2\times 2$ unitary matrices with
determinant~$1$:
$\SU(2)=\{\mathbf{U}\in\bbC^{2\times 2}:\mathbf{U}^{\dagger}\mathbf{U}=\mathbf{I}_2, \det(\mathbf{U})=1\}$.
It is a three-dimensional Lie group with Lie algebra $\su(2)$
spanned by $\{iX, iY, iZ\}$.  Every $\mathbf{U}\in\SU(2)$ (up to a global
phase) admits a three-parameter Euler decomposition; in particular,
our $\mathbf{R}^{\mathrm{fwd}}$ parameterizes all of $\SU(2)$ as
$(\alpha,\beta,\gamma)$ range over $\bbR^3$.

\medskip\noindent\textbf{Quantum system.}
We work on the $n$-qubit Hilbert space
$\cH\cong(\bbC^{2})^{\otimes n}$ of dimension $d=2^n$.  A Hermitian
operator $\mathbf{H}=\mathbf{H}^\dagger\in\bbC^{d\times d}$ (the \emph{Hamiltonian})
has eigen-decomposition:
\begin{equation}\label{eq:eigdecomp}
  \mathbf{H} = \sum_{i=1}^{d}\lambda_i|\mathbf{v}_i\rangle\langle \mathbf{v}_i|,
  \qquad
  \lambda_1\le\lambda_2\le\cdots\le\lambda_d,
\end{equation}
with orthonormal eigenbasis $\{|\mathbf{v}_i\rangle\}$.  The \emph{spectral gap}
is $\Delta:=\lambda_2-\lambda_1\ge 0$.

\medskip\noindent\textbf{Variational Quantum Eigensolver (VQE).}
A parameterized quantum circuit $\mathbf{U}(\bm\theta)$ prepares a trial state
$|\boldsymbol{\psi}(\bm\theta)\rangle=\mathbf{U}(\bm\theta)|0^n\rangle$, and a classical
optimizer minimizes the \emph{energy function}:
\begin{equation}\label{eq:vqe_energy}
  E(\bm\theta)
  = \langle 0^n|\mathbf{U}(\bm\theta)^{\dagger} \mathbf{H} \mathbf{U}(\bm\theta)|0^n\rangle
  \ge\lambda_1
  \quad\text{(variational principle)}.
\end{equation}

\medskip\noindent\textbf{CNOT ring.}
A controlled-NOT (CNOT) gate flips a target qubit conditioned on a
control qubit:
$\CNOT_{c\to t}|a\rangle_c|b\rangle_t=|a\rangle_c|a\oplus b\rangle_t$.
We arrange $n$ CNOT gates in a ring topology:
\begin{equation}\label{eq:cnot_ring}
  \mathbf{U}_{\mathrm{ent}}
  =\prod_{q=1}^{n}\CNOT_{q\to(q\bmod n)+1},
\end{equation}
where $\CNOT_{1\to 2}$ is applied first (rightmost) and
$\CNOT_{n\to 1}$ last.  The \emph{reverse CNOT ring} applies the same
gates in the opposite order:
\begin{equation}\label{eq:cnot_ring_rev}
  \mathbf{U}_{\mathrm{ent}}^{-1}
  =\prod_{q=n}^{1}\CNOT_{q\to(q\bmod n)+1}.
\end{equation}
Since each CNOT is self-adjoint ($\CNOT^2=\mathbf{I}_4$) and the reverse-order
product is the inverse, we have
\begin{equation}\label{eq:cnot_cancel}
  \mathbf{U}_{\mathrm{ent}} \mathbf{U}_{\mathrm{ent}}^{-1}
  =\mathbf{U}_{\mathrm{ent}}^{-1} \mathbf{U}_{\mathrm{ent}}=\mathbf{I}_{2^n}.
\end{equation}
This cancellation is the engine behind the identity-pair construction.
\begin{figure}[t]
\centering
\subfloat[Forward CNOT ring $\mathbf{U}_{\mathrm{ent}}$]{%
\begin{quantikz}[row sep=0.35cm]
  \lstick{$q_1$} & \ctrl{1} & \qw      & \qw      & \targ{}   & \qw \\
  \lstick{$q_2$} & \targ{}  & \ctrl{1} & \qw      & \qw       & \qw \\
  \lstick{$q_3$} & \qw      & \targ{}  & \ctrl{1} & \qw       & \qw \\
  \lstick{$q_4$} & \qw      & \qw      & \targ{}  & \ctrl{-3} & \qw
\end{quantikz}%
\label{fig:cnot_fwd}%
}\hspace{0.8cm}
\subfloat[Reverse CNOT ring $\mathbf{U}_{\mathrm{ent}}^{-1}$]{%
\begin{quantikz}[row sep=0.35cm]
  \lstick{$q_1$} & \targ{}   & \qw      & \qw      & \ctrl{1} & \qw \\
  \lstick{$q_2$} & \qw       & \qw      & \ctrl{1} & \targ{}  & \qw \\
  \lstick{$q_3$} & \qw       & \ctrl{1} & \targ{}  & \qw      & \qw \\
  \lstick{$q_4$} & \ctrl{-3} & \targ{}  & \qw      & \qw      & \qw
\end{quantikz}%
\label{fig:cnot_rev}%
}
\caption{CNOT rings for $n=4$ qubits.  (a)~Forward ring: gates are
applied in order $\CNOT_{1\to 2},\ldots,\CNOT_{4\to 1}$.
(b)~Reverse ring: same gates in opposite order.  Placing (a) immediately
followed by (b) yields $\mathbf{U}_{\mathrm{ent}} \mathbf{U}_{\mathrm{ent}}^{-1}=\mathbf{I}_{16}$.}
\label{fig:cnot_ring}
\end{figure}
\begin{remark}
  Although the CNOT ring is parameter-free (its matrix is fixed), it
  is the \emph{only} operation that couples different qubits.  Without
  it, the ansatz reduces to a product of independent single-qubit
  rotations and can only represent product states.
\end{remark}

\medskip\noindent\textbf{Rotation layers.}
For a parameter block
$\bm\theta^{(d)}=(\alpha_1^{(d)},\beta_1^{(d)},\gamma_1^{(d)},
\ldots,\alpha_n^{(d)},\beta_n^{(d)},\gamma_n^{(d)})\in\bbR^{3n}$,
define the \emph{forward rotation layer}:
\begin{equation}\label{eq:rot_fwd_layer}
  \mathcal{R}^{\mathrm{fwd}}(\bm\theta^{(d)})
  =\bigotimes_{q=1}^{n}
   \mathbf{R}^{\mathrm{fwd}}(\alpha_q^{(d)},\beta_q^{(d)},\gamma_q^{(d)}),
\end{equation}
and the \emph{inverse rotation layer}
$\mathcal{R}^{\mathrm{inv}}(\bm\theta^{(d)})
=[\mathcal{R}^{\mathrm{fwd}}(\bm\theta^{(d)})]^\dagger$.
\begin{figure}[t]
\centering
\begin{quantikz}[row sep=0.3cm, column sep=0.4cm]
  \lstick{$q_1$} & \gate{\mathbf{R}_Y(\alpha_1)} & \gate{\mathbf{R}_Z(\beta_1)}
                  & \gate{\mathbf{R}_X(\gamma_1)} & \qw \\
  \lstick{$q_2$} & \gate{\mathbf{R}_Y(\alpha_2)} & \gate{\mathbf{R}_Z(\beta_2)}
                  & \gate{\mathbf{R}_X(\gamma_2)} & \qw \\
  \lstick{$q_3$} & \gate{\mathbf{R}_Y(\alpha_3)} & \gate{\mathbf{R}_Z(\beta_3)}
                  & \gate{\mathbf{R}_X(\gamma_3)} & \qw \\
  \lstick{$q_4$} & \gate{\mathbf{R}_Y(\alpha_4)} & \gate{\mathbf{R}_Z(\beta_4)}
                  & \gate{\mathbf{R}_X(\gamma_4)} & \qw
\end{quantikz}
\caption{Parallel forward rotation layer
$\mathcal{R}^{\mathrm{fwd}}(\bm\theta)$ for $n=4$ qubits.  Each
qubit undergoes an independent three-gate rotation; no entanglement
occurs.  The inverse layer $\mathcal{R}^{\mathrm{inv}}$ negates all
angles and reverses the gate order on each wire.}
\label{fig:rotation_layer}
\end{figure}

\medskip\noindent\textbf{Notation conventions.}
Throughout, boldface symbols ($\bm\theta$, $\bm\phi$) denote
parameter vectors, $\nabla_{\bm\theta}$ is the gradient, and
$\nabla^2_{\bm\theta\bm\theta}$ is the Hessian.  We use $\|\cdot\|$
for both the $\ell_2$ vector norm and the operator (spectral) norm.
The notation $\mathbf{A}\succeq\mu \mathbf{I}$ means $\mathbf{A}$ has all eigenvalues at least
$\mu$.  \Cref{tab:notation} collects the principal symbols.
\begin{table}[t]
\centering\small
\caption{Notation summary.}\label{tab:notation}
\begin{tabular}{@{}lll@{}}
\toprule
\textbf{Symbol} & \textbf{Type} & \textbf{Definition} \\
\midrule
$n$               & $\bbN$       & Number of qubits \\
$d=2^n$           & $\bbN$       & Hilbert space dimension \\
$D$               & $\bbN$       & Number of rotation layers \\
$\bm\theta^{(k)}$ & $\bbR^{3n}$ & Rotation parameters for layer $k$ \\
$\mathbf{H}$               & Hermitian    & Hamiltonian \\
$\lambda_i$,$\Delta$ & $\bbR$    & Eigenvalues; spectral gap \\
$E_D(\bm\theta)$  & $\bbR$      & Energy at depth $D$ \\
$\mathbf{U}_{\mathrm{ent}}$ & Unitary     & CNOT ring (fixed) \\
$\cM_D$           & $\subseteq\cH$ & Reachable set at depth $D$ \\
$T$               & $\bbN$       & Optimization steps per stage \\
$\eta$            & $\bbR_{>0}$  & Learning rate \\
$F(\boldsymbol{\psi})$         & $[0,1]$      & Ground-state fidelity \\
$\sigma$          & $\bbR_{>0}$  & Jitter scale for identity-pair init. \\
\bottomrule
\end{tabular}
\end{table}
\section{Identity-Paired Progressive Depth Training}\label{sec:method}
We describe two circuit architectures---the standard hardware-efficient
ansatz (HEA) and the single-entangler circuit that arises from
identity-pair CNOT cancellation---and three training protocols.
\subsection{Standard hardware-efficient ansatz (HEA)}
\label{sec:hea}
The standard HEA 
stacks $D$ identical \emph{blocks}, each consisting of a forward
rotation layer followed by a CNOT ring.  A single block at layer~$d$
is the unitary:
\begin{equation}\label{eq:hea_block}
  \mathbf{B}_d(\bm\theta^{(d)})
  = \mathbf{U}_{\mathrm{ent}} \mathcal{R}^{\mathrm{fwd}}(\bm\theta^{(d)}).
\end{equation}
The depth-$D$ HEA circuit is
\begin{equation}\label{eq:hea_circuit}
  \mathbf{U}_D^{\mathrm{HEA}}(\bm\theta)
  = \prod_{d=D-1}^{0}\mathbf{B}_d(\bm\theta^{(d)})
  = \prod_{d=D-1}^{0}
    \bigl[\mathbf{U}_{\mathrm{ent}} 
    \mathcal{R}^{\mathrm{fwd}}(\bm\theta^{(d)})\bigr],
\end{equation}
with parameter vector
$\bm\theta=(\bm\theta^{(0)},\ldots,\bm\theta^{(D-1)})\in\bbR^{3nD}$.
The descending product $\prod_{d=D-1}^{0}$ is time-ordered:
block~$0$ acts first on $|0^n\rangle$ (rightmost), and block~$D-1$
acts last (leftmost), following the standard right-to-left operator
composition convention.
\begin{figure}[t]
\centering
\begin{quantikz}[row sep=0.3cm, column sep=0.3cm]
  \lstick{$q_1$}
    & \gate{\mathbf{R}_Y} & \gate{\mathbf{R}_Z} & \gate{\mathbf{R}_X}
    & \ctrl{1} & \qw      & \qw      & \targ{}
    & \gate{\mathbf{R}_Y} & \gate{\mathbf{R}_Z} & \gate{\mathbf{R}_X}
    & \ctrl{1} & \qw      & \qw      & \targ{}  & \qw \\
  \lstick{$q_2$}
    & \gate{\mathbf{R}_Y} & \gate{\mathbf{R}_Z} & \gate{\mathbf{R}_X}
    & \targ{}  & \ctrl{1} & \qw      & \qw
    & \gate{\mathbf{R}_Y} & \gate{\mathbf{R}_Z} & \gate{\mathbf{R}_X}
    & \targ{}  & \ctrl{1} & \qw      & \qw      & \qw \\
  \lstick{$q_3$}
    & \gate{\mathbf{R}_Y} & \gate{\mathbf{R}_Z} & \gate{\mathbf{R}_X}
    & \qw      & \targ{}  & \ctrl{1} & \qw
    & \gate{\mathbf{R}_Y} & \gate{\mathbf{R}_Z} & \gate{\mathbf{R}_X}
    & \qw      & \targ{}  & \ctrl{1} & \qw      & \qw \\
  \lstick{$q_4$}
    & \gate{\mathbf{R}_Y} & \gate{\mathbf{R}_Z} & \gate{\mathbf{R}_X}
    & \qw      & \qw      & \targ{}  & \ctrl{-3}
    & \gate{\mathbf{R}_Y} & \gate{\mathbf{R}_Z} & \gate{\mathbf{R}_X}
    & \qw      & \qw      & \targ{}  & \ctrl{-3}& \qw
\end{quantikz}
\caption{Standard HEA circuit $\mathbf{U}_2^{\mathrm{HEA}}$ for $n=4$, $D=2$.}
\label{fig:hea_circuit}
\end{figure}
Because $\mathbf{U}_{\mathrm{ent}}$ is fixed and non-parameterized, its action
cannot be ``turned off''.  Adding a new block to the HEA
\emph{always} introduces an additional entangling layer, which
generically disrupts the state prepared by the preceding blocks.
This is the source of \emph{initialization shock} in na\"ive PDT.
\subsection{Identity-pair expansion and CNOT cancellation}
\label{sec:identity_pair}
To grow the circuit without disrupting the trained state, we append
not one but \emph{two} blocks that together compose to the identity.
The first is a standard forward block; the second is its
\emph{reverse}---the CNOT ring applied in reverse order followed by
inverse rotations:
\begin{equation}\label{eq:reverse_block}
  \mathbf{B}_d^{\mathrm{rev}}(\bm\theta^{(d)})
  = \mathcal{R}^{\mathrm{inv}}(\bm\theta^{(d)}) 
    \mathbf{U}_{\mathrm{ent}}^{-1}.
\end{equation}
\begin{proposition}[Identity-pair cancellation]
\label{prop:ip_cancel}
  For any $\bm\theta\in\bbR^{3n}$, the forward block followed by the
  reverse block composes to the identity:
  \begin{equation}\label{eq:ip_cancel}
    \mathbf{B}^{\mathrm{rev}}(\bm\theta) \mathbf{B}(\bm\theta)
    = \mathcal{R}^{\mathrm{inv}}(\bm\theta) 
      \underbrace{\mathbf{U}_{\mathrm{ent}}^{-1} \mathbf{U}_{\mathrm{ent}}}_{= \mathbf{I}_{2^n}}
      \mathcal{R}^{\mathrm{fwd}}(\bm\theta)
    = \underbrace{\mathcal{R}^{\mathrm{inv}}(\bm\theta) 
      \mathcal{R}^{\mathrm{fwd}}(\bm\theta)}_{= \mathbf{I}_{2^n}}
    = \mathbf{I}_{2^n}.
  \end{equation}
\end{proposition}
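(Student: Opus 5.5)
The plan is to expand both blocks from their definitions, \eqref{eq:hea_block} and \eqref{eq:reverse_block}, and to use associativity of matrix multiplication to regroup the resulting four-factor product. Writing $\mathbf{B}(\bm\theta)=\mathbf{U}_{\mathrm{ent}}\,\mathcal{R}^{\mathrm{fwd}}(\bm\theta)$ and $\mathbf{B}^{\mathrm{rev}}(\bm\theta)=\mathcal{R}^{\mathrm{inv}}(\bm\theta)\,\mathbf{U}_{\mathrm{ent}}^{-1}$, the product is
\[
\mathcal{R}^{\mathrm{inv}}(\bm\theta)\,\bigl(\mathbf{U}_{\mathrm{ent}}^{-1}\mathbf{U}_{\mathrm{ent}}\bigr)\,\mathcal{R}^{\mathrm{fwd}}(\bm\theta).
\]
The two entangling factors are now adjacent in the middle. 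The reverse ring \eqref{eq:cnot_ring_rev} is the same $n$ self-inverse CNOTs applied in the opposite order. Cancelling innermost pairs via $\CNOT^2=\mathbf{I}_4$ telescopes the product to $\mathbf{I}_{2^n}$, which is exactly \eqref{eq:cnot_cancel}. This step does not depend on $\bm\theta$, because the ring carries no parameters.

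Dropping the middle identity leaves $\mathcal{R}^{\mathrm{inv}}(\bm\theta)\mathcal{R}^{\mathrm{fwd}}(\bm\theta)$. This can be handled in two ways.
\begin{itemize}
\item Directly from the definition $\mathcal{R}^{\mathrm{inv}}=[\mathcal{R}^{\mathrm{fwd}}]^\dagger$: since $\mathcal{R}^{\mathrm{fwd}}$ is a tensor product of unitaries, it is unitary, so $[\mathcal{R}^{\mathrm{fwd}}]^\dagger\mathcal{R}^{\mathrm{fwd}}=\mathbf{I}_{2^n}$.
\item Concretely: identify $[\mathcal{R}^{\mathrm{fwd}}]^\dagger=\bigotimes_q \mathbf{R}^{\mathrm{inv}}(\alpha_q,\beta_q,\gamma_q)$, using $(\mathbf{A}\otimes\mathbf{B})^\dagger=\mathbf{A}^\dagger\otimes\mathbf{B}^\dagger$ together with $\mathbf{R}_\sigma(\theta)^\dagger=\mathbf{R}_\sigma(-\theta)$. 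Then apply the mixed-product identity \eqref{eq:tensor_mult} to reduce to qubitwise factors $\mathbf{R}^{\mathrm{inv}}\mathbf{R}^{\mathrm{fwd}}$.
\end{itemize}

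The one point needing care is the order of the factors. \Cref{lem:single_qubit_inv} states $\mathbf{R}^{\mathrm{fwd}}\mathbf{R}^{\mathrm{inv}}=\mathbf{I}_2$, whereas here the product appears as $\mathbf{R}^{\mathrm{inv}}\mathbf{R}^{\mathrm{fwd}}$. For square matrices a one-sided inverse is two-sided, so the lemma also gives $\mathbf{R}^{\mathrm{inv}}\mathbf{R}^{\mathrm{fwd}}=\mathbf{I}_2$. Alternatively, one can rerun the lemma's telescoping argument: the chain $\mathbf{R}_X(-\gamma)\mathbf{R}_Z(-\beta)\mathbf{R}_Y(-\alpha)\mathbf{R}_Y(\alpha)\mathbf{R}_Z(\beta)\mathbf{R}_X(\gamma)$ collapses from the middle outward. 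Tensoring the $n$ qubitwise identities gives $\mathbf{I}_{2^n}$, which completes \eqref{eq:ip_cancel}.

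There is no real obstacle. The only step to state explicitly is the left-versus-right inverse issue just described, and the fact that the same $\bm\theta$ enters both blocks, which is what makes the identity hold for every $\bm\theta\in\bbR^{3n}$ and not only at initialization.
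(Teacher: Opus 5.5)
Your proposal is correct and follows the paper's proof essentially step for step. You regroup so the two CNOT rings are adjacent, cancel them via \eqref{eq:cnot_cancel}, reduce $\mathcal{R}^{\mathrm{inv}}\mathcal{R}^{\mathrm{fwd}}$ qubitwise with the mixed-product rule and \Cref{lem:single_qubit_inv}, and turn the lemma's right inverse into a left inverse exactly as the paper does. Two of your additions are small improvements: the first bullet (unitarity of $\mathcal{R}^{\mathrm{fwd}}$ together with $\mathcal{R}^{\mathrm{inv}}=[\mathcal{R}^{\mathrm{fwd}}]^\dagger$) is a quicker finish that bypasses the lemma, and your telescoping check that the reversed ring really inverts $\mathbf{U}_{\mathrm{ent}}$ spells out what the paper simply takes from \eqref{eq:cnot_cancel}.
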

\begin{proof}
Recall the forward block $\mathbf{B}(\bm\theta)
=\mathbf{U}_{\mathrm{ent}} \mathcal{R}^{\mathrm{fwd}}(\bm\theta)$
from~\eqref{eq:hea_block} and the reverse block $\mathbf{B}^{\mathrm{rev}}(\bm\theta)
=\mathcal{R}^{\mathrm{inv}}(\bm\theta) \mathbf{U}_{\mathrm{ent}}^{-1}$
from~\eqref{eq:reverse_block}. Composing them and grouping the two
central entangler factors,
\begin{equation}\label{eq:ip_cancel_step1}
  \mathbf{B}^{\mathrm{rev}}(\bm\theta) \mathbf{B}(\bm\theta)
  = \mathcal{R}^{\mathrm{inv}}(\bm\theta) 
    \bigl[\mathbf{U}_{\mathrm{ent}}^{-1} \mathbf{U}_{\mathrm{ent}}\bigr] 
    \mathcal{R}^{\mathrm{fwd}}(\bm\theta).
\end{equation}
By the CNOT cancellation~\eqref{eq:cnot_cancel}, the bracketed factor is
$\mathbf{U}_{\mathrm{ent}}^{-1}\mathbf{U}_{\mathrm{ent}}=\mathbf{I}_{2^n}$, so
\begin{equation}\label{eq:ip_cancel_step2}
  \mathbf{B}^{\mathrm{rev}}(\bm\theta) \mathbf{B}(\bm\theta)
  = \mathcal{R}^{\mathrm{inv}}(\bm\theta) 
    \mathcal{R}^{\mathrm{fwd}}(\bm\theta).
\end{equation}
It remains to show $\mathcal{R}^{\mathrm{inv}}(\bm\theta) 
\mathcal{R}^{\mathrm{fwd}}(\bm\theta)=\mathbf{I}_{2^n}$. Writing the layers in
the tensor-product form~\eqref{eq:rot_fwd_layer}, namely
$\mathcal{R}^{\mathrm{fwd}}(\bm\theta)=\bigotimes_{q=1}^n
\mathbf{R}^{\mathrm{fwd}}_q$ and
$\mathcal{R}^{\mathrm{inv}}(\bm\theta)=\bigotimes_{q=1}^n
\mathbf{R}^{\mathrm{inv}}_q$ (where $\mathbf{R}^{\mathrm{fwd}}_q$ and
$\mathbf{R}^{\mathrm{inv}}_q$ abbreviate the forward and inverse single-qubit
rotations on qubit~$q$), the tensor-product multiplication
rule~\eqref{eq:tensor_mult} gives
\begin{equation}\label{eq:ip_cancel_step3}
  \mathcal{R}^{\mathrm{inv}}(\bm\theta) 
  \mathcal{R}^{\mathrm{fwd}}(\bm\theta)
  = \Bigl(\bigotimes_{q=1}^n \mathbf{R}^{\mathrm{inv}}_q\Bigr)
    \Bigl(\bigotimes_{q=1}^n \mathbf{R}^{\mathrm{fwd}}_q\Bigr)
  = \bigotimes_{q=1}^n
    \bigl(\mathbf{R}^{\mathrm{inv}}_q \mathbf{R}^{\mathrm{fwd}}_q\bigr).
\end{equation}
\Cref{lem:single_qubit_inv} establishes
$\mathbf{R}^{\mathrm{fwd}}_q \mathbf{R}^{\mathrm{inv}}_q=\mathbf{I}_2$; taking the inverse of
both sides (equivalently, applying the lemma with the roles of the two
unitaries exchanged, valid since each $\mathbf{R}^{\mathrm{fwd}}_q\in\SU(2)$ is
invertible with inverse $\mathbf{R}^{\mathrm{inv}}_q$) yields
$\mathbf{R}^{\mathrm{inv}}_q \mathbf{R}^{\mathrm{fwd}}_q=\mathbf{I}_2$ for every $q$. Hence
the right-hand side of~\eqref{eq:ip_cancel_step3} equals
$\bigotimes_{q=1}^n\mathbf{I}_2=\mathbf{I}_{2^n}$, and combining
with~\eqref{eq:ip_cancel_step2} proves the chain~\eqref{eq:ip_cancel}.
\end{proof}
\begin{figure}[t]
\centering
\begin{quantikz}[row sep=0.2cm, column sep=0.22cm]
  \lstick{$q_1$}
    & \gate[4,nwires={2}]{\substack{\text{Existing}\\\text{depth-}D\\
      \text{circuit}}} & \qw
    & \gate{\mathbf{R}^{\mathrm{fwd}}}
    & \ctrl{1} & \qw      & \qw      & \targ{}
    & \targ{}  & \qw      & \qw      & \ctrl{1}
    & \gate{\mathbf{R}^{\mathrm{inv}}} & \qw \\
  \lstick{$q_2$}
    &          & \qw
    & \gate{\mathbf{R}^{\mathrm{fwd}}}
    & \targ{}  & \ctrl{1} & \qw      & \qw
    & \qw      & \qw      & \ctrl{1} & \targ{}
    & \gate{\mathbf{R}^{\mathrm{inv}}} & \qw \\
  \lstick{$q_3$}
    &          & \qw
    & \gate{\mathbf{R}^{\mathrm{fwd}}}
    & \qw      & \targ{}  & \ctrl{1} & \qw
    & \qw      & \ctrl{1} & \targ{}  & \qw
    & \gate{\mathbf{R}^{\mathrm{inv}}} & \qw \\
  \lstick{$q_4$}
    &          & \qw
    & \gate{\mathbf{R}^{\mathrm{fwd}}}
    & \qw      & \qw      & \targ{}  & \ctrl{-3}
    & \ctrl{-3}& \targ{}  & \qw      & \qw
    & \gate{\mathbf{R}^{\mathrm{inv}}} & \qw
\end{quantikz}
\caption{Identity-pair expansion from depth~$D$ to depth~$D{+}2$ for
$n=4$.}
\label{fig:identity_pair}
\end{figure}

\medskip\noindent\textbf{CNOT cancellation.}
Since the reverse CNOT ring immediately follows the forward CNOT ring,
these $2n$ CNOT gates cancel exactly and \emph{need not be physically
implemented}.  After cancellation, the two new blocks reduce to:
\[
  \mathcal{R}^{\mathrm{inv}}(\bm\theta^{(d+1)}) 
  \mathcal{R}^{\mathrm{fwd}}(\bm\theta^{(d)}),
\]
which is a product of two tensor-product rotation layers---purely local
gates with no entanglement.
\subsection{Effective single-entangler circuit (IP-PDT architecture)}
\label{sec:circuit}
After all CNOT cancellations, the IP-PDT circuit at ``depth''~$D$
(meaning $D$ rotation layers) contains the CNOT ring only at layer~0.
All subsequent layers are purely local rotations:
\begin{equation}\label{eq:UD}
  \mathbf{U}_D^{\mathrm{SE}}(\bm\theta)
  = \left(\prod_{k=D-1}^{1}\mathbf{W}_k(\bm\theta^{(k)})\right)
    \cdot \mathbf{U}_{\mathrm{ent}}
    \cdot\mathcal{R}^{\mathrm{fwd}}(\bm\theta^{(0)}),
\end{equation}
where
\begin{equation}\label{eq:Wk}
  \mathbf{W}_k(\bm\theta^{(k)})=
  \begin{cases}
    \mathcal{R}^{\mathrm{inv}}(\bm\theta^{(k)}), & k\text{ odd},\\
    \mathcal{R}^{\mathrm{fwd}}(\bm\theta^{(k)}), & k\ge 2\text{ even}.
  \end{cases}
\end{equation}
We call~\eqref{eq:UD} the \emph{single-entangler} (SE) circuit.
\begin{figure}[t]
\centering
\begin{quantikz}[row sep=0.25cm, column sep=0.22cm]
  \lstick{$q_1$}
    & \gate{\mathbf{R}^{\mathrm{fwd}}_1} & \ctrl{1} & \qw      & \qw
    & \targ{}
    & \gate{\mathbf{R}^{\mathrm{inv}}_1} & \gate{\mathbf{R}^{\mathrm{fwd}}_1}
    & \gate{\mathbf{R}^{\mathrm{inv}}_1} & \gate{\mathbf{R}^{\mathrm{fwd}}_1} & \qw \\
  \lstick{$q_2$}
    & \gate{\mathbf{R}^{\mathrm{fwd}}_2} & \targ{}  & \ctrl{1} & \qw
    & \qw
    & \gate{\mathbf{R}^{\mathrm{inv}}_2} & \gate{\mathbf{R}^{\mathrm{fwd}}_2}
    & \gate{\mathbf{R}^{\mathrm{inv}}_2} & \gate{\mathbf{R}^{\mathrm{fwd}}_2} & \qw \\
  \lstick{$q_3$}
    & \gate{\mathbf{R}^{\mathrm{fwd}}_3} & \qw      & \targ{}  & \ctrl{1}
    & \qw
    & \gate{\mathbf{R}^{\mathrm{inv}}_3} & \gate{\mathbf{R}^{\mathrm{fwd}}_3}
    & \gate{\mathbf{R}^{\mathrm{inv}}_3} & \gate{\mathbf{R}^{\mathrm{fwd}}_3} & \qw \\
  \lstick{$q_4$}
    & \gate{\mathbf{R}^{\mathrm{fwd}}_4} & \qw      & \qw      & \targ{}
    & \ctrl{-3}
    & \gate{\mathbf{R}^{\mathrm{inv}}_4} & \gate{\mathbf{R}^{\mathrm{fwd}}_4}
    & \gate{\mathbf{R}^{\mathrm{inv}}_4} & \gate{\mathbf{R}^{\mathrm{fwd}}_4} & \qw
\end{quantikz}
\caption{Effective single-entangler (SE) circuit at depth $D=5$ for
$n=4$ qubits, \emph{after} all CNOT cancellations.}
\label{fig:effective_circuit}
\end{figure}
\begin{definition}[Identity pairing]\label{def:id_pair}
  Layers $k$ (odd) and $k+1$ (even) form an \emph{identity pair} if
  $\bm\theta^{(k)}=\bm\theta^{(k+1)}$, so that
  \begin{equation}\label{eq:id_pair}
    \mathbf{W}_{k+1}(\bm\theta^{(k+1)}) \mathbf{W}_k(\bm\theta^{(k)})
    =\mathcal{R}^{\mathrm{fwd}}(\bm\theta^{(k)}) 
     \mathcal{R}^{\mathrm{inv}}(\bm\theta^{(k)})
    =\mathbf{I}_{2^n}.
  \end{equation}
\end{definition}
\begin{proposition}[Rotation-layer inverse]\label{prop:block_inv}
  For any $\bm\theta^{(d)}\in\bbR^{3n}$,
  $\mathcal{R}^{\mathrm{fwd}}(\bm\theta^{(d)}) 
  \mathcal{R}^{\mathrm{inv}}(\bm\theta^{(d)})=\mathbf{I}_{2^n}$.
\end{proposition}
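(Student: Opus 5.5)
The plan is to reduce the statement to the single-qubit cancellation of \Cref{lem:single_qubit_inv}, using the mixed-product rule~\eqref{eq:tensor_mult}, exactly as in the last step of the proof of \Cref{prop:ip_cancel} but with the order of the two layers reversed.

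First, I write $\mathcal{R}^{\mathrm{fwd}}(\bm\theta^{(d)})=\bigotimes_{q=1}^n \mathbf{R}^{\mathrm{fwd}}_q$ as in~\eqref{eq:rot_fwd_layer}, with $\mathbf{R}^{\mathrm{fwd}}_q$ abbreviating $\mathbf{R}^{\mathrm{fwd}}(\alpha_q^{(d)},\beta_q^{(d)},\gamma_q^{(d)})$. I then identify the inverse layer in tensor form. By definition $\mathcal{R}^{\mathrm{inv}}(\bm\theta^{(d)})=[\mathcal{R}^{\mathrm{fwd}}(\bm\theta^{(d)})]^\dagger$. Because the adjoint distributes over Kronecker products factorwise, this equals $\bigotimes_{q=1}^n(\mathbf{R}^{\mathrm{fwd}}_q)^\dagger$.

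Each factor $(\mathbf{R}^{\mathrm{fwd}}_q)^\dagger=\mathbf{R}_X(\gamma_q)^\dagger\mathbf{R}_Z(\beta_q)^\dagger\mathbf{R}_Y(\alpha_q)^\dagger$. Using $\mathbf{R}_\sigma(\theta)^\dagger=\mathbf{R}_\sigma(-\theta)$, this becomes $\mathbf{R}^{\mathrm{inv}}_q$ as defined in~\eqref{eq:inv_rot}.

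Next, I apply~\eqref{eq:tensor_mult}:
\[
\mathcal{R}^{\mathrm{fwd}}(\bm\theta^{(d)})\,\mathcal{R}^{\mathrm{inv}}(\bm\theta^{(d)})
=\bigotimes_{q=1}^n\bigl(\mathbf{R}^{\mathrm{fwd}}_q\mathbf{R}^{\mathrm{inv}}_q\bigr).
\]
\Cref{lem:single_qubit_inv} gives $\mathbf{R}^{\mathrm{fwd}}_q\mathbf{R}^{\mathrm{inv}}_q=\mathbf{I}_2$ for every $q$, so the product is $\bigotimes_q\mathbf{I}_2=\mathbf{I}_{2^n}$.

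There is also a shorter route. Since $\mathcal{R}^{\mathrm{fwd}}$ is a Kronecker product of unitaries, it is unitary, so $UU^\dagger=\mathbf{I}$ holds directly. I would mention this as a remark, but keep the explicit factorwise argument because it matches how~\eqref{eq:id_pair} is used later.

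The only step requiring care is the identification of the layer-level adjoint with the tensor product of the per-qubit $\mathbf{R}^{\mathrm{inv}}_q$. This depends on checking that reversing the gate order in the adjoint matches the ordering in~\eqref{eq:inv_rot}. Everything else is immediate from earlier results.
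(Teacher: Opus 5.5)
Your proposal is correct and follows essentially the same argument as the paper: factor both layers as Kronecker products, merge them with the mixed-product rule~\eqref{eq:tensor_mult}, and cancel each factor using \Cref{lem:single_qubit_inv}. Your explicit check that the layer adjoint equals $\bigotimes_q \mathbf{R}^{\mathrm{inv}}_q$ (with the reversed gate order matching~\eqref{eq:inv_rot}) spells out a step the paper only asserts here and verifies later in~\eqref{eq:inv_is_adjoint}.
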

\begin{proof}
By the definition~\eqref{eq:rot_fwd_layer} of the forward rotation
layer and of $\mathcal{R}^{\mathrm{inv}}$ as its adjoint, both layers are
tensor products over the $n$ qubits:
$\mathcal{R}^{\mathrm{fwd}}(\bm\theta^{(d)})=\bigotimes_{q=1}^n
\mathbf{R}^{\mathrm{fwd}}_q$ and
$\mathcal{R}^{\mathrm{inv}}(\bm\theta^{(d)})=\bigotimes_{q=1}^n
\mathbf{R}^{\mathrm{inv}}_q$, where $\mathbf{R}^{\mathrm{fwd}}_q$ and
$\mathbf{R}^{\mathrm{inv}}_q$ denote the forward and inverse single-qubit
rotations~\eqref{eq:fwd_rot}--\eqref{eq:inv_rot} acting on qubit~$q$
with that qubit's three angles from $\bm\theta^{(d)}$. The
tensor-product multiplication rule~\eqref{eq:tensor_mult} then gives
\begin{equation}\label{eq:block_inv_chain}
  \mathcal{R}^{\mathrm{fwd}}(\bm\theta^{(d)}) 
  \mathcal{R}^{\mathrm{inv}}(\bm\theta^{(d)})
  = \Bigl(\bigotimes_{q=1}^{n}\mathbf{R}^{\mathrm{fwd}}_q\Bigr)
    \Bigl(\bigotimes_{q=1}^{n}\mathbf{R}^{\mathrm{inv}}_q\Bigr)
  = \bigotimes_{q=1}^{n}
    \bigl(\mathbf{R}^{\mathrm{fwd}}_q \mathbf{R}^{\mathrm{inv}}_q\bigr)
  = \bigotimes_{q=1}^{n}\mathbf{I}_2
  = \mathbf{I}_{2^n},
\end{equation}
where the third equality applies \Cref{lem:single_qubit_inv} to each
factor (each $\mathbf{R}^{\mathrm{fwd}}_q \mathbf{R}^{\mathrm{inv}}_q=\mathbf{I}_2$,
independently of the qubit's angles), and the last uses
$\mathbf{I}_2^{\otimes n}=\mathbf{I}_{2^n}$.
\end{proof}

\subsection{Identity-paired initialization}\label{sec:identity_init}
When expanding the SE circuit from depth $D$ to $D+2$, we append layers
$D$ and $D+1$ initialized as an identity pair by copying a reference
parameter vector $\bm\theta_{\mathrm{ref}}$.  This guarantees:
\begin{equation}\label{eq:energy_preserve_exact}
  E_{D+2}\bigl(\hat{\bm\theta}^{(0:D-1)},
               \bm\theta_{\mathrm{ref}},
               \bm\theta_{\mathrm{ref}}\bigr)
  = E_D\bigl(\hat{\bm\theta}^{(0:D-1)}\bigr),
\end{equation}
where $\hat{\bm\theta}^{(0:D-1)}$ are the optimized parameters from
the previous stage.  We call this \emph{zero-shock expansion}.
\begin{definition}[Symmetry-breaking jitter]\label{def:jitter}
  To break the symmetry at the identity, we perturb the new
  parameters:
  \begin{equation}\label{eq:jitter}
    \bm\theta^{(D)}=\bm\theta_{\mathrm{ref}}+\bm\xi^{(1)},
    \quad
    \bm\theta^{(D+1)}=\bm\theta_{\mathrm{ref}}+\bm\xi^{(2)},
    \quad
    \bm\xi^{(i)}\sim\mathcal{N}(\bm 0,\sigma^2 \mathbf{I}_{3n}),
  \end{equation}
  with $\sigma\ll 1$.
\end{definition}
\subsection{Training protocols}\label{sec:protocols}
We compare three training protocols, and the comparison is set up to be
deliberately fair: in practice, in our experiments, all three end with the same number of trainable
parameters ($3nD_\star$ rotation angles, with $D_\star=5$) and all three
spend the same total optimization budget ($3T$ gradient-descent steps).
Our main study uses $n=6$ qubits; the scaling study of
\Cref{sec:exp_n16} repeats the comparison at $n=16$.  Because the
parameter count and the step budget are held fixed, any difference in
the final energy must come from the two ingredients we actually vary:
the \emph{circuit architecture}---how many entangling (CNOT) layers the
circuit physically contains---and the \emph{growth schedule}---whether
the circuit is trained at full depth from the outset or grown one stage
at a time, and how each newly appended layer is initialized.  To rule
out luck in the starting point, all three protocols seed their first
block from the \emph{same} random draw
$\bm\theta_0^{(0)}\sim\mathrm{Unif}([0,2\pi]^{3n})$.
\Cref{tab:protocol_summary} places the three side by side; we describe
each in turn.

\medskip\noindent\textbf{(a) Baseline: full-depth HEA, trained from
scratch.}
This is the conventional recipe: build the full depth-$D_\star$
hardware-efficient circuit~\eqref{eq:hea_circuit} up front and optimize
all of its angles simultaneously.  Every block starts as a copy of the
shared first block $\bm\theta_0^{(0)}$, and all $3nD_\star$ parameters
descend together,
\begin{equation}\label{eq:baseline_opt}
  \bm\theta^{(t+1)}_{\mathrm{base}}
  =\bm\theta^{(t)}_{\mathrm{base}}
   -\eta \nabla E_{D_\star}^{\mathrm{HEA}}
    \bigl(\bm\theta^{(t)}_{\mathrm{base}}\bigr),
  \qquad t=0,\ldots,3T{-}1,
\end{equation}
for the entire budget.  This circuit is the most expressive of the
three. 

\medskip\noindent\textbf{(b) Na\"ive progressive depth training.}
Na\"ive PDT keeps the hardware-efficient architecture but grows it in
stages instead of all at once.  
E.g., assuming the $1\to3\to5$ schedule, it trains a single block for $T$ steps,
appends two freshly randomized HEA blocks and trains for another $T$
steps at depth~$3$, then appends two more and trains a final $T$ steps
at depth~$5$. 
The motivation is sound---warm
start each deeper circuit from the optimum of the shallower one---but
the hardware-efficient architecture ``sabotages'' it.  Each appended block
carries its own CNOT ring that \emph{cannot} be switched off, so every
expansion injects entanglement; in practice, we observe that the
carefully trained state is scrambled and the energy jumps upward.  This
is the \emph{initialization shock} we document in \Cref{sec:exp_shock}.

\medskip\noindent\textbf{(c) Identity-paired PDT (this work).}
IP-PDT grows depth on the single-entangler circuit~\eqref{eq:UD}, where
only the first block ever contains a CNOT ring and every later layer is
a purely local rotation.  It uses the same schedule as
Na\"ive PDT, but the \emph{way} it appends layers is what eliminates the
shock.  At each expansion it adds the two new rotation layers as an
\emph{identity pair}---initialized so that they undo one another---so
that, \emph{before any further training}, the deeper circuit prepares
exactly the state the shallower circuit had already reached.  Concretely,
expanding sets:
\begin{equation}\label{eq:ippdt_expand1}
  \bm\theta_3^{(0)}
  =\Bigl(\hat{\bm\theta}_1,
    \underbrace{\hat{\bm\theta}_1+\bm\epsilon}_{\text{inverse layer}},
    \underbrace{\hat{\bm\theta}_1+\bm\epsilon'}_{\text{forward layer}}\Bigr)\in\bbR^{9n},
\end{equation}
where the small jitter
$\bm\epsilon,\bm\epsilon'\sim\mathcal{N}(0,\sigma^2 \mathbf{I})$, $\sigma\ll 1$,
breaks the exact symmetry just enough to hand the optimizer a usable
descent direction (\Cref{def:jitter}).  At zero jitter the appended pair
is \emph{exactly} the identity, so the energy carries across the
expansion unchanged,
\begin{equation}\label{eq:energy_preserve}
  E_3\bigl(\bm\theta_3^{(0)}\bigr)\big|_{\bm\epsilon=\bm\epsilon'=\bm 0}
  =E_1\bigl(\hat{\bm\theta}_1\bigr),
\end{equation}
and \emph{no new CNOT gate is introduced}.  IP-PDT thus climbs to
the target depth through a sequence of energy-preserving expansions
while paying for a single CNOT ring throughout.  \Cref{alg:ippdt} states
the procedure for an arbitrary stage schedule
$\mathcal{D}=\{D_1,\dots,D_M\}$.
\begin{table}[t]
\centering\small
\caption{Comparison of the three training protocols at final depth.
$D_\star=5$, $n=6$.}
\label{tab:protocol_summary}
\begin{tabular}{lccc}
\toprule
 & \textbf{Baseline} & \textbf{Na\"ive PDT} & \textbf{IP-PDT} \\
\midrule
Circuit architecture & HEA & HEA & Single-entangler \\
Growth schedule   & None & $1\to 3\to 5$ & $1\to 3\to 5$ \\
New-block init    & Random & Random (shock) & Identity pair \\
CNOT rings at final depth & $D_\star=5$ & $D_\star=5$ & $1$ \\
Rotation params   & $3nD_\star=90$ & $3nD_\star=90$ & $3nD_\star=90$ \\
Energy at expansion & N/A & Unpredictable & $E_D^\star+O(\sigma)$ \\
Total grad.\ steps & $3T$ & $3T$ & $3T$ \\
\bottomrule
\end{tabular}
\end{table}
\begin{algorithm}[t]
\caption{Identity-Paired Progressive Depth Training
(IP-PDT)}\label{alg:ippdt}
\begin{algorithmic}[1]
\Require Target depth $D_\star$; stage depths
  $\mathcal{D}=\{D_1,D_2,\ldots,D_M\}$ with $D_M=D_\star$;
  budget $T$ per stage; jitter scale $\sigma$; learning rate $\eta$.
\State Sample $\bm\theta_{\mathrm{ref}}\sim
  \mathrm{Unif}([0,2\pi]^{3n})$.
\State Initialize $\bm\theta\gets(\bm\theta_{\mathrm{ref}})$
  at depth $D_1$.
\For{$m=1$ to $M$}
  \For{$t=1$ to $T$}
    \State $\bm\theta\gets\bm\theta
      -\eta \nabla_{\bm\theta}E_{D_m}(\bm\theta)$.
  \EndFor
  \If{$m<M$}
    \State \textbf{Expand:} Append two rotation layers initialized as
      identity pair with jitter (\Cref{def:jitter}).
  \EndIf
\EndFor
\State\Return $\bm\theta$.
\end{algorithmic}
\end{algorithm}
\subsection{Resource estimation}\label{sec:resource}
Comparing the three protocols purely by ``number of optimization steps''
is misleading, because one step does different amounts of work
depending on which circuit is being differentiated.  We therefore
account for \emph{gate-weighted} cost, which captures both the classical
backpropagation effort in our simulations and---more importantly for a
hardware deployment---the number of noisy two-qubit gates that must be
physically executed.

Each forward (or backward) pass through a depth-$d$ circuit applies one
rotation gate per angle and one CNOT per entangling edge.  The
hardware-efficient circuit places a CNOT ring in \emph{every} layer, so
at depth $d$ it uses $3nd$ single-qubit rotations and $nd$ CNOTs---$4nd$
gates in all.  The single-entangler circuit uses the same $3nd$
rotations, but because every CNOT ring after the first cancels by
construction it carries only $n$ CNOTs \emph{regardless of depth}, for
$n(3d+1)$ gates.  \Cref{tab:gate_counts} collects these counts.  The key
asymmetry is already visible: as the circuit deepens, the
hardware-efficient cost grows linearly in $d$ through \emph{both} terms,
whereas the single-entangler entangling cost stays constant at $n$.
\begin{table}[h]
\centering\small
\caption{Gate counts per circuit evaluation at depth~$d$.}
\label{tab:gate_counts}
\begin{tabular}{lccc}
\toprule
& \textbf{Baseline/Na\"ive} (HEA) & \textbf{IP-PDT} (SE) \\
\midrule
Single-qubit rotation gates & $3nd$ & $3nd$ \\
CNOT gates                  & $nd$  & $n$   \\
Total gates                 & $4nd$ & $n(3d+1)$ \\
\bottomrule
\end{tabular}
\end{table}

\medskip\noindent\textbf{Cumulative training cost.}
The protocols also differ in \emph{when} they spend their steps.  The
baseline runs all $3T$ steps on the full depth circuit; the
progressive methods spend their first $T$ steps on a cheap depth-$1$
circuit and only reach full depth in the final stage.  Summing the
per-step gate cost over the three stages (here, depths $\{1,3,5\}$, $T$ steps
each, based on our experiments) gives the totals:
\begin{align}
  C_{\mathrm{total}}^{\mathrm{base}}
  &=3T\times O(4n\times 5)=60  nT, \label{eq:cost_base}\\
  C_{\mathrm{total}}^{\mathrm{naive}}
  &=T\times O(4n)+T\times O(12n)+T\times O(20n)=36  nT, \label{eq:cost_naive}\\
  C_{\mathrm{total}}^{\mathrm{IP}}
  &=T\times O(4n)+T\times O(10n)+T\times O(16n)=30  nT. \label{eq:cost_ip}
\end{align}
The gap widens with the
target depth $D_\star$: the baseline pays for full depth on every step,
whereas IP-PDT's entangling cost does not grow with depth at all.  On
hardware, where two-qubit gates dominate both the error budget and the
wall-clock, the fivefold reduction in CNOT count (\Cref{tab:gate_counts})
is the more consequential of the two savings.
\section{Theoretical Analysis}\label{sec:theory}
Our experiments will show that progressive depth training keeps helping
even after the circuit can no longer represent any new states.  Our analysis answers four questions in sequence:

\emph{(i) What can the circuit represent, and when does that stop
growing?}  We first characterize the reachable set of the
single-entangler ansatz and show that it \emph{saturates} right after
the first post-entangler layer (\Cref{thm:saturation}): beyond depth
two, additional layers add parameters but \emph{no new reachable
states}.  This is exactly what turns the paper's central question---why
does training keep improving?---into a question about \emph{optimization}
rather than expressibility.

\emph{(ii) Why doesn't growing the circuit destroy the solution we
already found?}  Saturation alone does not guarantee that a good
depth-$D$ solution survives the jump to depth $D{+}2$.  We recast each
expansion as a \emph{homotopy} that smoothly switches the new layers on,
and use the implicit function theorem to track the minimizer as it
deforms (\Cref{thm:continuation}): a nondegenerate minimum moves by only
$O(\sigma)$ under the jittered identity-pair expansion.

\emph{(iii) Can we guarantee that training never moves backwards?}  The
continuation argument is local and presumes we actually find the
continued minimum.  We complement it with an \emph{unconditional}
guarantee---under the acceptance rule the energy produced at successive
stages is monotonically non-increasing (\Cref{thm:monotone}), with no
assumption on the landscape---and we show that gradient descent on the
expanded circuit contracts at a geometric rate, inherited up to
$O(\sigma)$ from the previous stage (\Cref{thm:basin}).

\emph{(iv) Does low energy actually mean we found the ground state?}
Finally we connect the quantity we optimize (energy) to the quantity we
ultimately care about (ground-state fidelity) through a spectral-gap
sandwich (\Cref{thm:gap_fidelity}), and we are explicit about when this
certificate is informative and when the expressibility ceiling makes it
vacuous.

Together the four results tell one story: \emph{the single-entangler
construction freezes expressibility early~(i) so that added depth can be
spent safely~(ii) and monotonically~(iii) on optimization, with a
certificate that ties the optimized energy back to fidelity~(iv).}  Complete proofs are deferred
to \Cref{app:proofs,app:theory_extended}.
\begin{lemma}[State factorization]\label{lem:factor}
  For any depth $D\ge 1$, every state
  $|\boldsymbol{\psi}\rangle\in\cM_D$ can be written as
  \begin{equation}\label{eq:state_form}
    |\boldsymbol{\psi}\rangle
    = \Bigl(\bigotimes_{q=1}^n \mathbf{V}_q\Bigr) 
      \mathbf{U}_{\mathrm{ent}} 
      \Bigl(\bigotimes_{q=1}^n r_q\Bigr)|0^n\rangle,
  \end{equation}
  where $r_q\in\SU(2)$ is the pre-entangler rotation on qubit~$q$, and
  $\mathbf{V}_q = \mathbf{W}_{D{-}1,q}\cdots \mathbf{W}_{1,q}\in\SU(2)$ is the composition of all
  post-entangler rotation layers on qubit~$q$.
\end{lemma}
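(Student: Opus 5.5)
The plan is to unfold the definition of the single-entangler circuit~\eqref{eq:UD} and push all post-entangler layers through the tensor-product structure. We take $\cM_D=\{\mathbf{U}_D^{\mathrm{SE}}(\bm\theta)|0^n\rangle:\bm\theta\in\bbR^{3nD}\}$. Any $|\boldsymbol{\psi}\rangle\in\cM_D$ then equals $\bigl(\prod_{k=D-1}^{1}\mathbf{W}_k(\bm\theta^{(k)})\bigr)\mathbf{U}_{\mathrm{ent}}\mathcal{R}^{\mathrm{fwd}}(\bm\theta^{(0)})|0^n\rangle$ for some $\bm\theta$. The argument has two steps.

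\emph{Step 1: the pre-entangler factor.} By~\eqref{eq:rot_fwd_layer}, $\mathcal{R}^{\mathrm{fwd}}(\bm\theta^{(0)})=\bigotimes_q \mathbf{R}^{\mathrm{fwd}}(\alpha_q^{(0)},\beta_q^{(0)},\gamma_q^{(0)})$. Set $r_q:=\mathbf{R}^{\mathrm{fwd}}(\alpha_q^{(0)},\beta_q^{(0)},\gamma_q^{(0)})$. Each factor $\mathbf{R}_\sigma(\theta)=\exp(-i\tfrac{\theta}{2}\sigma)$ has determinant $\exp(-i\tfrac{\theta}{2}\operatorname{tr}\sigma)=1$ because the Paulis are traceless. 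It is also unitary, so $r_q\in\SU(2)$, since $\SU(2)$ is a group.

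\emph{Step 2: the post-entangler factor.} Each $\mathbf{W}_k$ is either $\mathcal{R}^{\mathrm{fwd}}(\bm\theta^{(k)})$ or $\mathcal{R}^{\mathrm{inv}}(\bm\theta^{(k)})=[\mathcal{R}^{\mathrm{fwd}}(\bm\theta^{(k)})]^\dagger$, by~\eqref{eq:Wk}. In both cases it is a tensor product $\bigotimes_q \mathbf{W}_{k,q}$. Here $\mathbf{W}_{k,q}$ is $\mathbf{R}^{\mathrm{fwd}}_q$ or $\mathbf{R}^{\mathrm{inv}}_q=(\mathbf{R}^{\mathrm{fwd}}_q)^\dagger$, using $(\bigotimes A_q)^\dagger=\bigotimes A_q^\dagger$. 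Each $\mathbf{W}_{k,q}$ lies in $\SU(2)$ by Step 1 and closure of $\SU(2)$ under adjoints. Next, apply the mixed-product rule~\eqref{eq:tensor_mult} inductively on the number of factors $D-1$. This gives
\[
\prod_{k=D-1}^{1}\bigotimes_{q=1}^n \mathbf{W}_{k,q}=\bigotimes_{q=1}^n\bigl(\mathbf{W}_{D-1,q}\cdots\mathbf{W}_{1,q}\bigr)=\bigotimes_{q=1}^n\mathbf{V}_q,
\]
with $\mathbf{V}_q\in\SU(2)$ by closure under multiplication. Substituting into the expression for $|\boldsymbol{\psi}\rangle$ yields~\eqref{eq:state_form}.

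\emph{Edge case and main obstacle.} For $D=1$ the product over $k$ is empty. In that case we take $\mathbf{V}_q=\mathbf{I}_2\in\SU(2)$, and the formula still holds. The only point needing care is the bookkeeping of ordering: the time-ordered descending product must match the per-qubit product $\mathbf{W}_{D-1,q}\cdots\mathbf{W}_{1,q}$. The induction on~\eqref{eq:tensor_mult} preserves exactly this left-to-right order, because tensor factors on different qubits never need to be commuted past each other. Everything else is routine.
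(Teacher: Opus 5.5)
Your proposal is correct and follows essentially the same route as the paper's proof. You split $\mathbf{U}_D^{\mathrm{SE}}$ into the pre-entangler layer, the entangler, and the post-entangler product (reading the empty product as the identity for $D=1$), show every single-qubit factor lies in $\SU(2)$ via the traceless-Pauli determinant computation, and collapse the layers qubit-by-qubit by induction on the mixed-product rule~\eqref{eq:tensor_mult}. The only cosmetic difference is that you obtain $\mathbf{R}^{\mathrm{inv}}_q\in\SU(2)$ from closure under adjoints rather than by repeating the determinant computation.
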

\begin{proof}
Fix $D\ge 1$ and a parameter vector $\bm\theta$, and let
$|\boldsymbol{\psi}\rangle=\mathbf{U}_D^{\mathrm{SE}}(\bm\theta)|0^n\rangle\in\cM_D$
be the corresponding reachable state. The SE circuit
definition~\eqref{eq:UD} factors as
\begin{equation}\label{eq:factor_split}
  \mathbf{U}_D^{\mathrm{SE}}(\bm\theta)
  = \underbrace{\Bigl(\prod_{k=D-1}^{1}\mathbf{W}_k(\bm\theta^{(k)})\Bigr)}_{
      \text{post-entangler}}
    \cdot \mathbf{U}_{\mathrm{ent}}\cdot
    \underbrace{\mathcal{R}^{\mathrm{fwd}}(\bm\theta^{(0)})}_{
      \text{pre-entangler}},
\end{equation}
where the empty product for $D=1$ is read as $\mathbf{I}_{2^n}$.

\emph{Pre-entangler factor.} By~\eqref{eq:rot_fwd_layer},
$\mathcal{R}^{\mathrm{fwd}}(\bm\theta^{(0)})=\bigotimes_{q=1}^n r_q$ with
$r_q=\mathbf{R}^{\mathrm{fwd}}(\alpha_q^{(0)},\beta_q^{(0)},\gamma_q^{(0)})$.
Each $r_q$ is a product of three Pauli rotations~\eqref{eq:pauli_rot};
every $\mathbf{R}_\sigma(\theta)=\exp(-i\tfrac{\theta}{2}\sigma)$ is unitary
with unit determinant (since $\sigma$ is traceless,
$\det\mathbf{R}_\sigma(\theta)=e^{-i\frac{\theta}{2}\operatorname{tr}\sigma}=1$),
so $r_q\in\SU(2)$ as $\SU(2)$ is closed under matrix multiplication.

\emph{Post-entangler factor.} For each $k\in\{1,\dots,D-1\}$, the
definition~\eqref{eq:Wk} sets $\mathbf{W}_k$ equal to either
$\mathcal{R}^{\mathrm{inv}}(\bm\theta^{(k)})$ or
$\mathcal{R}^{\mathrm{fwd}}(\bm\theta^{(k)})$; in both cases
\eqref{eq:rot_fwd_layer} (and its adjoint) expresses it as a tensor
product $\mathbf{W}_k=\bigotimes_{q=1}^n \mathbf{W}_{k,q}$ with each single-qubit
factor $\mathbf{W}_{k,q}\in\SU(2)$ by the same determinant computation. We
collapse the ordered product over $k$ qubit-by-qubit by iterating the
tensor-product multiplication rule~\eqref{eq:tensor_mult}:
\begin{equation}\label{eq:factor_post}
  \prod_{k=D-1}^{1}\mathbf{W}_k(\bm\theta^{(k)})
  = \prod_{k=D-1}^{1}\Bigl(\bigotimes_{q=1}^n \mathbf{W}_{k,q}\Bigr)
  = \bigotimes_{q=1}^n
    \underbrace{\bigl(\mathbf{W}_{D-1,q}\cdots\mathbf{W}_{1,q}\bigr)}_{=:~\mathbf{V}_q}.
\end{equation}
Here the second equality follows because~\eqref{eq:tensor_mult} merges
two qubit-aligned tensor products into one, and a finite induction on
the number of factors extends this from two layers to all $D-1$ layers
(the order on each wire is preserved, matching
$\mathbf{V}_q=\mathbf{W}_{D-1,q}\cdots\mathbf{W}_{1,q}$ in the statement). Each $\mathbf{V}_q$ is a
product of $\SU(2)$ matrices, hence $\mathbf{V}_q\in\SU(2)$ by closure.

Substituting the two tensor-product factors into~\eqref{eq:factor_split}
gives
$\mathbf{U}_D^{\mathrm{SE}}(\bm\theta)
=\bigl(\bigotimes_{q=1}^n \mathbf{V}_q\bigr) \mathbf{U}_{\mathrm{ent}} 
\bigl(\bigotimes_{q=1}^n r_q\bigr)$, and applying it to $|0^n\rangle$
yields the claimed form~\eqref{eq:state_form} with $r_q,\mathbf{V}_q\in\SU(2)$.
\end{proof}
\begin{theorem}[Reachable Set Saturation]\label{thm:saturation}
  For the single-entangler circuit family~\eqref{eq:UD}:
  \begin{enumerate}[label=(\alph*)]
    \item $\cM_1\subseteq\cM_2$.
    \item For all $D\ge 2$, $\cM_D=\cM_2$.
    \item If $\mathbf{U}_{\mathrm{ent}}$ does not normalize the local-unitary
      group $\mathcal{L}=\{\bigotimes_q\mathbf{V}_q:\mathbf{V}_q\in\SU(2)\}$
      (as the CNOT ring does not), then $\cM_1\subsetneq\cM_2$.
  \end{enumerate}
\end{theorem}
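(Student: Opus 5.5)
Everything rests on \Cref{lem:factor}, which writes every state of $\cM_D$ as $(\bigotimes_q \mathbf{V}_q)\,\mathbf{U}_{\mathrm{ent}}(\bigotimes_q r_q)|0^n\rangle$. Define the candidate saturated set
\[
\mathcal{S}=\Bigl\{\bigl(\textstyle\bigotimes_q \mathbf{V}_q\bigr)\mathbf{U}_{\mathrm{ent}}\bigl(\bigotimes_q r_q\bigr)|0^n\rangle : \mathbf{V}_q,r_q\in\SU(2)\Bigr\}.
\]
The lemma already gives $\cM_D\subseteq\mathcal{S}$ for every $D\ge1$. For the reverse inclusion when $D\ge2$, we use the surjectivity of the Euler parameterization $\mathbf{R}^{\mathrm{fwd}}:\bbR^3\to\SU(2)$ stated in the preliminaries. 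Given targets $\mathbf{V}_q,r_q$, choose $\bm\theta^{(0)}$ so that each $r_q$ is realized. For the post-entangler layers, choose layer $1$ (an inverse layer) so that $\mathbf{W}_{1,q}=\mathbf{V}_q$. This is possible because $\mathbf{R}^{\mathrm{inv}}=(\mathbf{R}^{\mathrm{fwd}})^\dagger$ and $\dagger$ is a bijection of $\SU(2)$, so $\mathbf{R}^{\mathrm{inv}}$ is surjective as well. Set all remaining layers $k\ge2$ to zero angles, which makes them identities. Alternatively, when $D$ is odd, pair them as identity pairs via \Cref{def:id_pair}. Hence $\mathcal{S}\subseteq\cM_D$, and this yields (b), $\cM_D=\mathcal{S}=\cM_2$.

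For (a), take $|\boldsymbol\psi\rangle=\mathbf{U}_{\mathrm{ent}}\mathcal{R}^{\mathrm{fwd}}(\bm\theta^{(0)})|0^n\rangle\in\cM_1$. Setting $\bm\theta^{(1)}=\bm 0$ gives $\mathbf{W}_1=\mathbf{I}$, so the same state lies in $\cM_2$. One caveat must be handled: the Euler parameterization covers $\SU(2)$ only up to a global phase, i.e., possibly only a sign $\pm\mathbf{I}$. I would either check directly that $\mathbf{R}_Y\mathbf{R}_Z\mathbf{R}_X$ with angles in $\bbR$ (period $4\pi$) reaches all of $\SU(2)$, or define $\cM_D$ as a set of states up to phase. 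Either choice makes the argument airtight.

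Part (c) is the main obstacle. Note first that the orbit of $|0^n\rangle$ under $\mathcal{L}$ is the set of all product states $P$. Thus $\cM_1=\mathbf{U}_{\mathrm{ent}}P$ and $\cM_2=\mathcal{L}\,\mathbf{U}_{\mathrm{ent}}P$. The abstract hypothesis that $\mathbf{U}_{\mathrm{ent}}$ does not normalize $\mathcal{L}$ does not immediately give strictness at the level of states, so I would argue concretely for the CNOT ring and exhibit a state of $\cM_2\setminus\cM_1$.

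The plan is to find a product state $|p\rangle$ and a local $\mathbf{L}$ with $\mathbf{L}\mathbf{U}_{\mathrm{ent}}|p\rangle\notin\mathbf{U}_{\mathrm{ent}}P$, which is equivalent to $\mathbf{U}_{\mathrm{ent}}^{-1}\mathbf{L}\mathbf{U}_{\mathrm{ent}}|p\rangle$ being entangled. I would try $|p\rangle=|0^n\rangle$. Since $\mathbf{U}_{\mathrm{ent}}|0^n\rangle=|0^n\rangle$, it suffices to find $\mathbf{L}$ with $\mathbf{U}_{\mathrm{ent}}^{-1}\mathbf{L}|0^n\rangle$ entangled. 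Take $\mathbf{L}$ to be a Hadamard-like $\mathbf{R}_Y(\pi/2)$ on qubit $1$ only, which gives $(|0\rangle+|1\rangle)|0\cdots0\rangle/\sqrt2$. Then compute $\mathbf{U}_{\mathrm{ent}}^{-1}$ on the two basis strings: $|0^n\rangle$ stays fixed, and $|10\cdots0\rangle$ is mapped to some string $|b\rangle$. Whenever $b$ differs from $0^n$ in at least two positions, the superposition $(|0^n\rangle+|b\rangle)/\sqrt2$ is GHZ-like on those positions and hence entangled. The resulting state lies in $\cM_2$ but not in $\cM_1$. The remaining step is to track the reversed ring order explicitly to confirm the Hamming weight of $b$ is at least $2$ for $n\ge2$. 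If it is not, I would place the rotation on a different qubit, e.g.\ qubit $n$, where the chain of CNOTs propagates the flip.
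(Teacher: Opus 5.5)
Your arguments for (a) and (b) match the paper's. Zero angles turn the extra layers into identities, and \Cref{lem:factor}, together with surjectivity of $\mathbf{R}^{\mathrm{fwd}}$ and of $\mathbf{R}^{\mathrm{inv}}=(\mathbf{R}^{\mathrm{fwd}})^\dagger$ onto $\SU(2)$, gives $\cM_D=\cM_2$. The phase caveat you raise is settled in \Cref{lem:su2_cover} by the $4\pi$-periodicity of the rotation gates. It belongs to (b), not (a): in (a), $\bm\theta^{(1)}=\bm 0$ reproduces the depth-$1$ unitary exactly.

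Your witness for (c) also goes through. With the paper's ordering, $\mathbf{U}_{\mathrm{ent}}^{-1}$ applies $\CNOT_{n\to1}$ first and $\CNOT_{1\to2}$ last. So it fixes $|0^n\rangle$ and sends $|10\cdots0\rangle$ to $|110\cdots0\rangle$ for every $n\ge2$. Take $\bm\theta^{(0)}=\bm 0$ and $\alpha_1^{(1)}=-\pi/2$, with all other angles zero. Then $\mathbf{U}_2^{\mathrm{SE}}$ prepares $|{+}\rangle\otimes|0^{n-1}\rangle$, whose $\mathbf{U}_{\mathrm{ent}}$-preimage is $\tfrac{1}{\sqrt2}(|00\rangle+|11\rangle)\otimes|0^{n-2}\rangle$. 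That is not a product vector, so the state lies in $\cM_2\setminus\cM_1$.

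This is a different, and somewhat sharper, certificate than the paper's explicit one. The paper matches amplitudes for the single gate $\CNOT_{1\to2}$ at $n=2$, which is not literally the ring; for $n=2$ the ring is $\CNOT_{2\to1}\CNOT_{1\to2}$. Yours handles the actual ring at all $n\ge2$, with fully explicit angles and no appeal to Euler surjectivity.

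The gap is in (c) as stated. It is an implication for any entangler that fails to normalize $\mathcal{L}$, and your plan never uses that hypothesis, so it proves only the CNOT-ring instance. To close this you need the paper's general reduction, which has three steps:
\begin{enumerate}
\item Since $\cM_1=\mathbf{U}_{\mathrm{ent}}P$ and $\cM_2=\mathcal{L}\,\cM_1$ (with $P$ your set of product vectors), $\cM_1=\cM_2$ holds iff every $\mathbf{U}_{\mathrm{ent}}^\dagger\mathbf{L}\,\mathbf{U}_{\mathrm{ent}}$ with $\mathbf{L}\in\mathcal{L}$ maps product vectors to product vectors.
\item By the standard characterization, unitaries with that property are local unitaries (up to a global phase) composed with qubit permutations.
\item The map $\mathbf{L}\mapsto\mathbf{U}_{\mathrm{ent}}^\dagger\mathbf{L}\,\mathbf{U}_{\mathrm{ent}}$ is continuous on the connected group $\mathcal{L}$, fixes $\mathbf{I}$, and preserves determinant $1$. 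So its image lies in the identity component of the determinant-one product-preserving unitaries, which is $\mathcal{L}$ itself. That is, $\mathbf{U}_{\mathrm{ent}}$ normalizes $\mathcal{L}$.
\end{enumerate}
The contrapositive is (c) in full generality. Your explicit witness then remains the cleanest concrete verification of the conclusion for the CNOT ring itself.
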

\begin{proof}
Throughout, $\cM_D=\{\mathbf{U}_D^{\mathrm{SE}}(\bm\theta)|0^n\rangle:
\bm\theta\in\bbR^{3nD}\}$ is the set of states reachable by the
depth-$D$ SE circuit over all parameter choices.

\textit{Part~(a): $\cM_1\subseteq\cM_2$.}
Let $|\boldsymbol{\psi}\rangle\in\cM_1$, so
$|\boldsymbol{\psi}\rangle=\mathbf{U}_1^{\mathrm{SE}}(\bm\theta^{(0)})|0^n\rangle$ for
some $\bm\theta^{(0)}\in\bbR^{3n}$. By~\eqref{eq:UD} the depth-$1$
circuit has an empty post-entangler product, so
$\mathbf{U}_1^{\mathrm{SE}}(\bm\theta^{(0)})
=\mathbf{U}_{\mathrm{ent}} \mathcal{R}^{\mathrm{fwd}}(\bm\theta^{(0)})$.
Consider the depth-$2$ circuit with parameters
$(\bm\theta^{(0)},\bm 0)$. Its single post-entangler layer is
$\mathbf{W}_1(\bm 0)=\mathcal{R}^{\mathrm{inv}}(\bm 0)$ by~\eqref{eq:Wk}
($k=1$ is odd). Each single-qubit factor of
$\mathcal{R}^{\mathrm{inv}}(\bm 0)$ is
$\mathbf{R}^{\mathrm{inv}}(0,0,0)=\mathbf{R}_X(0)\mathbf{R}_Z(0)\mathbf{R}_Y(0)=\mathbf{I}_2$
by~\eqref{eq:inv_rot} and $\mathbf{R}_\sigma(0)=\mathbf{I}_2$, so
$\mathcal{R}^{\mathrm{inv}}(\bm 0)=\bigotimes_{q=1}^n\mathbf{I}_2=\mathbf{I}_{2^n}$.
Hence
\begin{equation}\label{eq:sat_a}
  \mathbf{U}_2^{\mathrm{SE}}(\bm\theta^{(0)},\bm 0)
  = \mathbf{I}_{2^n} \mathbf{U}_{\mathrm{ent}} 
    \mathcal{R}^{\mathrm{fwd}}(\bm\theta^{(0)})
  = \mathbf{U}_1^{\mathrm{SE}}(\bm\theta^{(0)}),
\end{equation}
so $|\boldsymbol{\psi}\rangle=\mathbf{U}_2^{\mathrm{SE}}(\bm\theta^{(0)},\bm 0)|0^n\rangle
\in\cM_2$. As $|\boldsymbol{\psi}\rangle$ was arbitrary, $\cM_1\subseteq\cM_2$.

\textit{Part~(b): $\cM_D=\cM_2$ for all $D\ge 2$.}
We prove the two inclusions separately.

\emph{($\cM_2\subseteq\cM_D$).} Fix $D\ge 2$ and let
$|\boldsymbol{\psi}\rangle\in\cM_2$ be produced by parameters
$(\bm\theta^{(0)},\bm\theta^{(1)})$. Define depth-$D$ parameters by
keeping $\bm\theta^{(0)},\bm\theta^{(1)}$ and setting
$\bm\theta^{(k)}=\bm 0$ for all $2\le k\le D-1$. For each such $k$,
$\mathbf{W}_k(\bm 0)$ equals $\mathcal{R}^{\mathrm{inv}}(\bm 0)$ or
$\mathcal{R}^{\mathrm{fwd}}(\bm 0)$ by~\eqref{eq:Wk}; either way every
single-qubit factor is $\mathbf{R}^{\mathrm{fwd}}(0,0,0)=\mathbf{I}_2$ or
$\mathbf{R}^{\mathrm{inv}}(0,0,0)=\mathbf{I}_2$, so $\mathbf{W}_k(\bm 0)=\mathbf{I}_{2^n}$. The
post-entangler product therefore collapses to
$\prod_{k=D-1}^{1}\mathbf{W}_k=\mathbf{W}_1(\bm\theta^{(1)})$, giving
$\mathbf{U}_D^{\mathrm{SE}}(\bm\theta^{(0)},\bm\theta^{(1)},\bm 0,\dots,\bm 0)
=\mathbf{U}_2^{\mathrm{SE}}(\bm\theta^{(0)},\bm\theta^{(1)})$ and hence
$|\boldsymbol{\psi}\rangle\in\cM_D$.

\emph{($\cM_D\subseteq\cM_2$).} Let $|\boldsymbol{\psi}\rangle\in\cM_D$ with
$D\ge 2$. By \Cref{lem:factor} it has the form~\eqref{eq:state_form},
\begin{equation}\label{eq:sat_b_form}
  |\boldsymbol{\psi}\rangle
  = \Bigl(\bigotimes_{q=1}^n \mathbf{V}_q\Bigr) \mathbf{U}_{\mathrm{ent}} 
    \Bigl(\bigotimes_{q=1}^n r_q\Bigr)|0^n\rangle,
  \qquad r_q,\mathbf{V}_q\in\SU(2),
\end{equation}
where $\mathbf{V}_q=\mathbf{W}_{D-1,q}\cdots\mathbf{W}_{1,q}$ is the composition of all
post-entangler single-qubit rotations on qubit~$q$ and lies in $\SU(2)$
by closure of $\SU(2)$ under multiplication. At depth $2$ the
post-entangler factor on qubit~$q$ is the single $\SU(2)$ element
$\mathbf{W}_{1,q}=\mathbf{R}^{\mathrm{inv}}(\alpha_q^{(1)},\beta_q^{(1)},\gamma_q^{(1)})$,
and by \Cref{lem:su2_cover} the map
$(\alpha,\beta,\gamma)\mapsto\mathbf{R}^{\mathrm{inv}}(\alpha,\beta,\gamma)$ is
surjective onto $\SU(2)$. Hence for each $q$ there exist angles
$(\alpha_q^{(1)},\beta_q^{(1)},\gamma_q^{(1)})$ with
$\mathbf{R}^{\mathrm{inv}}(\alpha_q^{(1)},\beta_q^{(1)},\gamma_q^{(1)})=\mathbf{V}_q$;
likewise, surjectivity of the forward map---the forward Euler
sequence of \Cref{lem:su2_cover}, Part~(a)---supplies
$\bm\theta^{(0)}$ with
$\mathcal{R}^{\mathrm{fwd}}(\bm\theta^{(0)})=\bigotimes_q r_q$. With this
choice $\mathbf{U}_2^{\mathrm{SE}}(\bm\theta^{(0)},\bm\theta^{(1)})|0^n\rangle$
equals the right-hand side of~\eqref{eq:sat_b_form}, so
$|\boldsymbol{\psi}\rangle\in\cM_2$. The crux is that one post-entangler layer
already attains \emph{every} per-qubit factor $\mathbf{V}_q\in\SU(2)$; the extra
layers present when $D>2$ compose additional $\SU(2)$ elements but,
by closure, cannot leave $\SU(2)$, so they enlarge neither the per-qubit
factor nor $\cM_D$. Combining both inclusions, $\cM_D=\cM_2$.

\textit{Part~(c): $\cM_1\subsetneq\cM_2$ when $\mathbf{U}_{\mathrm{ent}}$ does
not normalize $\mathcal{L}$.}
By Part~(a), $\cM_1\subseteq\cM_2$, so it suffices to exhibit one state
in $\cM_2\setminus\cM_1$. We first give an explicit witness for $n=2$
with $\mathbf{U}_{\mathrm{ent}}=\CNOT_{1\to2}$, the entangling gate from which
the CNOT ring~\eqref{eq:cnot_ring} is built, and then give the general
non-normalization argument. Write
$c_8=\cos(\pi/8)$, $s_8=\sin(\pi/8)$, and recall from~\eqref{eq:pauli_rot}
that $\mathbf{R}_Y(\theta)|0\rangle=\cos\tfrac\theta2|0\rangle
+\sin\tfrac\theta2|1\rangle$ and
$\mathbf{R}_Y(\theta)|1\rangle=-\sin\tfrac\theta2|0\rangle
+\cos\tfrac\theta2|1\rangle$.

\emph{The target depth-$2$ state.} Take the depth-$2$ parameters
realizing $r_1=\mathbf{R}_Y(\pi/2)$, $r_2=\mathbf{I}_2$, $\mathbf{V}_1=\mathbf{I}_2$,
$\mathbf{V}_2=\mathbf{R}_Y(\pi/4)$ in~\eqref{eq:sat_b_form} (each is in $\SU(2)$, hence
attainable by Part~(b)). Applying the gates right-to-left:
\begin{align}
  (\mathbf{R}_Y(\tfrac\pi2)\otimes\mathbf{I}_2)|00\rangle
  &= \bigl(\tfrac{1}{\sqrt2}|0\rangle+\tfrac{1}{\sqrt2}|1\rangle\bigr)
     \otimes|0\rangle
   = \tfrac{1}{\sqrt2}\bigl(|00\rangle+|10\rangle\bigr),
     \label{eq:satc_pre}\\
  \CNOT_{1\to2} \tfrac{1}{\sqrt2}\bigl(|00\rangle+|10\rangle\bigr)
  &= \tfrac{1}{\sqrt2}\bigl(|00\rangle+|11\rangle\bigr),
     \label{eq:satc_ent}\\
  (\mathbf{I}_2\otimes\mathbf{R}_Y(\tfrac\pi4)) 
     \tfrac{1}{\sqrt2}\bigl(|00\rangle+|11\rangle\bigr)
  &= \tfrac{1}{\sqrt2}\bigl(c_8|00\rangle+s_8|01\rangle
     -s_8|10\rangle+c_8|11\rangle\bigr)
   =: |\boldsymbol{\psi}^\star\rangle,
     \label{eq:satc_target}
\end{align}
where~\eqref{eq:satc_ent} uses
$\CNOT_{1\to2}|a\rangle|b\rangle=|a\rangle|a\oplus b\rangle$, and the
second qubit in~\eqref{eq:satc_target} was rotated by
$\mathbf{R}_Y(\pi/4)$ (half-angle $\pi/8$). Thus
$|\boldsymbol{\psi}^\star\rangle\in\cM_2$.

\emph{General form of a $\cM_1$ state.} By~\eqref{eq:UD} with $D=1$,
every $|\boldsymbol{\phi}\rangle\in\cM_1$ is
$\CNOT_{1\to2} \mathcal{R}^{\mathrm{fwd}}(\bm\theta^{(0)})|00\rangle$,
i.e.\ a CNOT applied to a product state. Writing the product state as
$(a|0\rangle+b|1\rangle)\otimes(c|0\rangle+d|1\rangle)$ with
$a,b,c,d\in\bbC$ (the most general single-qubit states; any global
phase is absorbed), expand and apply $\CNOT_{1\to2}$, which fixes
$|00\rangle,|01\rangle$ and swaps the control-$1$ pair
$|10\rangle\leftrightarrow|11\rangle$:
\begin{align}
  (a|0\rangle{+}b|1\rangle)\otimes(c|0\rangle{+}d|1\rangle)
  &= ac |00\rangle+ad |01\rangle+bc |10\rangle+bd |11\rangle,
     \label{eq:satc_prod}\\
  \CNOT_{1\to2}\bigl[ \cdot \bigr]
  &= ac |00\rangle+ad |01\rangle+bd |10\rangle+bc |11\rangle
   =: |\boldsymbol{\phi}\rangle.
     \label{eq:satc_phi}
\end{align}

\emph{The contradiction.} Suppose, for contradiction, that
$|\boldsymbol{\psi}^\star\rangle\in\cM_1$, i.e.\
$|\boldsymbol{\phi}\rangle=|\boldsymbol{\psi}^\star\rangle$ for some $a,b,c,d$. Matching
the four computational-basis amplitudes
of~\eqref{eq:satc_phi} and~\eqref{eq:satc_target} gives
\begin{equation}\label{eq:satc_amps}
  ac=\tfrac{c_8}{\sqrt2},\qquad
  ad=\tfrac{s_8}{\sqrt2},\qquad
  bd=-\tfrac{s_8}{\sqrt2},\qquad
  bc=\tfrac{c_8}{\sqrt2}.
\end{equation}
Since $c_8=\cos(\pi/8)>0$ and $s_8=\sin(\pi/8)>0$, the right-hand sides
$\tfrac{c_8}{\sqrt2}$ and $\pm\tfrac{s_8}{\sqrt2}$ are all nonzero;
hence none of $a,b,c,d$ can vanish (a zero factor would force two of
the products in~\eqref{eq:satc_amps} to be $0$). With $a,b,c,d\neq0$ we
may divide. The first two equations give
\begin{equation}\label{eq:satc_ratio1}
  \frac{ad}{ac}=\frac{d}{c}
  =\frac{s_8/\sqrt2}{c_8/\sqrt2}
  =\tan \Bigl(\frac\pi8\Bigr)>0,
\end{equation}
while the last two give
\begin{equation}\label{eq:satc_ratio2}
  \frac{bd}{bc}=\frac{d}{c}
  =\frac{-s_8/\sqrt2}{c_8/\sqrt2}
  =-\tan \Bigl(\frac\pi8\Bigr)<0.
\end{equation}
Equations~\eqref{eq:satc_ratio1} and~\eqref{eq:satc_ratio2} assign the
single ratio $d/c$ two values of opposite sign,
$+\tan(\pi/8)$ and $-\tan(\pi/8)$, which are distinct because
$\tan(\pi/8)\neq0$. This contradiction shows
$|\boldsymbol{\psi}^\star\rangle\notin\cM_1$, hence
$|\boldsymbol{\psi}^\star\rangle\in\cM_2\setminus\cM_1$ and
$\cM_1\subsetneq\cM_2$.

\emph{General argument (any non-normalizing entangler).}
By the depth-$1$ and depth-$2$ forms,
$\cM_1=\{\mathbf{U}_{\mathrm{ent}}\,|\pi\rangle:|\pi\rangle\ \text{a product state}\}$
and, using the arbitrary post-entangler product layer of Part~(b),
$\cM_2=\{\mathbf{L}\,m:\mathbf{L}\in\mathcal{L},\ m\in\cM_1\}=\mathcal{L}\,\cM_1$,
where $\mathcal{L}$ is the local-unitary group.  Hence $\cM_1=\cM_2$ iff
$\mathcal{L}\,\cM_1=\cM_1$, i.e.\ iff for every $\mathbf{L}\in\mathcal{L}$ and
product $|\pi\rangle$ the state
$\mathbf{U}_{\mathrm{ent}}^{\dagger}\mathbf{L}\,\mathbf{U}_{\mathrm{ent}}|\pi\rangle$ is again a
product state.  The unitaries preserving every product state are exactly
$\mathcal{L}\rtimes S_n$ (local unitaries composed with qubit permutations);
since $\mathcal{L}=\SU(2)^{\otimes n}$ is connected and the conjugation
$\mathbf{L}\mapsto\mathbf{U}_{\mathrm{ent}}^{\dagger}\mathbf{L}\,\mathbf{U}_{\mathrm{ent}}$ fixes
$\mathbf{I}$, continuity confines its image to the identity component
$\mathcal{L}$, so this holds precisely when $\mathbf{U}_{\mathrm{ent}}$
\emph{normalizes} $\mathcal{L}$
($\mathbf{U}_{\mathrm{ent}}^{\dagger}\mathcal{L}\,\mathbf{U}_{\mathrm{ent}}\subseteq\mathcal{L}$).
Contrapositively, if $\mathbf{U}_{\mathrm{ent}}$ does not normalize $\mathcal{L}$
then $\cM_1\subsetneq\cM_2$.  The CNOT ring does not normalize
$\mathcal{L}$: conjugating the local rotation
$\mathbf{R}_X(\alpha)\otimes\mathbf{I}\in\mathcal{L}$ by the entangler gives
$\CNOT_{1\to2}^{\dagger}(\mathbf{R}_X(\alpha)\otimes\mathbf{I})\CNOT_{1\to2}
=e^{-i\alpha\mathbf{X}_1\mathbf{X}_2/2}$, an entangling $XX$-rotation that for
generic $\alpha$ lies outside $\mathcal{L}$, and applying it to a generic
product state produces a member of $\cM_2\setminus\cM_1$.  (By contrast a SWAP
gate, though not a product unitary, \emph{does} normalize $\mathcal{L}$
via $\mathrm{SWAP}\,(\mathbf{u}\otimes\mathbf{v})\,\mathrm{SWAP}=\mathbf{v}\otimes\mathbf{u}$, and
indeed gives $\cM_1=\cM_2$; this is why the correct hypothesis is
non-normalization, not mere non-productness.)
\end{proof}

\subsection{Continuation-theoretic framework}
\label{sec:continuation}
\begin{definition}[Identity-embedding property (IEP)]
\label{def:iep}
  The layered family $\{\mathbf{U}_D^{\mathrm{SE}}\}_{D\ge 1}$ satisfies the
  IEP if, for every $D\ge 1$, there exists a map
  $\iota_D:\bbR^{3nD}\to\bbR^{3n(D+2)}$ such that
  $\mathbf{U}_{D+2}^{\mathrm{SE}}(\iota_D(\bm\theta))
  = \mathbf{U}_D^{\mathrm{SE}}(\bm\theta)$.
\end{definition}
The IEP is the foundational mechanism that makes progressive depth
training safe: it guarantees that the deeper circuit can reproduce
every state that the shallower circuit can produce.  We provide an
extended discussion of the IEP and its consequences in
\Cref{app:iep_extended}.
\begin{proposition}[Nested reachable sets]\label{prop:nested}
  Under the identity-pair expansion rule,
  $\cM_D\subseteq\cM_{D+2}$ for all $D\ge 1$.
\end{proposition}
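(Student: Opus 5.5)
The plan is to establish the inclusion by exhibiting an explicit identity-embedding map $\iota_D$ in the sense of \Cref{def:iep}. Once such a map exists, the conclusion is immediate. Given $|\boldsymbol{\psi}\rangle=\mathbf{U}_D^{\mathrm{SE}}(\bm\theta)|0^n\rangle\in\cM_D$, the same state equals $\mathbf{U}_{D+2}^{\mathrm{SE}}(\iota_D(\bm\theta))|0^n\rangle\in\cM_{D+2}$. So the whole proof reduces to constructing $\iota_D$ and verifying the unitary identity $\mathbf{U}_{D+2}^{\mathrm{SE}}(\iota_D(\bm\theta))=\mathbf{U}_D^{\mathrm{SE}}(\bm\theta)$.

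The construction I will use mirrors the identity-pair initialization of \Cref{sec:identity_init}. I set
\[
\iota_D(\bm\theta^{(0)},\dots,\bm\theta^{(D-1)})=(\bm\theta^{(0)},\dots,\bm\theta^{(D-1)},\bm\phi,\bm\phi)
\]
for an arbitrary fixed $\bm\phi\in\bbR^{3n}$, for instance $\bm\phi=\bm\theta_{\mathrm{ref}}$ or $\bm 0$. By \eqref{eq:UD}, the depth-$(D+2)$ circuit equals $\mathbf{W}_{D+1}(\bm\phi)\,\mathbf{W}_D(\bm\phi)\,\mathbf{U}_D^{\mathrm{SE}}(\bm\theta)$. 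It then remains to show that the two appended layers cancel. This requires a parity check on the indices $D$ and $D+1$ in \eqref{eq:Wk}:
\begin{itemize}
\item If $D$ is odd, then $\mathbf{W}_D=\mathcal{R}^{\mathrm{inv}}$ and $\mathbf{W}_{D+1}=\mathcal{R}^{\mathrm{fwd}}$. The product is $\mathcal{R}^{\mathrm{fwd}}(\bm\phi)\mathcal{R}^{\mathrm{inv}}(\bm\phi)=\mathbf{I}_{2^n}$ by \Cref{prop:block_inv}, which is exactly the identity pair of \Cref{def:id_pair}.
\item If $D\ge 2$ is even, the order is reversed: the product is $\mathcal{R}^{\mathrm{inv}}(\bm\phi)\mathcal{R}^{\mathrm{fwd}}(\bm\phi)$. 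This also equals $\mathbf{I}_{2^n}$, because $\mathcal{R}^{\mathrm{inv}}=[\mathcal{R}^{\mathrm{fwd}}]^\dagger$ is a two-sided inverse of a unitary, as already used in the proof of \Cref{prop:ip_cancel}.
\end{itemize}
The edge case $D=1$ falls under the odd case, with an empty prior post-entangler product.

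The main point needing care is this parity bookkeeping. The labelling in \eqref{eq:Wk} fixes whether each layer is forward or inverse by its absolute index, so for even $D$ the appended pair is (fwd, inv) rather than the (inv, fwd) pattern of \Cref{def:id_pair}. I will handle this by invoking invertibility, as above, rather than \Cref{def:id_pair} directly.

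As a sanity check, an alternative route is available: set $\bm\phi=\bm 0$, so that each appended layer is individually $\mathbf{I}_{2^n}$ (as in Part~(b) of \Cref{thm:saturation}). Independently, for $D\ge2$ the claim also follows from $\cM_D=\cM_2=\cM_{D+2}$ in \Cref{thm:saturation}, and for $D=1$ from Parts (a) and (b). I will present the identity-pair map as the main proof since it matches the expansion rule named in the statement, and note the saturation-based argument as a corollary-level cross-check.
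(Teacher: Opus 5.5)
Your proof is correct and follows essentially the same route as the paper. You use the same embedding $\iota_D(\bm\theta)=(\bm\theta,\bm\theta_{\mathrm{ref}},\bm\theta_{\mathrm{ref}})$, the same collapse of the two leftmost appended layers, and the same parity bookkeeping, where the even-$D$ case rests on $\mathcal{R}^{\mathrm{fwd}}$ and $\mathcal{R}^{\mathrm{inv}}$ being mutual two-sided inverses; your $\bm\phi=\bm 0$ variant and the saturation-based cross-check are valid but not needed.
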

\begin{proof}
Let $|\boldsymbol{\psi}\rangle\in\cM_D$, so
$|\boldsymbol{\psi}\rangle=\mathbf{U}_D^{\mathrm{SE}}(\bm\theta^\star)|0^n\rangle$ for
some $\bm\theta^\star\in\bbR^{3nD}$. We exhibit depth-$(D{+}2)$
parameters reaching the same state, which proves
$|\boldsymbol{\psi}\rangle\in\cM_{D+2}$.

Apply the identity-embedding map of \Cref{def:iep} in the concrete form
provided by the identity pairing (\Cref{def:id_pair}): set
\begin{equation}\label{eq:nested_iota}
  \iota_D(\bm\theta^\star)
  = \bigl(\bm\theta^\star, \bm\theta_{\mathrm{ref}}, 
          \bm\theta_{\mathrm{ref}}\bigr)
  = \bigl(\bm\theta^{\star(0)},\dots,\bm\theta^{\star(D-1)}, 
          \bm\theta_{\mathrm{ref}}, \bm\theta_{\mathrm{ref}}\bigr),
\end{equation}
i.e.\ append two new rotation layers $D$ and $D{+}1$, both with the same
reference parameters $\bm\theta_{\mathrm{ref}}$.  By~\eqref{eq:Wk} the two
appended layers carry opposite orientations according to the parity of
$D$---for the odd-$D$ schedule ($D\in\{1,3,5,\dots\}$), $\mathbf{W}_D
=\mathcal{R}^{\mathrm{inv}}(\bm\theta_{\mathrm{ref}})$ ($D$ odd) and
$\mathbf{W}_{D+1}=\mathcal{R}^{\mathrm{fwd}}(\bm\theta_{\mathrm{ref}})$
($D{+}1$ even), the identity pair of~\eqref{eq:id_pair}; the even-$D$
case merely swaps the two labels.  In either parity their composition in
the leftmost (last-applied) position of the depth-$(D{+}2)$
post-entangler product collapses,
\begin{equation}\label{eq:nested_cancel}
  \mathbf{W}_{D+1}(\bm\theta_{\mathrm{ref}}) 
  \mathbf{W}_D(\bm\theta_{\mathrm{ref}})
  = \mathbf{I}_{2^n},
\end{equation}
because $\mathcal{R}^{\mathrm{fwd}}(\bm\theta_{\mathrm{ref}})$ and
$\mathcal{R}^{\mathrm{inv}}(\bm\theta_{\mathrm{ref}})$ are mutual
two-sided inverses---each is the adjoint of the other, so
\Cref{prop:block_inv} gives both
$\mathcal{R}^{\mathrm{fwd}}\mathcal{R}^{\mathrm{inv}}=\mathbf{I}_{2^n}$ and
$\mathcal{R}^{\mathrm{inv}}\mathcal{R}^{\mathrm{fwd}}=\mathbf{I}_{2^n}$. Substituting the
product form~\eqref{eq:UD} for the depth-$(D{+}2)$ circuit and using
that the two new layers occupy the leftmost factors,
\begin{equation}\label{eq:nested_collapse}
  \mathbf{U}_{D+2}^{\mathrm{SE}}\bigl(\iota_D(\bm\theta^\star)\bigr)
  = \underbrace{\mathbf{W}_{D+1}(\bm\theta_{\mathrm{ref}}) 
      \mathbf{W}_D(\bm\theta_{\mathrm{ref}})}_{= \mathbf{I}_{2^n}
      \text{ by }\eqref{eq:nested_cancel}}
     \Bigl(\prod_{k=D-1}^{1}\mathbf{W}_k(\bm\theta^{\star(k)})\Bigr)
    \mathbf{U}_{\mathrm{ent}} 
    \mathcal{R}^{\mathrm{fwd}}(\bm\theta^{\star(0)})
  = \mathbf{U}_D^{\mathrm{SE}}(\bm\theta^\star).
\end{equation}
Applying~\eqref{eq:nested_collapse} to $|0^n\rangle$ gives
$\mathbf{U}_{D+2}^{\mathrm{SE}}(\iota_D(\bm\theta^\star))|0^n\rangle
=\mathbf{U}_D^{\mathrm{SE}}(\bm\theta^\star)|0^n\rangle=|\boldsymbol{\psi}\rangle$,
so $|\boldsymbol{\psi}\rangle\in\cM_{D+2}$. Since $|\boldsymbol{\psi}\rangle\in\cM_D$
was arbitrary, $\cM_D\subseteq\cM_{D+2}$ for every $D\ge 1$.
\end{proof}
\begin{corollary}[Monotonicity of optimal energy]
\label{cor:energy_monotone}
  $E_{D+2}^\star\le E_D^\star$ for all $D\ge 1$.
\end{corollary}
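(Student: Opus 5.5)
The plan is to read $E_D^\star$ as the optimal energy at depth $D$, namely $E_D^\star=\inf_{\bm\theta\in\bbR^{3nD}}E_D(\bm\theta)=\inf_{|\boldsymbol{\psi}\rangle\in\cM_D}\langle\boldsymbol{\psi}|\mathbf{H}|\boldsymbol{\psi}\rangle$. The corollary then follows because an infimum over a larger set can only be smaller, using the inclusion $\cM_D\subseteq\cM_{D+2}$ from \Cref{prop:nested}.

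First, I will rewrite the parameter-space infimum as an infimum over reachable states. The energy depends on $\bm\theta$ only through the prepared state $\mathbf{U}_D^{\mathrm{SE}}(\bm\theta)|0^n\rangle$, and $\cM_D$ is by definition the image of the parameter map. Hence $E_D^\star=\inf_{\cM_D}\langle\boldsymbol{\psi}|\mathbf{H}|\boldsymbol{\psi}\rangle$.

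Second, I will compare the two infima directly. Fix $\epsilon>0$ and choose $\bm\theta^\star$ with $E_D(\bm\theta^\star)\le E_D^\star+\epsilon$. Applying the identity embedding $\iota_D$ from \eqref{eq:nested_iota} gives
\[
E_{D+2}\bigl(\iota_D(\bm\theta^\star)\bigr)=E_D(\bm\theta^\star),
\]
because $\mathbf{U}_{D+2}^{\mathrm{SE}}(\iota_D(\bm\theta^\star))=\mathbf{U}_D^{\mathrm{SE}}(\bm\theta^\star)$ by \eqref{eq:nested_collapse}. It follows that $E_{D+2}^\star\le E_D^\star+\epsilon$, and letting $\epsilon\to0$ gives the claim.

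I will also note that the infimum is in fact attained, so one may write $\min$ instead of $\inf$. The parameters are angles, so the energy can be viewed as a continuous function on the compact torus, and each $E_D$ is therefore bounded below by $\lambda_1$ and achieves its minimum. This remark is optional; the $\epsilon$-argument already suffices.

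There is no real obstacle here. The only subtlety is making the definition of $E_D^\star$ explicit, since the paper has not yet formally defined it. As a remark, \Cref{thm:saturation} strengthens the result to equality $E_D^\star=E_2^\star$ for all $D\ge2$, so the inequality is strict at most once, at $D=1$.
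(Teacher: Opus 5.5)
Your proposal is correct and follows essentially the same route as the paper: you rewrite $E_D^\star$ as the optimum of $\langle\boldsymbol{\psi}|\mathbf{H}|\boldsymbol{\psi}\rangle$ over $\cM_D$ and then compare using the inclusion $\cM_D\subseteq\cM_{D+2}$ of \Cref{prop:nested}, which is itself proved via the identity embedding $\iota_D$ that you invoke. Your $\epsilon$-argument over infima is a harmless refinement of the paper's use of an attained minimizer $|\boldsymbol{\psi}_D^\star\rangle$ (attainment that your periodicity/compactness remark justifies anyway).
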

\begin{proof}
By~\eqref{eq:vqe_energy} the energy depends on the parameters only
through the prepared state
$|\boldsymbol{\psi}(\bm\theta)\rangle=\mathbf{U}_D^{\mathrm{SE}}(\bm\theta)|0^n\rangle$,
namely $E_D(\bm\theta)=\langle\boldsymbol{\psi}(\bm\theta)|\mathbf{H}|\boldsymbol{\psi}(\bm\theta)\rangle$.
Hence the optimal depth-$D$ energy is the minimum of the
state functional $\boldsymbol{\psi}\mapsto\langle\boldsymbol{\psi}|\mathbf{H}|\boldsymbol{\psi}\rangle$
over the reachable set: 
\begin{equation}\label{eq:Estar_def}
  E_D^\star
  = \min_{\bm\theta}E_D(\bm\theta)
  = \min_{|\boldsymbol{\psi}\rangle\in\cM_D}
    \langle\boldsymbol{\psi}|\mathbf{H}|\boldsymbol{\psi}\rangle,
\end{equation}
and likewise for $E_{D+2}^\star$ over $\cM_{D+2}$. By
\Cref{prop:nested}, $\cM_D\subseteq\cM_{D+2}$, so the minimization
defining $E_{D+2}^\star$ ranges over a superset of the feasible set
defining $E_D^\star$. The minimum of a fixed real-valued functional
over a larger set is no larger than over a subset: explicitly, a minimizer
$|\boldsymbol{\psi}_D^\star\rangle\in\cM_D$ of the depth-$D$ problem is also a
feasible point of the depth-$(D{+}2)$ problem, whence:
\begin{equation}\label{eq:Estar_mono}
  E_{D+2}^\star
  = \min_{|\boldsymbol{\psi}\rangle\in\cM_{D+2}}
    \langle\boldsymbol{\psi}|\mathbf{H}|\boldsymbol{\psi}\rangle
  \le \langle\boldsymbol{\psi}_D^\star|\mathbf{H}|\boldsymbol{\psi}_D^\star\rangle
  = E_D^\star.
\end{equation}
Since $D\ge 1$ was arbitrary, $E_{D+2}^\star\le E_D^\star$ for all
$D\ge 1$.
\end{proof}

\medskip\noindent\textbf{Homotopy formulation.}
Define the homotopy unitary:
\begin{equation}\label{eq:homotopy}
  \mathbf{U}_{D+2}^{(\lambda)}(\bm\theta,\bm\phi)
  =\exp\bigl({-}i\lambda \mathbf{G}(\bm\phi)\bigr) 
   \mathbf{U}_D^{\mathrm{SE}}(\bm\theta),
\end{equation}
where $\mathbf{G}(\bm\phi)$ is the Hermitian generator of the two new rotation
layers and $\mathbf{W}(\bm\phi)=\bigotimes_q \mathbf{W}_q(\bm\phi_q)$ is a tensor
product of single-qubit unitaries (constructed explicitly in
\Cref{app:homotopy_extended}, \eqref{eq:W_tensor_app}).
The generator decomposes as
$\mathbf{G}(\bm\phi)=\sum_{j=1}^{6n}\phi_j \mathbf{G}_j+O(\|\bm\phi\|^2)$,
with $\|\mathbf{G}_j\|=1/2$ for standard Pauli rotations (each
$\mathbf{G}_j=\tfrac12\mathbf{P}_q$ for a single-qubit Pauli $\mathbf{P}_q$;
constructed in \Cref{app:homotopy_extended}).
The homotopy energy is:
\begin{equation}\label{eq:F_homotopy}
  F(\bm\theta,\bm\phi;\lambda)
  =\bigl\langle 0^n\bigr|
   \bigl[\mathbf{U}_{D+2}^{(\lambda)}\bigr]^{\dagger}
   \mathbf{H} 
   \mathbf{U}_{D+2}^{(\lambda)}
   \bigl|0^n\bigr\rangle.
\end{equation}
At $\lambda=0$, $F=E_D(\bm\theta)$; at $\lambda=1$,
$F=E_{D+2}(\bm\theta,\bm\phi)$.
The detailed construction of the generator $\mathbf{G}(\bm\phi)$, its
tensor-product decomposition, and the physical interpretation of the
homotopy are given in \Cref{app:homotopy_extended}.
\begin{assumption}[Nondegenerate depth-$D$ minimum]
\label{ass:nondegen}
  Let $\bm\theta_D^\star$ be a local minimizer of $E_D$ with
  $\nabla^2_{\bm\theta\bm\theta}E_D(\bm\theta_D^\star)
  \succeq\mu \mathbf{I}$ for some $\mu>0$.
\end{assumption}
\begin{theorem}[Local continuation under identity expansion]
\label{thm:continuation}
  Suppose:
  \begin{enumerate}[nosep]
    \item \textbf{Nondegeneracy} (\Cref{ass:nondegen}).
    \item \textbf{Smoothness}: $F$ is $C^2$ near
      $(\bm\theta_D^\star,\bm\phi^\star,0)$.
    \item \textbf{Identity embedding}: $F(\bm\theta,\bm\phi;0)=E_D(\bm\theta)$.
    \item \textbf{Fixed target parameters}: $\bm\phi^\star\neq\bm 0$.
    \item \textbf{Non-trivial coupling}:
      $\partial_\lambda\nabla_{\bm\theta}
      F(\bm\theta_D^\star,\bm\phi^\star;\lambda)
      \big|_{\lambda=0}\neq\bm 0$.
  \end{enumerate}
  Then there exist $\lambda_{\max}\in(0,1]$ and a unique $C^1$ curve
  $\bm\theta:[0,\lambda_{\max}]\to\bbR^{3nD}$ with
  $\bm\theta(0)=\bm\theta_D^\star$ and
  $\nabla_{\bm\theta}
  F\bigl(\bm\theta(\lambda),\bm\phi^\star;\lambda\bigr)=\bm 0$.
  The initial velocity is
  \begin{equation}\label{eq:curve_deriv}
    \bm\theta'(0)
    =-\bigl[\nabla^2_{\bm\theta\bm\theta}
      E_D(\bm\theta_D^\star)\bigr]^{-1} 
     \partial_\lambda\nabla_{\bm\theta}
     F(\bm\theta_D^\star,\bm\phi^\star;\lambda)
     \big|_{\lambda=0}
    \neq\bm 0.
  \end{equation}
\end{theorem}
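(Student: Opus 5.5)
The plan is to apply the implicit function theorem to the gradient map
\[
  \Phi(\bm\theta,\lambda):=\nabla_{\bm\theta}F(\bm\theta,\bm\phi^\star;\lambda),
  \qquad \Phi:\bbR^{3nD}\times\bbR\to\bbR^{3nD},
\]
with $\bm\phi=\bm\phi^\star$ frozen throughout. The hypothesis $\bm\phi^\star\neq\bm 0$ only guarantees that the homotopy generator $\mathbf{G}(\bm\phi^\star)$ is nontrivial. The curve asserted in the theorem is exactly the zero set of $\Phi$ near $(\bm\theta_D^\star,0)$.

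\textbf{Step 1: the base point is a zero of $\Phi$.} By the identity-embedding hypothesis, $F(\bm\theta,\bm\phi^\star;0)=E_D(\bm\theta)$ for all $\bm\theta$ near $\bm\theta_D^\star$. Differentiating this identity in $\bm\theta$ gives $\Phi(\bm\theta,0)=\nabla E_D(\bm\theta)$. Since $\bm\theta_D^\star$ is a local minimizer of $E_D$ (\Cref{ass:nondegen}), $\Phi(\bm\theta_D^\star,0)=\bm 0$.

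\textbf{Step 2: invertibility of the $\bm\theta$-Jacobian.} Differentiating once more, $D_{\bm\theta}\Phi(\bm\theta_D^\star,0)=\nabla^2_{\bm\theta\bm\theta}E_D(\bm\theta_D^\star)\succeq\mu\mathbf{I}$, so this Jacobian is invertible.

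\textbf{Step 3: regularity.} The smoothness hypothesis makes $F$ $C^2$, hence $\Phi$ is $C^1$ jointly in $(\bm\theta,\lambda)$. In fact $F$ is real-analytic, because it is built from matrix exponentials and products. This analyticity also justifies the mixed partial $\partial_\lambda\nabla_{\bm\theta}F$ appearing in hypothesis~5.

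\textbf{Step 4: apply the IFT and restrict.} The implicit function theorem gives neighbourhoods $\mathcal{U}\ni\bm\theta_D^\star$ and $(-\lambda_0,\lambda_0)$ together with a unique $C^1$ map $\lambda\mapsto\bm\theta(\lambda)\in\mathcal{U}$ satisfying $\Phi(\bm\theta(\lambda),\lambda)=\bm 0$ and $\bm\theta(0)=\bm\theta_D^\star$. I would then set $\lambda_{\max}:=\min\{\lambda_0/2,1\}$ and restrict to $[0,\lambda_{\max}]$. Uniqueness should be stated within $\mathcal{U}$, which is the standard IFT sense.

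\textbf{Step 5: initial velocity and its non-vanishing.} Differentiate the identity $\Phi(\bm\theta(\lambda),\lambda)=\bm 0$ at $\lambda=0$ by the chain rule:
\[
  \nabla^2_{\bm\theta\bm\theta}E_D(\bm\theta_D^\star)\,\bm\theta'(0)
  +\partial_\lambda\nabla_{\bm\theta}F\big|_{\lambda=0}=\bm 0 .
\]
Inverting the Hessian yields \eqref{eq:curve_deriv}. Because the inverse Hessian is nonsingular, its kernel is trivial, so hypothesis~5 (the mixed partial is nonzero) gives $\bm\theta'(0)\neq\bm 0$.

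\textbf{Optional: making $\lambda_{\max}$ quantitative.} One can add a lower bound on $\lambda_{\max}$, since it is natural given the $O(\sigma)$ claims elsewhere. The route is a quantitative IFT (Newton–Kantorovich style). First, by continuity $D_{\bm\theta}\Phi\succeq(\mu/2)\mathbf{I}$ on a ball of radius $r$. Next, bound $\|\partial_\lambda\Phi\|$ using $\|\mathbf{G}_j\|=1/2$ and $\|\mathbf{H}\|$. This should give $\lambda_{\max}\gtrsim \mu r/(\|\mathbf{H}\|\,\|\bm\phi^\star\|_1)$.

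\textbf{Main obstacle.} The delicate point is not the IFT itself, which is routine, but the smoothness of the homotopy. Since $\mathbf{G}(\bm\phi)$ is defined as a logarithm of the two-layer product $\mathbf{W}(\bm\phi)$, $F$ must be $C^2$ in $\lambda$ uniformly for $\bm\theta$ near $\bm\theta_D^\star$. Here $\bm\phi^\star$ is fixed, so $\mathbf{G}(\bm\phi^\star)$ is a fixed Hermitian matrix, and $\lambda\mapsto e^{-i\lambda\mathbf{G}}$ is entire. I would therefore invoke the explicit tensor-product construction of \Cref{app:homotopy_extended} to choose a Hermitian logarithm qubit-by-qubit. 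This choice avoids any branch-cut ambiguity, and with it the smoothness hypothesis is automatically satisfied.
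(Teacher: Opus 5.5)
Your argument is correct and is essentially the paper's proof: you apply the implicit function theorem to $\Phi(\bm\theta,\lambda)=\nabla_{\bm\theta}F(\bm\theta,\bm\phi^\star;\lambda)$, get the root condition from identity embedding plus first-order optimality, get invertibility from $\nabla^2_{\bm\theta\bm\theta}E_D(\bm\theta_D^\star)\succeq\mu\mathbf{I}$, restrict the IFT neighbourhood to $[0,\lambda_{\max}]$, and use the chain rule with condition~5 to obtain $\bm\theta'(0)\neq\bm 0$ (the paper's extra Weyl-inequality step, showing that the tracked point remains a nondegenerate minimum, goes beyond what the statement asserts). Two side remarks are slightly off but play no role in the proof: $\bm\phi^\star\neq\bm 0$ does not by itself force $\mathbf{G}(\bm\phi^\star)\neq\mathbf{0}$, since equal jitters on the two paired layers still cancel and it is condition~5 that implies nontriviality; and your optional lower bound on $\lambda_{\max}$ hides a $\sqrt{3nD}$ factor from aggregating the gradient components, and it needs the $(\mu/2)$ Hessian bound uniformly for $\lambda\in[0,\lambda_{\max}]$, not just at $\lambda=0$.
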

Conditions 1-3 are standard for IFT applications.
Condition~4 is ensured by the jitter ($\sigma>0$).
Condition~5 is generically satisfied: the coupling map
$\bm\phi^\star\mapsto\partial_\lambda\nabla_{\bm\theta}
F(\bm\theta_D^\star,\bm\phi^\star;\lambda)\big|_{\lambda=0}$ is real-analytic
in $\bm\phi^\star$ and, absent a Pauli-commutation symmetry of $\mathbf{H}$, is
not identically zero, so its zero set has Lebesgue measure zero (see
\Cref{rem:nontrivial_genericity_app}).
\begin{proof}[Proof Sketch]
Apply the Implicit Function Theorem to the stationarity map
$\Phi(\bm\theta,\lambda)
\coloneqq\nabla_{\bm\theta}F(\bm\theta,\bm\phi^\star;\lambda)$.
At $(\bm\theta_D^\star,0)$ the two IFT hypotheses hold:
(a)~by the identity embedding~(iii),
$\Phi(\bm\theta_D^\star,0)=\nabla E_D(\bm\theta_D^\star)=\bm 0$
(first-order optimality of the depth-$D$ minimizer);
(b)~the $\bm\theta$-Jacobian
$\partial_{\bm\theta}\Phi(\bm\theta_D^\star,0)
=\nabla^2 E_D(\bm\theta_D^\star)\succeq\mu \mathbf{I}\succ 0$ is invertible
with inverse norm $\le 1/\mu$.  The IFT then yields a unique $C^1$
curve $\bm\theta(\lambda)$ with $\bm\theta(0)=\bm\theta_D^\star$ and
$\Phi(\bm\theta(\lambda),\lambda)=\bm0$.  Differentiating this identity
at $\lambda=0$ and inverting the Hessian gives the
velocity~\eqref{eq:curve_deriv}, which is nonzero by the non-trivial
coupling~(v).  Finally, since $\nabla^2_{\bm\theta\bm\theta}F$ is
continuous and $\succeq\mu\mathbf{I}$ at $\lambda=0$, Weyl's inequality keeps
it $\succeq(\mu/2)\mathbf{I}$ for small $\lambda$, so the tracked point stays
a nondegenerate minimum.
See \Cref{app:continuation_complete} for the complete five-step proof
with explicit verification of all IFT hypotheses, the non-triviality
check, and discussion of the degenerate case $\bm\phi=\bm 0$.
\end{proof}

\begin{remark}
\Cref{thm:continuation} is the formal counterpart of the intuition
behind IP-PDT: an identity-pair expansion does not drop the optimizer
into a fresh, unrelated landscape---it gently \emph{deforms} the one it
has already solved.  The tracked minimum moves at speed
$\|\bm\theta'(0)\|=O(\|\bm\phi^\star\|/\mu)=O(\sigma/\mu)$, so a smaller
jitter $\sigma$ keeps the new stage nearer the previous solution.  The
statement is deliberately local: it guarantees that a good minimum
\emph{persists} under expansion, not that gradient descent will
\emph{find} it.  That remaining gap---persistence versus
findability---is exactly what the next two subsections close, from two
complementary directions: an unconditional monotonicity guarantee
(\Cref{thm:monotone}) and a convergence-rate guarantee
(\Cref{thm:basin}).
\end{remark}

\medskip\noindent\textbf{Perturbation bound on energy deviation.}
Applying the mean-value inequality to the unitary conjugation
$t\mapsto e^{it\mathbf{G}(\bm\phi)}\mathbf{H}e^{-it\mathbf{G}(\bm\phi)}$
(\Cref{app:perturb}, with \emph{no} truncation of the series):
\begin{equation}\label{eq:perturb}
  |E_{D+2}(\bm\theta,\bm\phi)-E_D(\bm\theta)|
  \le 2\|\mathbf{H}\| \|\mathbf{G}(\bm\phi)\|
  \le 2\|\mathbf{H}\|\Bigl(\sum_{j=1}^{6n}|\phi_j| \|\mathbf{G}_j\|
     +\|\mathbf{R}(\bm\phi)\|\Bigr).
\end{equation}
The first inequality is \emph{exact}.  The second uses the generator
decomposition $\mathbf{G}(\bm\phi)=\sum_j\phi_j\mathbf{G}_j+\mathbf{R}(\bm\phi)$, where the
second-order remainder is bounded by $\|\mathbf{R}(\bm\phi)\|\le C_R\|\bm\phi\|^2$
for a finite generator-dependent constant $C_R$ (\Cref{app:perturb}); we
\emph{retain} $\mathbf{R}$ as an additive term in every downstream bound rather
than discarding it.
\emph{Application to jittered identity pairs.}  Under our
initialization, the new-layer parameters are
$\phi_j\sim\mathcal{N}(0,\sigma^2)$, so $\bbE[|\phi_j|]=\sigma\sqrt{2/\pi}$,
$\bbE\|\bm\phi\|^2=6n\sigma^2$, and $\|\mathbf{G}_j\|=\tfrac12$.  Taking
expectations in~\eqref{eq:perturb},
\[
  \bbE\bigl[|E_{D+2}-E_D|\bigr]
  \le 6n\sigma\|\mathbf{H}\|\sqrt{2/\pi}
    +  2\|\mathbf{H}\| C_R 6n\sigma^2 ,
\]
i.e.\ a deviation linear in $\sigma$ with an explicitly bounded quadratic
correction (not dropped).  The complete derivation appears in
\Cref{app:perturb}.
\subsection{From theoretical monotonicity to practical monotonicity}
\label{sec:practical_mono}
The continuation result above is conditional---it presumes the deeper
minimum is actually reached.  We now give a guarantee that needs no such
presumption.  The acceptance rule makes the \emph{produced} energies
monotone by construction: if a stage fails to improve, IP-PDT falls back
to the identity-embedded previous solution, which by the IEP has
\emph{exactly} the previous stage's energy.
\begin{theorem}[Monotone non-increase of produced energies]
\label{thm:monotone}
  Under the acceptance rule (accept if energy improved, else fall back
  to the embedded parameters $\iota_D(\hat{\bm\theta}^{(D)})$), for
  all stages~$D$:
  \begin{equation}\label{eq:monotone_iter}
    E_{D+2}\bigl(\hat{\bm\theta}^{(D+2)}\bigr)
    \le E_D\bigl(\hat{\bm\theta}^{(D)}\bigr).
  \end{equation}
\end{theorem}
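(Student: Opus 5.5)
The proof is a direct case analysis on the acceptance rule. The only earlier ingredients needed are the identity-embedding property (\Cref{def:iep}) and its concrete realization in \Cref{prop:nested}.

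First I will fix the setup. Let $\hat{\bm\theta}^{(D)}$ be the parameters output at stage $D$. Let $\tilde{\bm\theta}^{(D+2)}$ be the candidate returned by the $T$ gradient steps at depth $D+2$, from whatever jittered initialization is used. The acceptance rule sets
\[
  \hat{\bm\theta}^{(D+2)}=
  \begin{cases}
    \tilde{\bm\theta}^{(D+2)}, & \text{if } E_{D+2}(\tilde{\bm\theta}^{(D+2)})\le E_D(\hat{\bm\theta}^{(D)}),\\
    \iota_D(\hat{\bm\theta}^{(D)}), & \text{otherwise}.
  \end{cases}
\]
The comparison is always against the previous produced energy $E_D(\hat{\bm\theta}^{(D)})$, not against the energy at the jittered starting point. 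This matters because the jitter shifts the starting energy by $O(\sigma)$, as in \eqref{eq:perturb}, and a comparison against that point would not give the claim.

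Next come the two cases. In the accept case, \eqref{eq:monotone_iter} holds directly by the acceptance criterion. In the fallback case, I use the explicit embedding $\iota_D(\bm\theta)=(\bm\theta,\bm\theta_{\mathrm{ref}},\bm\theta_{\mathrm{ref}})$ from \eqref{eq:nested_iota}, with zero jitter. By \eqref{eq:nested_collapse}, $\mathbf{U}_{D+2}^{\mathrm{SE}}(\iota_D(\hat{\bm\theta}^{(D)}))=\mathbf{U}_D^{\mathrm{SE}}(\hat{\bm\theta}^{(D)})$ exactly, as an operator. Applying both sides to $|0^n\rangle$ and using the energy formula \eqref{eq:vqe_energy} gives $E_{D+2}(\iota_D(\hat{\bm\theta}^{(D)}))=E_D(\hat{\bm\theta}^{(D)})$. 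So the inequality holds with equality in this case.

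The result is a one-step inequality valid for every stage. Chaining it along the schedule $D_1<D_2<\cdots<D_M$ then gives the full non-increasing sequence. There is no landscape assumption, since neither case uses any property of the gradient iterates.

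The main point needing care is the parity bookkeeping in the fallback. The two appended layers must really form a cancelling pair $\mathbf{W}_{D+1}\mathbf{W}_D=\mathbf{I}$ whatever the parity of $D$. \Cref{prop:nested} already handles this via \Cref{prop:block_inv}, using the two-sided inverse relation between $\mathcal{R}^{\mathrm{fwd}}$ and $\mathcal{R}^{\mathrm{inv}}$. I will cite it rather than redo it. The other thing to state explicitly is that the fallback point is the unjittered embedding, so that the equality is exact rather than holding only up to $O(\sigma)$.
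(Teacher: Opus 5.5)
Your proposal is correct and follows the paper's proof essentially verbatim: the same two-branch case analysis on the acceptance rule, with the comparison made against the previous produced energy $E_D(\hat{\bm\theta}^{(D)})$, the accept branch immediate from the criterion, and the fall-back branch giving equality through the unjittered identity embedding $\iota_D(\hat{\bm\theta}^{(D)})=(\hat{\bm\theta}^{(D)},\bm\theta_{\mathrm{ref}},\bm\theta_{\mathrm{ref}})$ of \Cref{prop:nested} and \eqref{eq:energy_preserve_exact}. Your explicit remarks on parity and on the fall-back point carrying zero jitter are sound and consistent with the paper.
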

\begin{proof}
Fix a stage and let $\hat{\bm\theta}^{(D)}$ be the accepted depth-$D$
parameters with produced energy $E_D(\hat{\bm\theta}^{(D)})$.  Expanding
to depth $D{+}2$ and running the optimizer from the identity-pair
initialization yields a candidate $\tilde{\bm\theta}^{(D+2)}$.  The
acceptance rule produces the depth-$(D{+}2)$ parameters
\begin{equation}\label{eq:accept_def}
  \hat{\bm\theta}^{(D+2)}=
  \begin{cases}
    \tilde{\bm\theta}^{(D+2)},
      & \text{if } E_{D+2}\bigl(\tilde{\bm\theta}^{(D+2)}\bigr)
        \le E_D\bigl(\hat{\bm\theta}^{(D)}\bigr)
        \quad(\text{accept}),\\[2pt]
    \iota_D\bigl(\hat{\bm\theta}^{(D)}\bigr),
      & \text{otherwise}\quad(\text{fall back}).
  \end{cases}
\end{equation}
We verify~\eqref{eq:monotone_iter} in each branch.

\emph{Accept branch.}  Acceptance is triggered exactly when
$E_{D+2}(\tilde{\bm\theta}^{(D+2)})\le E_D(\hat{\bm\theta}^{(D)})$.
Since in this branch
$\hat{\bm\theta}^{(D+2)}=\tilde{\bm\theta}^{(D+2)}$, substituting gives
\begin{equation}\label{eq:accept_branch}
  E_{D+2}\bigl(\hat{\bm\theta}^{(D+2)}\bigr)
  =E_{D+2}\bigl(\tilde{\bm\theta}^{(D+2)}\bigr)
  \le E_D\bigl(\hat{\bm\theta}^{(D)}\bigr),
\end{equation}
which is the conclusion~\eqref{eq:monotone_iter}.

\emph{Fall-back branch.}  Here
$\hat{\bm\theta}^{(D+2)}=\iota_D(\hat{\bm\theta}^{(D)})
=(\hat{\bm\theta}^{(D)},\bm\theta_{\mathrm{ref}},\bm\theta_{\mathrm{ref}})$
is the embedding of the depth-$D$ solution with the two appended layers
frozen at the identity pair (\Cref{def:id_pair}).  The
identity-embedding property---\Cref{prop:nested} in reachable-set form,
or its energy specialization~\eqref{eq:energy_preserve_exact}---states
that this embedding reproduces the depth-$D$ state \emph{exactly}, hence
preserves the energy exactly:
\begin{equation}\label{eq:fallback_branch}
  E_{D+2}\bigl(\hat{\bm\theta}^{(D+2)}\bigr)
  =E_{D+2}\bigl(\iota_D(\hat{\bm\theta}^{(D)})\bigr)
  =E_D\bigl(\hat{\bm\theta}^{(D)}\bigr)
  \le E_D\bigl(\hat{\bm\theta}^{(D)}\bigr),
\end{equation}
where the middle equality is~\eqref{eq:energy_preserve_exact} and the
final inequality is equality.  Thus~\eqref{eq:monotone_iter} holds with
equality on this branch.

Since the two branches in~\eqref{eq:accept_def} are exhaustive
and~\eqref{eq:monotone_iter} holds in each,
$E_{D+2}(\hat{\bm\theta}^{(D+2)})\le E_D(\hat{\bm\theta}^{(D)})$ for
every stage.  No property of the Hamiltonian, the energy landscape, or
the optimizer that produced $\tilde{\bm\theta}^{(D+2)}$ enters the
argument: the fall-back branch caps the produced energy unconditionally
through the exact embedding~\eqref{eq:energy_preserve_exact}, so the
guarantee is independent of \Cref{ass:local_reg} and of whether any
deeper minimum is reached.
\end{proof}
\begin{remark}
This theorem holds unconditionally: no assumptions on the landscape,
Hamiltonian, or optimizer quality are needed.  Na\"ive PDT cannot
implement such a lossless fallback because the HEA architecture has no
identity embedding.  See \Cref{app:monotone_extended} for the extended
discussion and contrast with na\"ive PDT.
\end{remark}
\subsection{Basin preservation under gradient descent}
\label{sec:basin}
\begin{assumption}[Local regularity]\label{ass:local_reg}
  There exists a bounded convex neighbourhood $\cN_D$ of
  $\bm\theta_D^\star$ on which $E_D$ is $L_D$-smooth and $\mu_D$-strongly
  convex; equivalently,
  $\mu_D\mathbf{I}\preceq\nabla^2 E_D(\bm\theta)\preceq L_D\mathbf{I}$ for all
  $\bm\theta\in\cN_D$.  (Boundedness is necessary: the globally bounded,
  $4\pi$-periodic energy $E_D$ cannot be strongly convex on an unbounded
  set, so strong convexity is a genuinely local property here.)
\end{assumption}
We motivate local strong convexity in \Cref{app:pl_motivation}.

\begin{theorem}[Basin preservation for old parameters]
\label{thm:basin}
  Adopt \Cref{ass:local_reg} and the IEP, with new-layer parameters
  fixed at $\bm\phi^\star$ ($\|\bm\phi^\star\|=O(\sigma)$); let
  $\bm\theta_{D+2}^\star$ minimize $E_{D+2}(\cdot,\bm\phi^\star)$ on
  $\cN_D$.  Assume the warm start $\bm\theta^{(0)}$ is close enough to the
  minimizer that the Euclidean ball
  $\overline{B}\bigl(\bm\theta_{D+2}^\star,\|\bm\theta^{(0)}
  -\bm\theta_{D+2}^\star\|\bigr)\subseteq\cN_D$, and let the jitter obey
  $\sigma\le\min(\sigma_0,\sigma_1)$ for the explicit curvature threshold
  $\sigma_0=\mu_D/(2c)>0$---with $c=O(n^{3/2}D\|\mathbf{H}\|)$ the
  Hessian-perturbation constant of~\eqref{eq:basin_sigma_threshold}---and the
  interiority threshold $\sigma_1$ of \Cref{app:basin_complete} (ensuring
  $\bm\theta_{D+2}^\star$ is interior to $\cN_D$).  Then $\mu_{D+2}\ge\mu_D/2>0$,
  $L_{D+2}\le L_D+c\sigma$, and for every step size
  $\eta\in(0,1/L_{D+2}]$ the gradient-descent iterates remain in $\cN_D$
  and satisfy
  \begin{equation}\label{eq:linear_rate}
    E_{D+2}(\bm\theta^+,\bm\phi^\star)-E_{D+2}(\bm\theta_{D+2}^\star,\bm\phi^\star)
    \le(1-\eta\mu_{D+2})\bigl(E_{D+2}(\bm\theta,\bm\phi^\star)
    -E_{D+2}(\bm\theta_{D+2}^\star,\bm\phi^\star)\bigr).
  \end{equation}
\end{theorem}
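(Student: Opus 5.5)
The plan is a Hessian-perturbation argument, followed by the standard smooth, strongly convex gradient-descent analysis carried out inside $\cN_D$.

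\emph{Step 1: Hessian perturbation.} Write $E_{D+2}(\bm\theta,\bm\phi^\star)=\langle 0^n|\mathbf{U}_D^{\mathrm{SE}}(\bm\theta)^\dagger \mathbf{W}(\bm\phi^\star)^\dagger\mathbf{H}\mathbf{W}(\bm\phi^\star)\mathbf{U}_D^{\mathrm{SE}}(\bm\theta)|0^n\rangle$. This equals $E_D(\bm\theta)$ with $\mathbf{H}$ replaced by $\mathbf{H}'=\mathbf{W}^\dagger\mathbf{H}\mathbf{W}$. The map $\mathbf{H}\mapsto\nabla^2_{\bm\theta\bm\theta}\langle\psi(\bm\theta)|\mathbf{H}|\psi(\bm\theta)\rangle$ is linear. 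Each entry is a sum of a bounded number of nested commutator expectations with Pauli generators of norm $1/2$, so its operator norm is at most $3nD\cdot\|\mathbf{H}'-\mathbf{H}\|$. A Frobenius-to-operator conversion may cost an extra $\sqrt{n}$, which is where the stated scaling $c=O(n^{3/2}D\|\mathbf{H}\|)$ comes from. By the same mean-value argument as in \eqref{eq:perturb}, $\|\mathbf{H}'-\mathbf{H}\|\le 2\|\mathbf{H}\|\|\mathbf{G}(\bm\phi^\star)\|=O(\|\mathbf{H}\|\,n\,\sigma)$. Together these give $\|\nabla^2E_{D+2}(\bm\theta,\bm\phi^\star)-\nabla^2E_D(\bm\theta)\|\le c\sigma$ uniformly in $\bm\theta$. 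Weyl's inequality then yields, on $\cN_D$,
\[
(\mu_D-c\sigma)\mathbf{I}\preceq\nabla^2E_{D+2}\preceq(L_D+c\sigma)\mathbf{I}.
\]
For $\sigma\le\sigma_0=\mu_D/(2c)$ this gives $\mu_{D+2}\ge\mu_D/2$ and $L_{D+2}\le L_D+c\sigma$.

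\emph{Step 2: interior minimizer.} Strict convexity on the convex set $\cN_D$ gives a unique minimizer $\bm\theta_{D+2}^\star$. To show it is interior (so that $\nabla E_{D+2}(\bm\theta_{D+2}^\star)=\bm 0$), bound $\|\nabla E_{D+2}(\bm\theta_D^\star,\bm\phi^\star)\|=O(\sigma)$ by the same linear-in-$\mathbf{H}$ argument. Strong convexity then places a stationary point within distance $O(\sigma)/\mu_{D+2}$ of $\bm\theta_D^\star$. This point is interior once $\sigma\le\sigma_1$, with $\sigma_1$ chosen relative to $\mathrm{dist}(\bm\theta_D^\star,\partial\cN_D)$.

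\emph{Step 3: invariance and linear rate.} Let $r_0=\|\bm\theta^{(0)}-\bm\theta_{D+2}^\star\|$. For $\eta\le 1/L_{D+2}$, the gradient step is nonexpansive toward the minimizer on $\cN_D$. Concretely, $\bm\theta^+-\bm\theta^\star=(\mathbf{I}-\eta\bar{\mathbf{A}})(\bm\theta-\bm\theta^\star)$ with $\bar{\mathbf{A}}=\int_0^1\nabla^2E_{D+2}$ along the segment. The segment lies in $\cN_D$ by convexity, and $\bar{\mathbf{A}}$ has spectrum in $[\mu_{D+2},L_{D+2}]$, so $\|\bm\theta^+-\bm\theta^\star\|\le(1-\eta\mu_{D+2})\|\bm\theta-\bm\theta^\star\|$. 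By induction every iterate stays in $\overline{B}(\bm\theta^\star_{D+2},r_0)\subseteq\cN_D$.

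For the energy gap, use the descent lemma: $E(\bm\theta^+)\le E(\bm\theta)-\tfrac{\eta}{2}\|\nabla E\|^2$. Combine it with the Polyak--{\L}ojasiewicz inequality $\|\nabla E\|^2\ge 2\mu_{D+2}(E-E^\star)$, which follows from strong convexity on $\cN_D$ because the minimizer is interior and the segment stays inside. This gives the factor $(1-\eta\mu_{D+2})$ in \eqref{eq:linear_rate}.

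The main obstacle is Step 1: obtaining a \emph{uniform} (in $\bm\theta$) operator-norm bound on the Hessian difference with explicit constant $c$. I would handle it by expressing each second derivative as $\langle 0^n|\,[\,\cdot\,,[\,\cdot\,,\widetilde{\mathbf{H}}]]\,|0^n\rangle$-type terms with $\widetilde{\mathbf{H}}=\mathbf{H}'-\mathbf{H}$ conjugated by unitaries. Each such term is bounded by $\|\widetilde{\mathbf{H}}\|$ times the product of the generator norms, and the entrywise bounds are then summed into an operator-norm bound.
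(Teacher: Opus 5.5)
Your proposal is correct and matches the paper's proof step for step. The shared steps are: the $\bm\theta$-independent $\Delta\mathbf{H}=\mathbf{W}^\dagger\mathbf{H}\mathbf{W}-\mathbf{H}$ giving entrywise (hence Frobenius) Hessian bounds, transfer of $\mu_D$ and $L_D$ via the Hessian deviation, interiority of $\bm\theta_{D+2}^\star$ from the $O(\sigma)$ gradient at $\bm\theta_D^\star$, iterate invariance in the ball, and the descent lemma plus PL. Your averaged-Hessian contraction $\|\mathbf{I}-\eta\bar{\mathbf{A}}\|\le 1-\eta\mu_{D+2}$ is simply a more elementary substitute for the paper's co-coercivity step.

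The one slip is the dimension bookkeeping. Passing from Frobenius to operator norm costs no extra factor: your entrywise count already gives $3nD\|\Delta\mathbf{H}\|$, and with your $\|\Delta\mathbf{H}\|=O(n\|\mathbf{H}\|\sigma)$ that is $n^2D$, not $n^{3/2}D$. The $n^{3/2}$ instead comes from $\sum_j|\phi^\star_j|\le\sqrt{6n}\,\|\bm\phi^\star\|$ with $\|\bm\phi^\star\|=O(\sigma)$, which gives $\|\Delta\mathbf{H}\|=O(\sqrt{n}\,\|\mathbf{H}\|\sigma)$. This does not affect the argument, since $c$ is a fixed constant for a given problem size.
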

\begin{proof}
Write $\widetilde{E}(\bm\theta)\coloneqq
E_{D+2}(\bm\theta,\bm\phi^\star)$.
By the perturbation bound~\eqref{eq:perturb}:
$|\widetilde{E}(\bm\theta)-E_D(\bm\theta)|=O(\sigma)$.
Differentiating:
\begin{align}
  \|\nabla\widetilde{E}(\bm\theta)
  -\nabla E_D(\bm\theta)\|
  &=O(\sigma),\label{eq:grad_diff}\\
  \|\nabla^2\widetilde{E}(\bm\theta)
  -\nabla^2 E_D(\bm\theta)\|
  &=O(\sigma).\label{eq:hess_diff}
\end{align}
To see why~\eqref{eq:grad_diff} holds, note that
$\widetilde{E}(\bm\theta) - E_D(\bm\theta)
= \langle\boldsymbol{\psi}_D(\bm\theta)|(\mathbf{W}^\dagger \mathbf{H} \mathbf{W} - \mathbf{H})|\boldsymbol{\psi}_D(\bm\theta)
\rangle$,
where $\|\mathbf{W}^\dagger \mathbf{H} \mathbf{W} - \mathbf{H}\| = O(\sigma)$
by~\eqref{eq:perturb}.  Each parameter derivative
$\partial_{\theta_j}|\boldsymbol{\psi}_D\rangle$ has norm $\le\tfrac12$ (a
single half-Pauli generator; \Cref{app:grad_hess}, \eqref{eq:dpsi_bound}),
so by the product rule and Cauchy--Schwarz each component obeys
$|\partial_{\theta_j}(\widetilde{E} - E_D)|\le\|\mathbf{W}^\dagger \mathbf{H} \mathbf{W} - \mathbf{H}\|
= O(\sigma)$, and aggregating the $3nD$ components,
$\|\nabla(\widetilde{E} - E_D)\| \le \sqrt{3nD}\,\|\mathbf{W}^\dagger \mathbf{H} \mathbf{W} - \mathbf{H}\|
= O(\sqrt{nD}\,\sigma)$.  The Hessian bound~\eqref{eq:hess_diff}
follows by a second differentiation of the same identity
(\Cref{app:grad_hess}, Step~4).
A complete derivation appears in \Cref{app:grad_hess}.

\emph{Constant transfer.}  The smoothness constant can only grow by the
Hessian deviation: for $\bm\theta,\bm\theta'\in\cN_D$, the mean-value
inequality applied to $\bm\theta\mapsto(\nabla\widetilde{E}-\nabla E_D)(\bm\theta)$
(whose Jacobian is the Hessian difference bounded by $c\sigma$
in~\eqref{eq:hess_diff}), together with the $L_D$-smoothness of $E_D$, gives
$\|\nabla\widetilde{E}(\bm\theta)-\nabla\widetilde{E}(\bm\theta')\|
\le(L_D+c\sigma)\|\bm\theta-\bm\theta'\|$, so $L_{D+2}\le L_D+c\sigma$
(the same Hessian-perturbation constant $c$ as for the curvature, since
both come from the Weyl bound $\|\nabla^2\widetilde{E}-\nabla^2 E_D\|\le c\sigma$).
For the curvature constant, \Cref{ass:local_reg} gives
$\nabla^2 E_D\succeq\mu_D\mathbf{I}$ on $\cN_D$---this is precisely the role
of strong convexity, since a Hessian lower bound, unlike the PL
inequality, is preserved under the $O(\sigma)$ perturbation.  Weyl's
eigenvalue perturbation inequality applied to the Hermitian
difference~\eqref{eq:hess_diff} shifts every eigenvalue by at most
$\|\nabla^2\widetilde{E}-\nabla^2 E_D\|=O(\sigma)$, so
$\nabla^2\widetilde{E}\succeq(\mu_D-c\sigma)\mathbf{I}$ for a constant
$c>0$.  Writing $\mu_{D+2}\coloneqq\mu_D-c\sigma$, the smallness
condition $\sigma\le\sigma_0\coloneqq\mu_D/(2c)$ gives
\begin{equation}\label{eq:sigma_threshold_main}
  \mu_{D+2}=\mu_D-c\sigma\ge\mu_D/2>0,
\end{equation}
the regime in which the statement holds; the $\mu_{D+2}$-strong convexity
of $\widetilde{E}$ then yields the $\mu_{D+2}$-PL inequality.
Applying the descent lemma ($L_{D+2}$-smoothness) with the step
restriction $\eta\le 1/L_{D+2}$ (which makes the coefficient
$1-L_{D+2}\eta/2\ge 1/2$), followed by the PL inequality,
\begin{align}
  \widetilde{E}(\bm\theta^+)
  &\le \widetilde{E}(\bm\theta)
  -\eta\Bigl(1-\tfrac{L_{D+2}\eta}{2}\Bigr)
   \|\nabla \widetilde{E}(\bm\theta)\|^2
  \le \widetilde{E}(\bm\theta)
  -\frac{\eta}{2}\|\nabla \widetilde{E}(\bm\theta)\|^2 \notag\\
  &\le \widetilde{E}(\bm\theta)
   -\eta\mu_{D+2}\bigl(\widetilde{E}(\bm\theta)
   -\widetilde{E}(\bm\theta_{D+2}^\star)\bigr).
\end{align}
Subtracting $\widetilde{E}(\bm\theta_{D+2}^\star)$ from both sides
gives~\eqref{eq:linear_rate} with contraction
$(1-\eta\mu_{D+2})\in[0,1)$ (the lower bound $\ge 0$ follows from
$\eta\le 1/L_{D+2}\le 1/\mu_{D+2}$).
The complete five-step proof with all intermediate inequalities,
smoothness and PL transfer, and energy proximity analysis appears in
\Cref{app:basin_complete}.
\end{proof}

\medskip\noindent\textbf{Convergence rate.}
Iterating over $T$ steps:
\begin{equation}\label{eq:T_step_convergence}
  E_{D+2}(\bm\theta^{(T)},\bm\phi^\star)-E_{D+2}(\bm\theta_{D+2}^\star,\bm\phi^\star)
  \le(1-\eta\mu_{D+2})^T
   \bigl[E_{D+2}(\bm\theta^{(0)},\bm\phi^\star)-E_{D+2}(\bm\theta_{D+2}^\star,\bm\phi^\star)\bigr].
\end{equation}
\begin{remark}
\Cref{eq:T_step_convergence} says each stage shrinks its energy gap
geometrically, at a rate set by $\mu_{D+2}$.  The substance of
\Cref{thm:basin} is that this rate is \emph{inherited}: the regularity
constants of the expanded landscape differ from those of the previous
stage by only $O(\sigma)$.  Identity-pair initialization is what makes
this work---it places the optimizer \emph{inside} the basin it has
already descended, rather than at a random point of a deeper landscape,
so the favourable geometry of stage~$D$ carries over to stage~$D{+}2$.
This is the precise sense in which a shock-free expansion ``preserves''
trainability.  The ball-containment hypothesis is mild for the same
reason: the warm start $\bm\theta^{(0)}=\iota_D(\hat{\bm\theta}^{(D)})$
differs from the new minimizer $\bm\theta_{D+2}^\star$ by only
$O(\sigma/\mu_D)$ (the displacement bound established in
\Cref{app:basin_complete}), so the ball
$\overline{B}(\bm\theta_{D+2}^\star,\|\bm\theta^{(0)}-\bm\theta_{D+2}^\star\|)$
has radius $O(\sigma)$ and sits inside $\cN_D$ whenever the previous
stage has converged ($\hat{\bm\theta}^{(D)}\approx\bm\theta_D^\star$) and
$\sigma\le\min(\sigma_0,\sigma_1)$, with $\sigma_1$ the interiority threshold
of \Cref{app:basin_complete}---so the hypothesis follows from warm-starting
under these mild conditions rather than being imposed separately.
\end{remark}

\begin{remark}[Scope limitation: joint dynamics]
\label{rem:joint_dynamics}
  \Cref{thm:basin} establishes convergence for the old parameters
  $\bm\theta$ with new-layer parameters fixed.  Joint optimization
  requires additional assumptions and is left to future work.
\end{remark}
\begin{remark}[What \Cref{thm:basin} does \emph{not} claim]
\label{rem:basin_scope}
  \Cref{thm:basin} is an \emph{inheritance} result, not a
  barren-plateau-avoidance result.  \emph{If} a stage is
  well-conditioned---locally $\mu_D$-strongly convex on $\cN_D$---a
  shock-free expansion transfers that conditioning to the next stage up
  to $O(\sigma)$.  The hypothesis can fail: in the barren regime $\mu_D$
  is exponentially small in $n$, so both the admissible jitter
  $\sigma_0=\mu_D/(2c)$ and the per-step progress $\eta\mu_{D+2}$ shrink
  accordingly and the guarantee becomes vacuous.  What the theorem buys
  is a \emph{per-stage} decoupling---for the old parameters with the new
  layers held at $\bm\phi^\star$ (\Cref{rem:joint_dynamics} fences the
  joint case): each warm-started expansion re-optimizes within an
  \emph{inherited} basin rather than a fresh deep landscape, so the
  per-stage difficulty need not compound with depth---\emph{provided} the
  shallow first stage is itself well-conditioned, where $\mu_D$ is most
  favourable and the single entangling layer limits the expressibility
  that drives plateaus.  Whether that shallow stage is itself
  plateau-free is a separate question.
\end{remark}
\subsection{Spectral-gap bound on ground-state fidelity}
\label{sec:gap_fidelity}
The results so far control the \emph{energy} that IP-PDT produces.  The
physical target, however, is the ground \emph{state}.  The final result
closes that loop: it shows that a small energy error certifies high
ground-state fidelity, with the exchange rate between the two set by the
spectral gap~$\Delta$.
\begin{theorem}[Gap--fidelity sandwich]\label{thm:gap_fidelity}
  For a spectral gap $\Delta=\lambda_2-\lambda_1>0$ (so the ground state
  $\mathbf{v}_1$ is non-degenerate) and fidelity
  $F(\boldsymbol{\psi})=|\langle \mathbf{v}_1|\boldsymbol{\psi}\rangle|^2$:
  \begin{equation}\label{eq:sandwich}
    \Delta\bigl(1-F(\boldsymbol{\psi})\bigr)
    \le\langle\boldsymbol{\psi}|\mathbf{H}|\boldsymbol{\psi}\rangle-\lambda_1
    \le(\lambda_d-\lambda_1)\bigl(1-F(\boldsymbol{\psi})\bigr).
  \end{equation}
\end{theorem}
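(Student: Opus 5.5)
The plan is to expand $|\boldsymbol{\psi}\rangle$ (assumed normalized, $\|\boldsymbol{\psi}\|=1$) in the orthonormal eigenbasis of~\eqref{eq:eigdecomp} and reduce both inequalities to elementary bounds on a convex combination of eigenvalues. Write $|\boldsymbol{\psi}\rangle=\sum_{i=1}^d c_i|\mathbf{v}_i\rangle$ with $c_i=\langle\mathbf{v}_i|\boldsymbol{\psi}\rangle$. Set $p_i:=|c_i|^2\ge 0$. Normalization gives $\sum_i p_i=1$ (Parseval), and by definition $F(\boldsymbol{\psi})=p_1$, so $1-F(\boldsymbol{\psi})=\sum_{i\ge 2}p_i$.

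Next I compute the energy. Using $\mathbf{H}|\mathbf{v}_i\rangle=\lambda_i|\mathbf{v}_i\rangle$ and orthonormality, $\langle\boldsymbol{\psi}|\mathbf{H}|\boldsymbol{\psi}\rangle=\sum_i\lambda_i p_i$. Subtracting $\lambda_1=\lambda_1\sum_i p_i$ gives the key identity
\begin{equation*}
  \langle\boldsymbol{\psi}|\mathbf{H}|\boldsymbol{\psi}\rangle-\lambda_1=\sum_{i=2}^{d}(\lambda_i-\lambda_1)\,p_i ,
\end{equation*}
where the $i=1$ term vanishes identically. Both inequalities in~\eqref{eq:sandwich} then follow termwise from the ordering $\lambda_1\le\lambda_2\le\cdots\le\lambda_d$. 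For each $i\ge 2$ we have $\Delta=\lambda_2-\lambda_1\le\lambda_i-\lambda_1\le\lambda_d-\lambda_1$. Multiplying by $p_i\ge0$ and summing over $i\ge2$ yields
\begin{equation*}
  \Delta\sum_{i\ge2}p_i\;\le\;\langle\boldsymbol{\psi}|\mathbf{H}|\boldsymbol{\psi}\rangle-\lambda_1\;\le\;(\lambda_d-\lambda_1)\sum_{i\ge2}p_i .
\end{equation*}
Substituting $\sum_{i\ge2}p_i=1-F(\boldsymbol{\psi})$ gives the claim.

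There is no genuine obstacle; the only points needing care are bookkeeping. First, the hypothesis $\Delta>0$ is used only to make $\mathbf{v}_1$, and hence $F$, well defined up to phase. The inequalities themselves hold even if $\Delta=0$, with a trivial lower bound. Second, when eigenvalues repeat, one should note that the argument uses only the chosen orthonormal eigenbasis, so ties among $\lambda_2,\dots,\lambda_d$ cause no issue. Finally, I would remark that both bounds are tight: the lower bound is attained by states supported on $\mathrm{span}\{\mathbf{v}_1,\mathbf{v}_2\}$, and the upper bound by states supported on $\mathrm{span}\{\mathbf{v}_1,\mathbf{v}_d\}$.
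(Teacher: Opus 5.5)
Your proposal is correct and follows essentially the same route as the paper's proof. Both expand $|\boldsymbol{\psi}\rangle$ in the eigenbasis, derive the identity $\langle\boldsymbol{\psi}|\mathbf{H}|\boldsymbol{\psi}\rangle-\lambda_1=\sum_{i\ge 2}(\lambda_i-\lambda_1)|c_i|^2$, and bound each term using $\Delta\le\lambda_i-\lambda_1\le\lambda_d-\lambda_1$ together with $\sum_{i\ge2}|c_i|^2=1-F(\boldsymbol{\psi})$.
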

\begin{proof}
Expand $|\boldsymbol{\psi}\rangle=\sum_i c_i|\mathbf{v}_i\rangle$.
Then $\langle\boldsymbol{\psi}|\mathbf{H}|\boldsymbol{\psi}\rangle-\lambda_1
=\sum_{i\ge 2}(\lambda_i-\lambda_1)|c_i|^2$.
Lower bound: $\lambda_i-\lambda_1\ge\Delta$ for $i\ge 2$.
Upper bound: $\lambda_i-\lambda_1\le\lambda_d-\lambda_1$.
A detailed step-by-step proof and applications to our benchmark
Hamiltonians appear in \Cref{app:gap_fidelity_extended}.
\end{proof}

\medskip\noindent\textbf{Energy--fidelity certificate.}
If $\langle\boldsymbol{\psi}|\mathbf{H}|\boldsymbol{\psi}\rangle-\lambda_1\le\varepsilon$, then
$F(\boldsymbol{\psi})\ge 1-\varepsilon/\Delta$.
Apply this with $\boldsymbol{\psi}=\boldsymbol{\psi}(\bm\theta^{(T)})$ at depth $D{+}2$, so that
$\langle\boldsymbol{\psi}|\mathbf{H}|\boldsymbol{\psi}\rangle=E_{D+2}(\bm\theta^{(T)},\bm\phi^\star)$, and split
the suboptimality through the depth-$(D{+}2)$ minimizer:
$\varepsilon=\bigl[E_{D+2}(\bm\theta^{(T)},\bm\phi^\star)
-E_{D+2}(\bm\theta_{D+2}^\star,\bm\phi^\star)\bigr]+r_{D+2}$, where
$r_{D+2}\coloneqq E_{D+2}(\bm\theta_{D+2}^\star,\bm\phi^\star)-\lambda_1\ge 0$
is the \emph{reachability residual}: the excess of the local minimizer's
energy over the true ground energy~$\lambda_1$, itself at least the
representability floor $E_{D+2}^\star-\lambda_1$ (positive whenever
$|\mathbf{v}_1\rangle\notin\cM_2$).  Combining with the linear
rate~\eqref{eq:T_step_convergence}:
\begin{equation}\label{eq:fidelity_convergence}
  F(\boldsymbol{\psi}(\bm\theta^{(T)}))
  \ge 1-\frac{1}{\Delta}\Bigl[
    (1-\eta\mu_{D+2})^T
    \bigl(E_{D+2}(\bm\theta^{(0)},\bm\phi^\star)
    -E_{D+2}(\bm\theta_{D+2}^\star,\bm\phi^\star)\bigr)
    +r_{D+2}\Bigr].
\end{equation}
\begin{remark}[Reachability caveat]\label{rem:reachability}
  The optimization term in~\eqref{eq:fidelity_convergence} vanishes as
  $T\to\infty$, but the reachability residual $r_{D+2}$ does not: the
  guarantee saturates at $1-r_{D+2}/\Delta$, useful only when
  $E_{D+2}(\bm\theta_{D+2}^\star,\bm\phi^\star)\approx\lambda_1$.
  By the Saturation Theorem, $\cM_D=\cM_2$ for $D\ge 2$; if
  $|\mathbf{v}_1\rangle\notin\cM_2$ then $r_{D+2}>0$ is bounded away from zero and
  the fidelity guarantee cannot reach~$1$.  The certificate is likewise
  uninformative when the spectral gap is small relative to the energy
  error: as $\Delta\to0$ (e.g.\ near criticality) the exchange rate
  $1/\Delta\to\infty$, so the bound $F\ge1-\varepsilon/\Delta$ is useful
  only for $\varepsilon\ll\Delta$.
\end{remark}
\section{Experiments}\label{sec:experiments}
We now put the theory to the test, and we organize the experiments as a
single line of inquiry rather than a checklist.  We first confirm the
structural fact the whole story rests on---that the single-entangler
manifold saturates (\Cref{sec:exp_saturation}).  We then show that this
saturation is precisely what lets added depth help without hurting: it
removes the initialization shock that plagues na\"ive growth
(\Cref{sec:exp_shock}). 
Next we isolate the resulting
\emph{trainability} gain from any expressibility gain
(\Cref{sec:exp_trainability}) and ask what it costs
(\Cref{sec:exp_cost}).  We then stress-test generality---across
nine Hamiltonians, against the more expressive HEA, and against
established VQE optimizers
(\Cref{sec:exp_extended,sec:exp_ceiling,sec:exp_sota,sec:exp_sota_budget})---and
probe the design choices behind IP-PDT
(\Cref{sec:exp_schedules,sec:exp_hyperparams,sec:ablation}).
Finally, we scale the benchmark from $n=6$ to $n=16$ qubits
(\Cref{sec:exp_n16}).  Throughout, every method receives the same
parameter count and the same step budget, so each comparison isolates
the effect of architecture and growth schedule alone.

Unless stated otherwise, experiments use $n=6$ qubits with exact
statevector simulation (PennyLane), comparing IP-PDT against up to five
competing VQE strategies and reporting mean $\pm$ standard deviation over
$10$ random seeds; the scaling study of \Cref{sec:exp_n16} uses a
Pauli-sum Hamiltonian representation to reach $n=16$.
\subsection{Setup}\label{sec:exp_setup}
\subsubsection{Hamiltonians}\label{sec:hamiltonians}
We consider nine Hamiltonians spanning a range of entanglement
requirements and symmetry structures (\Cref{tab:hamiltonians_full}).
The first three serve as primary benchmarks throughout all experiments;
the remaining six test generality in \Cref{sec:exp_extended}.
\begin{table}[t]
\centering\small
\caption{Benchmark Hamiltonians ($n=6$).  $\lambda_1$ is the exact
ground-state energy; $\Delta$ is the spectral gap.}
\label{tab:hamiltonians_full}
\begin{tabular}{@{}llccc@{}}
\toprule
\textbf{Hamiltonian} & \textbf{Interactions} & $\lambda_1$ & $\Delta$ & \textbf{Difficulty for SE} \\
\midrule
\multicolumn{5}{@{}l}{\textit{Primary benchmarks}} \\
TFIM ($J = h = 1$, ring) & $\mathbf{Z}\mathbf{Z} + \mathbf{X}$ & $-7.727$ & $0.27$ & Hard \\
Tilted Ising (ring) & $\mathbf{Z}\mathbf{Z} + \mathbf{X} + \mathbf{Z}$ & $-10.214$ & $0.44$ & Easy \\
Random Ising (all-to-all) & $\mathbf{Z}\mathbf{Z} + \mathbf{Z}$ & $-3.405$ & $0.53$ & Medium \\
\midrule
\multicolumn{5}{@{}l}{\textit{Extended benchmarks}} \\
Mixed-field Ising (ring) & $\mathbf{Z}\mathbf{Z} + \mathbf{X} + \mathbf{Y} + \mathbf{Z}$ & $-10.324$ & --- & Easy \\
Spin glass + transverse & $\mathbf{Z}\mathbf{Z} + \mathbf{Z} + \mathbf{X}$ & $-5.543$ & --- & Medium \\
XXZ ($\Delta_z = 0.5$, ring) & $\mathbf{X}\mathbf{X} + \mathbf{Y}\mathbf{Y} + \mathbf{Z}\mathbf{Z}$ & $-6.819$ & --- & Medium \\
Random XY (all-to-all) & $\mathbf{X}\mathbf{X} + \mathbf{Y}\mathbf{Y} + \mathbf{Z}$ & $-4.493$ & --- & Medium \\
Power-law Ising ($\alpha = 2$) & long-range $\mathbf{Z}\mathbf{Z} + \mathbf{X}$ & $-9.041$ & --- & Hard \\
Heisenberg (dense) & $\mathbf{X}\mathbf{X} + \mathbf{Y}\mathbf{Y} + \mathbf{Z}\mathbf{Z} + \mathbf{X} + \mathbf{Z}$ & $-4.910$ & --- & Hard \\
\bottomrule
\end{tabular}
\par\smallskip
  {\footnotesize ``Difficulty for SE'' measures the proximity of the ground state
  to the single-entangler reachable set $\cM_2$ (one CNOT ring plus local
  rotations): \emph{Easy}~$=$ lies in or near $\cM_2$; \emph{Hard}~$=$ leaves
  $\cM_2$ (criticality, long-range, or dense couplings); \emph{Medium}~$=$
  intermediate. The labels summarize the experiments \emph{a posteriori}; they
  are not an input to them.}
\end{table}
\subsubsection{Protocol and budget matching}\label{sec:protocol}
All progressive methods (IP-PDT, Na\"ive PDT) use stage depths
$\{1,3,5\}$ with $T$ gradient-descent steps per stage ($3T$ total).
Baselines receive the same $3T$ total steps at a fixed depth.  The
learning rate is $\eta=0.02$ and the jitter scale is $\sigma=0.05$
unless stated otherwise.  The shared initial block
$\bm\theta_0^{(0)}$ is drawn from
$\mathrm{Unif}([0,2\pi]^{3n})$ with the same seed across all methods
to ensure fair comparison.
\textcolor{black}{Unless stated otherwise, ``Na\"ive PDT'' denotes the \emph{copy}-initialized
variant, which appends each new block as a copy of the shared initial
block $\bm\theta_0^{(0)}$ rather than drawing it afresh---the strongest
na\"ive warm start, and the one used in all benchmark comparisons below.
The \emph{random}-initialized variant (a fresh $\mathrm{Unif}([0,2\pi]^{3n})$
draw per appended block) appears only in the initialization-shock
experiment of \Cref{sec:exp_shock}, where it isolates the shock mechanism.}
\subsection{Reachable set saturation}\label{sec:exp_saturation}
The entire story rests on one structural claim: that adding depth to the
single-entangler ansatz buys nothing once a threshold is crossed.  The
Saturation Theorem (\Cref{thm:saturation}) makes this precise, predicting
that the reachable set $\mathcal{M}_D$ stabilizes at $D\geq 2$, and our
first task is to see it happen.  We train SE circuits from scratch at
depths $D\in\{1,2,3,5,7,11\}$, giving each $2000$ gradient steps and
recording the final energy per seed (\Cref{fig:saturation}).

On Tilted Ising the saturation is unambiguous.  A single layer is
expressibility-limited and stalls at $-9.000\pm 0.000$, whereas every
$D\geq 2$ reaches $-10.200\pm 0.007$, within $0.014$ of the exact ground
state, and the standard deviation collapses by two orders of magnitude
between $D = 1$ and $D = 2$.  The other two Hamiltonians tell the same
story with a caveat.  On TFIM all $D\geq 2$ land on similar means
($-7.29$ to $-7.43$) with overlapping error bars, but the ceiling
($\approx -7.43$, gap $0.30$) sits well above the exact energy, a first
glimpse of the SE ansatz's expressibility limit for highly entangled
ground states.  On Random Ising, $D = 5$ posts the best mean ($-3.191$)
yet with high variance ($\pm 0.361$), and deeper circuits do not improve
on it---again consistent with saturation, here confounded by local
minima.  In every case the reachable set has effectively closed by
$D = 2$, which is the fact the rest of the paper exploits.
\begin{figure}[t]
\centering
\includegraphics[width=\textwidth]{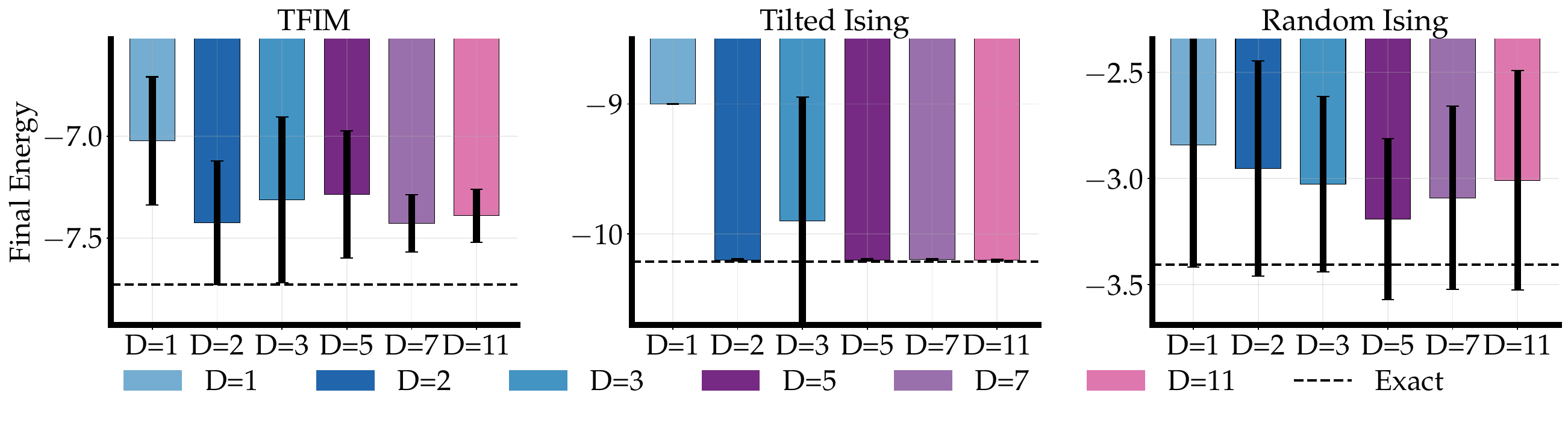}
\caption{Reachable set saturation: final energy vs.\ circuit depth
for the SE ansatz ($2000$ steps, $10$ seeds).  $D = 1$ is strictly
worse; $D\geq 2$ saturates, confirming \Cref{thm:saturation}.
Dashed line: exact ground state.}
\label{fig:saturation}
\end{figure}
\subsection{Initialization shock and monotonicity}\label{sec:exp_shock}
If saturation says depth eventually stops helping, the next question is
what happens at the moment depth is added.  Na\"ive expansion drops in
fresh CNOT gates that perturb the unitary, and the optimizer pays for it;
identity-paired initialization is designed to make the expanded circuit
implement exactly the same unitary it did before, so that the energy
cannot jump.  \Cref{thm:monotone} predicts a monotone descent, and
\Cref{fig:shock} lets us watch the two behaviors side by side.  We trace
the energy across stages for three methods at $T = 200$ steps per
stage---IP-PDT against Na\"ive PDT with copy initialization and Na\"ive
PDT with random initialization.

At each expansion point (the dotted vertical lines), Na\"ive PDT spikes
sharply as the optimizer scrambles to undo the disruption from the newly
introduced gates, while IP-PDT slides through the same points on a smooth,
monotonically non-increasing trajectory, exactly as
\Cref{thm:monotone} guarantees.  The contrast is starkest on Tilted
Ising, where IP-PDT descends steadily to $-10.194\pm 0.009$ while
Na\"ive PDT with random initialization oscillates between $-10$ and $0$
at every expansion.  \textcolor{black}{The copy-initialized na\"ive variant, however,
warm-starts each appended block from $\bm\theta_0^{(0)}$ rather than from
fresh randomness, so it perturbs the trained state far less and---like
IP-PDT---avoids the large spikes; the two nearly coincide.  What still
favors IP-PDT is not this energy trace but its \emph{structure}: the
cancelling forward/inverse pairs collapse the trained circuit to a single
entangling layer ($\sim\!5\times$ fewer two-qubit gates,
\Cref{sec:resource}) and preserve the energy \emph{exactly} at every
expansion (\Cref{thm:monotone}), whereas copy initialization only makes
the shock empirically small with no guarantee.}
\begin{figure}[t]
\centering
\includegraphics[width=\textwidth]{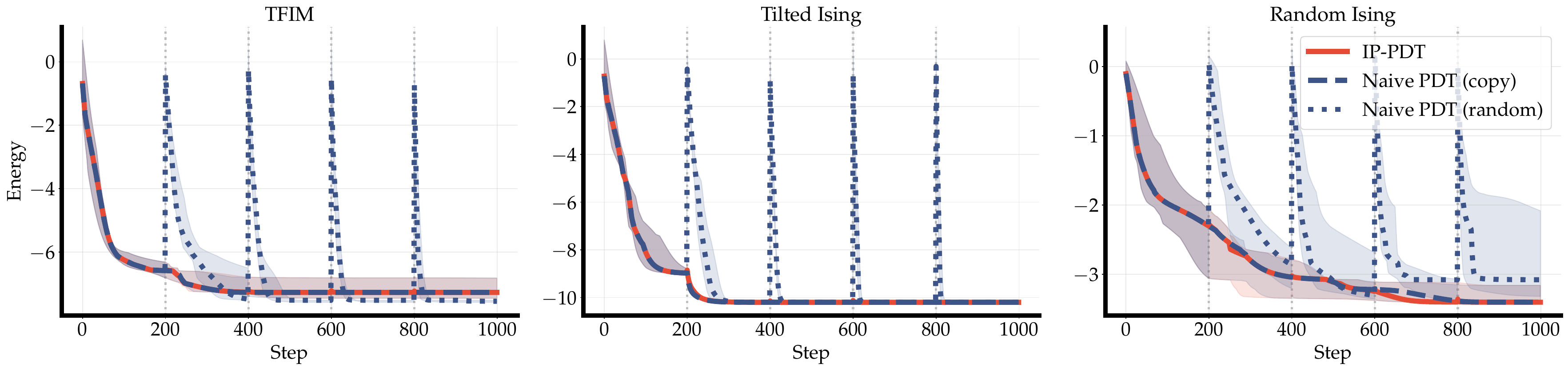}
\caption{Initialization shock across depth expansions ($T = 200$
per stage, $10$ seeds).  \textcolor{black}{Three methods are traced.  IP-PDT (red) shows
smooth monotone descent.  Na\"ive PDT with \emph{random} initialization
suffers dramatic energy spikes at each expansion point (dotted vertical
lines).  Na\"ive PDT with \emph{copy} initialization (the benchmark
default; it warm-starts each appended block from $\bm\theta_0^{(0)}$)
largely avoids the shock and nearly coincides with IP-PDT---so IP-PDT's
advantage over copy is not this trajectory but the $\sim\!5\times$
two-qubit-gate saving from its cancelling pairs (\Cref{sec:resource}) and
the \emph{exact} monotonicity guarantee of \Cref{thm:monotone}.}}
\label{fig:shock}
\end{figure}

\subsection{Trainability beyond expressibility}\label{sec:exp_trainability}
Saturation sets up the cleanest possible test of why progressive training
helps.  Once $\mathcal{M}_2=\mathcal{M}_5$, a depth-5 circuit trained from
scratch can reach exactly the same states as IP-PDT, so any gap between
them cannot be about expressibility---it must be about
\emph{trainability}, about how easily the optimizer finds those states.
We therefore pit IP-PDT ($1{\to}3{\to}5$) against SE circuits trained
from scratch at $D\in\{1,2,5\}$ under an identical budget of $3T$
gradient steps, and read off the residual advantage as a pure
optimization gain (\Cref{fig:trainability}).

Here, \emph{Scratch-$D = 5$} (``SE scratch'') is the single-entangler circuit at the
  full target depth $D = 5$ trained directly from one random initialization---all
  $15n$ parameters at once, with no progressive depth growth and no identity-pair
  warm start---under the same total step budget; since it shares IP-PDT's reachable
  set $\cM_2$ (\Cref{thm:saturation}), the IP-PDT-vs-scratch gap isolates the
  benefit of the curriculum from that of expressibility.
  
That gain is obvious on some Random Ising, where IP-PDT matches or wins every scratch
depth at $T = 100, 500$: at $T = 200$ (600 total steps) only $D=5$ wins over IP-PDT.
Note that IP-PDT holds overall lower variance---it is not just better on average but
markedly more reliable.  On Tilted Ising it matches scratch-$D = 5$ at
all budgets (both reach $-10.194$ by $T = 100$), comfortably ahead of
scratch-$D = 1$ at its $-9.0$ plateau and scratch-$D = 2$ with its
$\pm 0.91$ variance.  The exception is instructive: on TFIM,
scratch-$D = 2$ overtakes IP-PDT at large budgets ($T = 500$: $-7.425$
vs.\ $-7.048$), simply because it spends all $1500$ steps optimizing its
$36$ parameters whereas IP-PDT burns its first $500$ steps at depth~$1$,
where the expressibility ceiling leaves little to gain.
\begin{figure}[t]
\centering
\includegraphics[width=\textwidth]{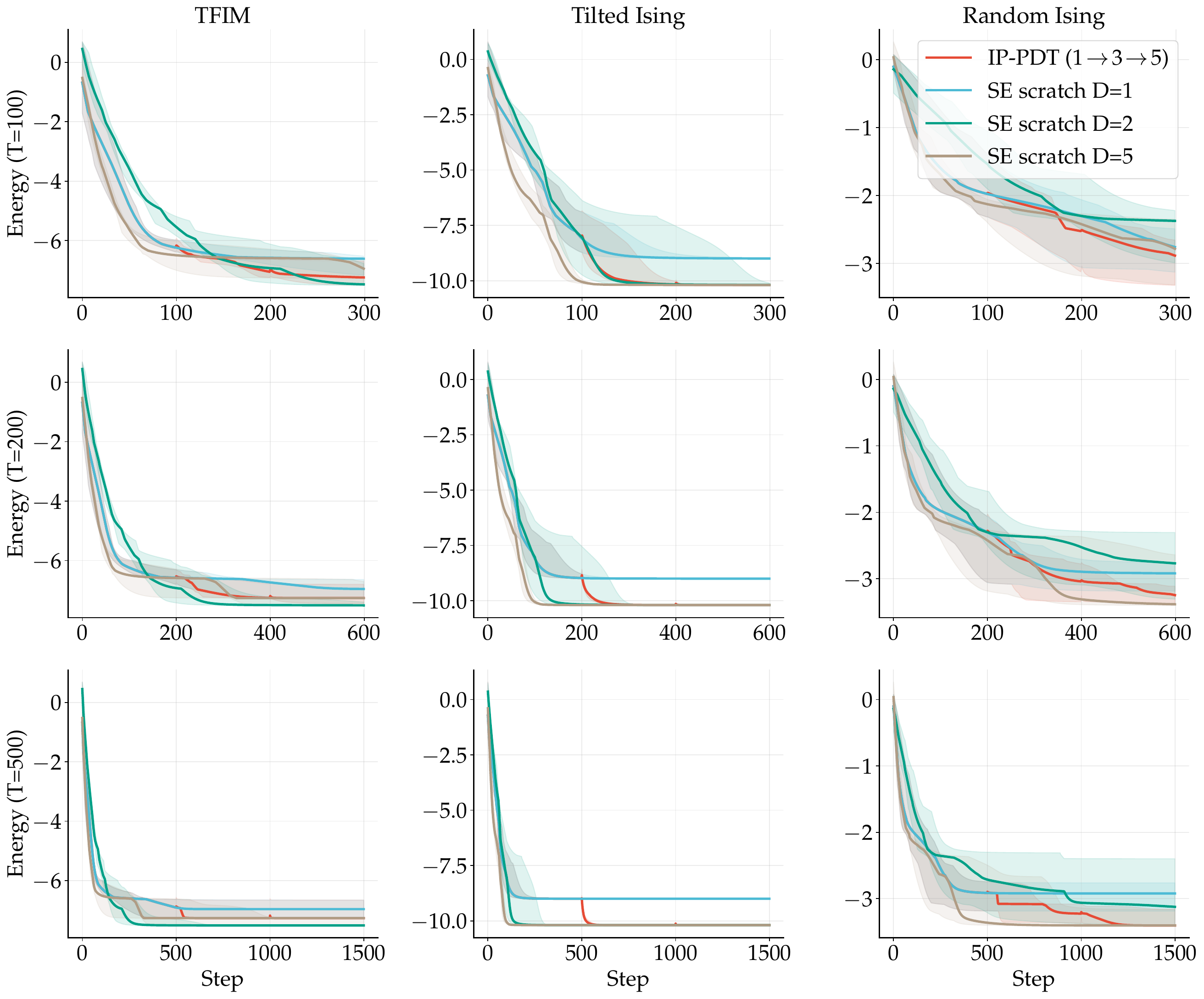}
\caption{Trainability beyond expressibility.  Rows: budget
$T\in\{100,200,500\}$.  Columns: Hamiltonians.  IP-PDT (red) matches
or outperforms all scratch depths on Random and Tilted Ising with
dramatically lower variance.  All methods share the same reachable
set $\mathcal{M}_2$.  Median $\pm$ IQR, $10$ seeds.}
\label{fig:trainability}
\end{figure}
\subsection{Cost-adjusted comparison}\label{sec:exp_cost}
Comparing methods by gradient steps flatters the SE ansatz, since its
single CNOT ring is five times cheaper than HEA's per layer. To level the field
we plot energy against cumulative single- and two-qubit gate count for
IP-PDT (SE), Baseline HEA, and Na\"ive PDT (HEA), each run for $300$
gradient steps (\Cref{fig:cost_adjusted}).

By that point HEA has spent $36{,}000$ gates to IP-PDT's $18{,}000$, yet
the cheaper method still reaches lower energy on all three primary
Hamiltonians---Tilted Ising ($-10.194$ vs.\ $-9.407$), Random Ising
($-2.919$ vs.\ $-2.770$), and TFIM ($-7.126$ vs.\ $-6.948$).  IP-PDT thus
wins on the metric that matters most for near-term devices: it does more
with half the gates.
\begin{figure}[t]
\centering
\includegraphics[width=\textwidth]{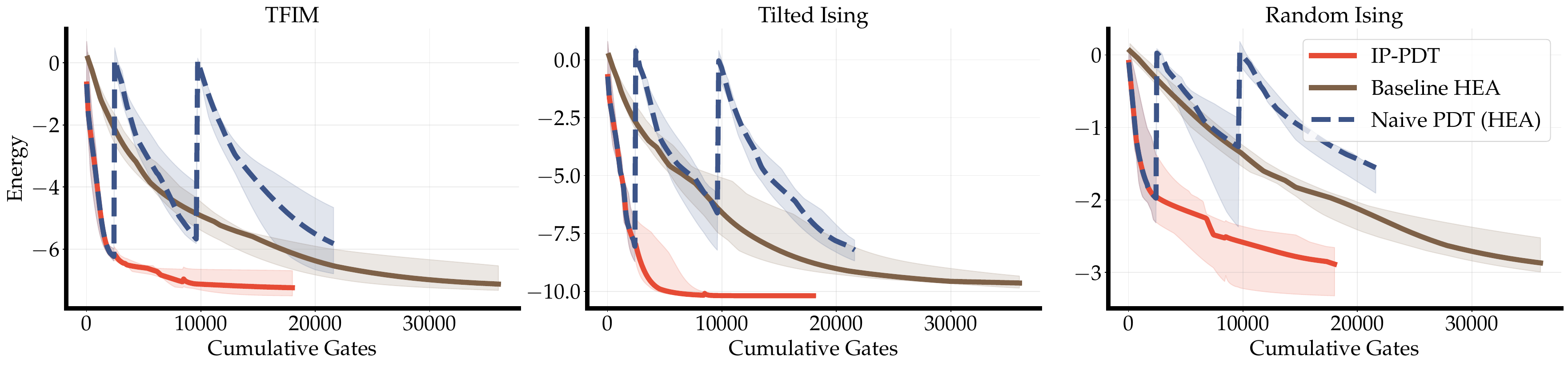}
\caption{Energy vs.\ cumulative gate cost.  IP-PDT (red, $18$K gates)
achieves lower energy than HEA (brown, $36$K gates) on all three
Hamiltonians.  Median $\pm$ IQR over $10$ seeds.}
\label{fig:cost_adjusted}
\end{figure}




\subsection{Extended Hamiltonian benchmark}\label{sec:exp_extended}
To see whether
the advantage is a property of IP-PDT rather than of a favorable test
problem, we run IP-PDT, Na\"ive PDT, Baseline HEA, and Baseline SE at
$T = 200$ ($3T = 600$ total steps) across all nine Hamiltonians of
\Cref{tab:hamiltonians_full}, reporting the energy gap $E-\lambda_1$ in
\Cref{tab:extended} and \Cref{fig:extended}.

IP-PDT takes the lowest or tied-lowest gap on eight of the nine.  The lone
loss is TFIM, whose highly entangled ground state once again exceeds the
SE ansatz's expressibility ceiling and lets HEA through.  Where the
ground state is within reach the margin can be wide: on Tilted Ising and
Mixed-field Ising IP-PDT drives the gap below $0.02$, against $0.3$ or
more for HEA and the scratch baseline.  The pattern across the table is
consistent---progressive depth growth helps broadly, and only genuine
entanglement demands, not the choice of Hamiltonian, hold it back.
\begin{table}[t]
\centering\small
\caption{Energy gap $E-\lambda_1$ (mean $\pm$ std, $10$ seeds) across
nine Hamiltonians.  \textbf{Bold} = best.  $\dagger$ = tied within
$1$ std.}
\label{tab:extended}
\begin{tabular}{@{}lcccc@{}}
\toprule
\textbf{Hamiltonian} & \textbf{IP-PDT} & \textbf{Na\"ive PDT} & \textbf{HEA} & \textbf{SE scratch} \\
\midrule
Tilted Ising     & $\mathbf{0.017}{\pm .006}^{\dagger}$ & $0.017{\pm .006}$ & $0.322{\pm .214}$ & $0.613{\pm 1.27}$ \\
Mixed-field      & $\mathbf{0.020}{\pm .008}^{\dagger}$ & $0.020{\pm .008}$ & $0.259{\pm .114}$ & $0.315{\pm .959}$ \\
Spin Glass       & $\mathbf{0.131}{\pm .066}$ & $0.156{\pm .101}$ & $0.556{\pm .279}$ & $0.472{\pm .774}$ \\
Random Ising     & $\mathbf{0.157}{\pm .135}$ & $0.166{\pm .135}$ & $0.333{\pm .256}$ & $0.369{\pm .456}$ \\
Power-law        & $\mathbf{0.186}{\pm .241}$ & $0.262{\pm .296}$ & $0.206{\pm .069}$ & $0.522{\pm .412}$ \\
Random XY        & $\mathbf{0.479}{\pm .295}$ & $0.628{\pm .466}$ & $0.640{\pm .224}$ & $0.540{\pm .215}$ \\
XXZ              & $\mathbf{0.525}{\pm .034}$ & $0.544{\pm .054}$ & $0.556{\pm .287}$ & $0.574{\pm .070}$ \\
Heisenberg       & $\mathbf{0.690}{\pm .229}$ & $0.760{\pm .322}$ & $0.769{\pm .278}$ & $0.943{\pm .338}$ \\
TFIM             & $0.574{\pm .369}$ & $0.574{\pm .369}$ & $\mathbf{0.304}{\pm .145}$ & $0.697{\pm .511}$ \\
\bottomrule
\end{tabular}
\end{table}
\begin{figure}[t]
\centering
\includegraphics[width=\textwidth]{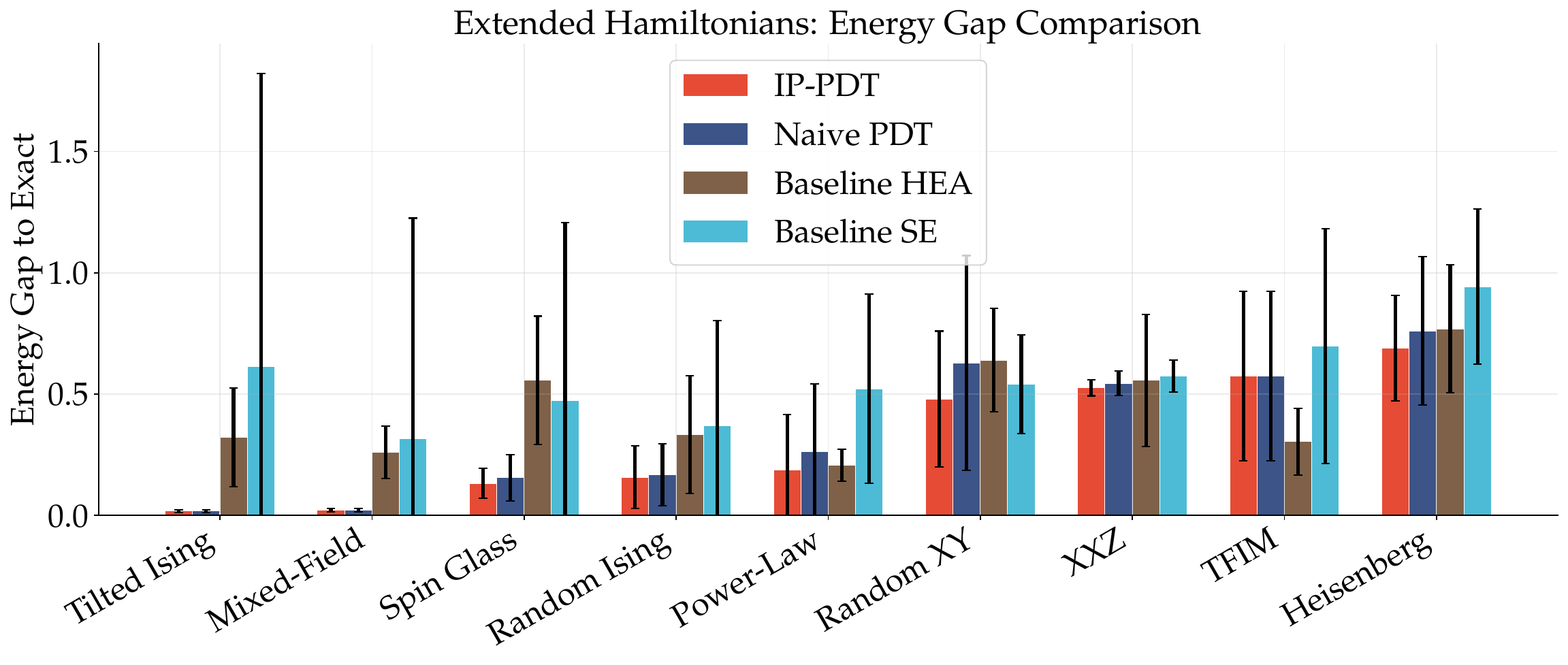}
\caption{Energy gap across nine Hamiltonians (lower is better).  IP-PDT
(red) wins or ties on $8/9$.  Error bars: $\pm 1$ std, $10$ seeds.}
\label{fig:extended}
\end{figure}
\subsection{Expressibility ceiling: SE vs.\ HEA}\label{sec:exp_ceiling}
The recurring TFIM exception raises a fair worry: is the SE ansatz's lone
CNOT ring a practical handicap against HEA's five?  One might expect the
more expressive ansatz to pull ahead whenever the ground state is hard to
reach, so we test exactly that, comparing IP-PDT (SE) against Baseline HEA
and Na\"ive PDT (HEA) on five Hamiltonians ranging from easy to demanding
in their entanglement (\Cref{fig:ceiling}, \Cref{tab:ceiling}).

IP-PDT outperforms HEA on four of the
five, ceding only TFIM, whose ground state truly lives beyond the SE
ceiling.  Even on Heisenberg---also nominally hard---IP-PDT edges HEA out,
$0.733$ vs.\ $0.769$.  The lesson we conjecture is that expressibility and trainability
trade against each other: a better-conditioned optimization landscape can
more than offset five times fewer entangling gates, except when the
target genuinely cannot be represented at all.
\begin{table}[h]
\centering\small
\caption{IP-PDT (SE) vs.\ HEA: energy gap (mean $\pm$ std, $10$
seeds).}
\label{tab:ceiling}
\begin{tabular}{@{}lccc@{}}
\toprule
\textbf{Hamiltonian} & \textbf{IP-PDT} & \textbf{HEA} & \textbf{Winner} \\
\midrule
Tilted Ising & $\mathbf{0.017\pm 0.006}$ & $0.322\pm 0.203$ & IP-PDT \\
Random Ising & $\mathbf{0.162\pm 0.127}$ & $0.333\pm 0.243$ & IP-PDT \\
Spin Glass   & $\mathbf{0.147\pm 0.090}$ & $0.556\pm 0.265$ & IP-PDT \\
Heisenberg   & $\mathbf{0.733\pm 0.287}$ & $0.769\pm 0.264$ & IP-PDT \\
TFIM         & $0.574\pm 0.350$ & $\mathbf{0.304\pm 0.137}$ & HEA \\
\bottomrule
\end{tabular}
\end{table}
\begin{figure}[t]
\centering
\includegraphics[width=\textwidth]{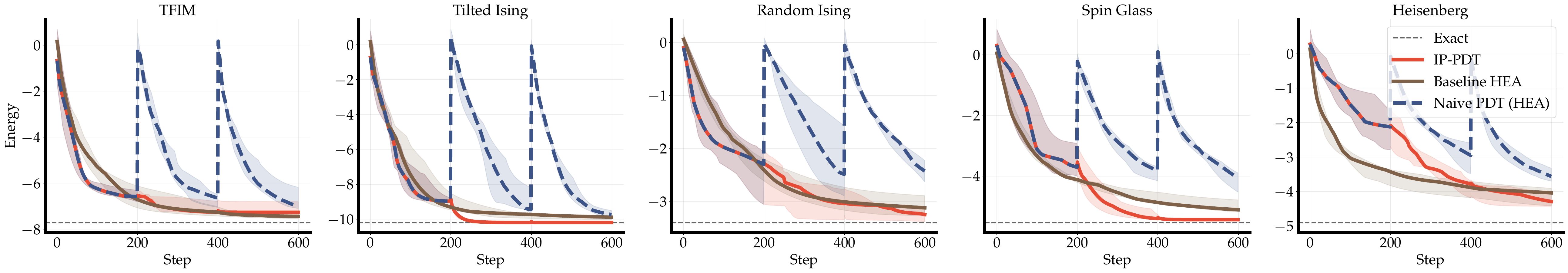}
\caption{IP-PDT (SE, $1$ CNOT ring) vs.\ HEA ($5$ CNOT rings) and
Na\"ive PDT (HEA).  IP-PDT (red) outperforms HEA (brown) on $4/5$
Hamiltonians despite $5\times$ fewer entangling gates.  Median $\pm$
IQR, $10$ seeds.}
\label{fig:ceiling}
\end{figure}

\subsection{Comparison with SOTA methods}\label{sec:exp_sota}
We
budget-match IP-PDT at $300$ function evaluations against five
alternatives: SE scratch ($D = 5$), HEA ($D = 5$),
Rotosolve~\cite{ostaszewski2021rotosolve}, Layerwise
training~\cite{skolik2021layerwise}, and a simplified
ADAPT-VQE~\cite{grimsley2019adaptive} (\Cref{fig:sota}).

IP-PDT finishes first on TFIM (gap $0.60$) and Random Ising ($0.49$) and a
close second on Tilted Ising ($0.020$ against SE scratch's $0.015$).  The
established VQE optimizers fare markedly worse: Rotosolve, Layerwise, and
ADAPT-VQE trail on all three Hamiltonians with energy gaps two to five
times larger than IP-PDT's.
\begin{figure}[t]
\centering
\includegraphics[width=\textwidth]{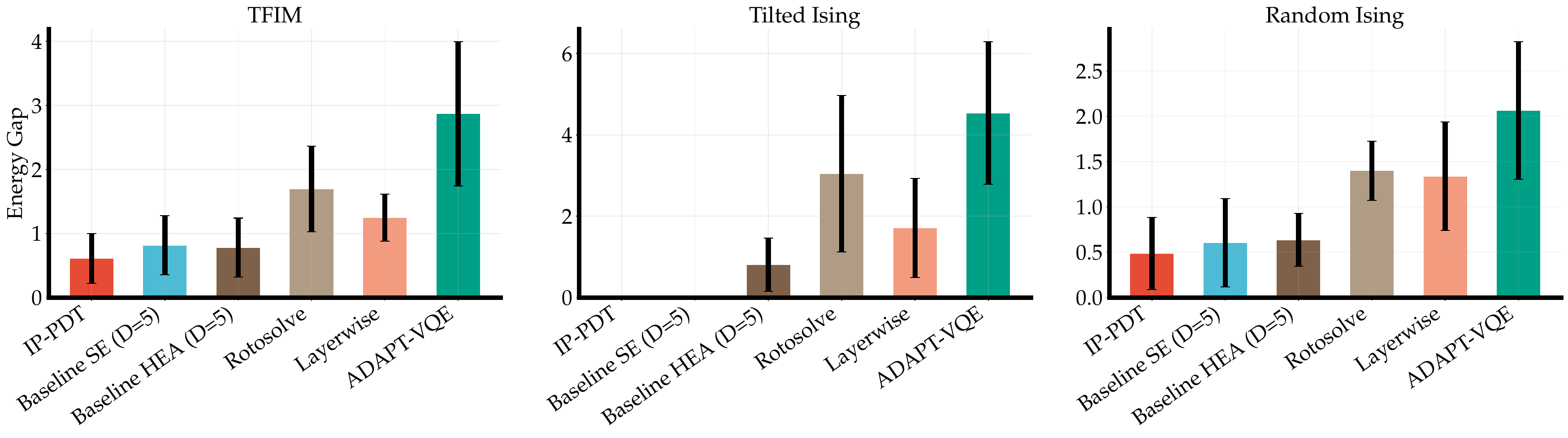}
\caption{SOTA comparison ($300$ evaluations, $10$ seeds).  IP-PDT
(red) is competitive with fixed-depth baselines and dominates
Rotosolve, Layerwise, and ADAPT-VQE on all three Hamiltonians.}
\label{fig:sota}
\end{figure}
\subsection{SOTA methods at multiple budgets}\label{sec:exp_sota_budget}
A single budget can hide a method that wins only when starved or only when
indulged, so we repeat the comparison across the full range, sweeping
$T\in\{50,100,200,500\}$ for all six methods of \Cref{sec:exp_sota}
(\Cref{fig:sota_budget}).

IP-PDT's edge is sharpest exactly where near-term hardware lives---at low
budgets.  At $T = 50$ it already posts the lowest energy on all three
Hamiltonians, and on Tilted Ising it reaches near-exact energy
($-10.19$) by $T = 100$ while Layerwise and Rotosolve are still stuck
above $-8.0$ at $T = 500$.  On Random Ising the advantage is consistent
across every budget.  Only on TFIM does a competitor catch up: Baseline
SE closes the gap at $T = 500$ as both methods press against the
expressibility ceiling.  Rotosolve and ADAPT-VQE lag at every budget
tested.
\begin{figure}[t]
\centering
\includegraphics[width=\textwidth]{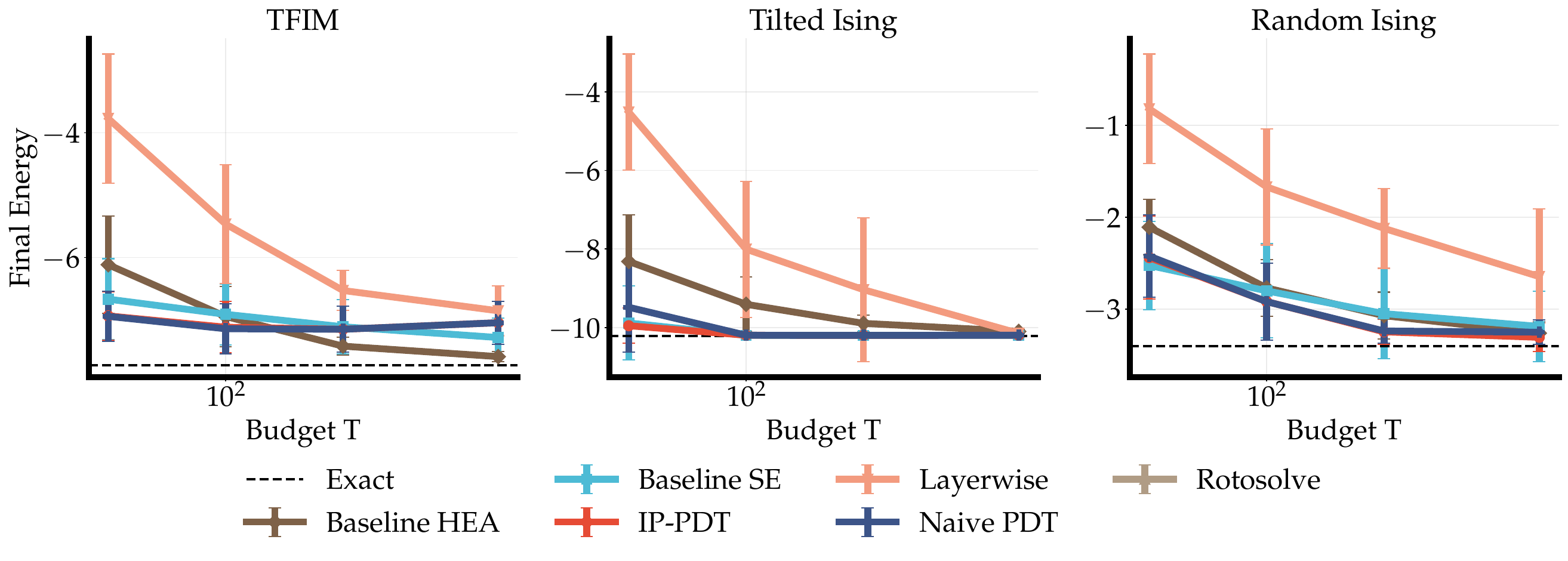}
\caption{SOTA methods at multiple budgets.  IP-PDT (red) dominates at
low budgets; Baseline SE catches up at $T = 500$ on TFIM.  Layerwise
and Rotosolve lag behind at all budgets.  Mean $\pm$ std, $10$
seeds.  Dashed line: exact ground state.}
\label{fig:sota_budget}
\end{figure}
\subsection{Growth schedule comparison}\label{sec:exp_schedules}
Having argued that gradual growth matters, we should ask how much the
particular schedule matters.  We compare four progressive
schedules---$\{1{\to}5\}$ (a single jump), $\{1{\to}3{\to}5\}$ (our
default), $\{1{\to}2{\to}3{\to}4{\to}5\}$ (gradual), and
$\{1{\to}3{\to}5{\to}7{\to}9\}$ (extended depth)---alongside a scratch
$D = 5$ baseline where available, all at $T = 200$ steps per stage
(\Cref{fig:schedules}).

When the landscape is benign the choice barely registers: on Tilted Ising
every progressive schedule converges to the same energy.  The
distinctions emerge on rugged problems.  On Random Ising and Spin Glass
the extended schedule $\{1{\to}3{\to}5{\to}7{\to}9\}$ reaches the lowest
final energy, evidence that extra overparameterization stages pay off
where the landscape is hard, while the single-jump $\{1{\to}5\}$ is the
worst performer on every Hamiltonian, confirming that gradual growth is
not incidental.  On Heisenberg all progressive schedules beat scratch
$D = 5$ (dashed), one more reminder that the trainability advantage
survives a change of schedule.
\begin{figure}[t]
\centering
\includegraphics[width=\textwidth]{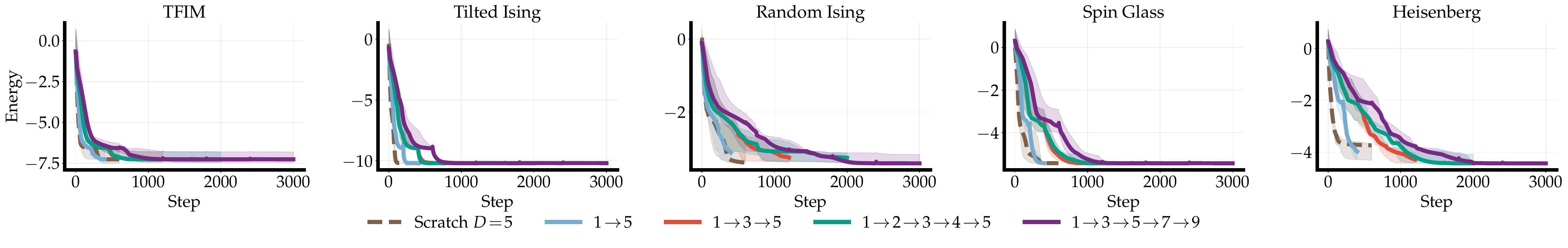}
\caption{Growth schedule comparison across five Hamiltonians
($T = 200$ per stage, $10$ seeds).  Gradual growth outperforms the
single-jump $1{\to}5$.  Extended depth
($1{\to}3{\to}5{\to}7{\to}9$) helps on rugged landscapes.  Scratch
$D = 5$ (dashed brown, where available) underperforms all progressive
schedules.  Median $\pm$ IQR.}
\label{fig:schedules}
\end{figure}



\subsection{Hyperparameter sensitivity}\label{sec:exp_hyperparams}
A method is only as useful as it is forgiving of its knobs, so we close
the main study by stress-testing the two that matter most, the learning
rate $\eta$ and the steps per stage $T$, sweeping both on a grid for each
growth schedule and recording the mean energy gap over seeds and
Hamiltonians (\Cref{fig:hyperparams}).

The heatmap shows a wide, flat optimum.  IP-PDT is robust across the whole
range $\eta\in[0.01,0.1]$ and degrades only at extreme learning rates,
while increasing $T$ improves performance monotonically with diminishing
returns past $T = 200$---precisely the saturating behavior the
exponential convergence rate of \Cref{thm:basin} would predict.
\begin{figure}[t]
\centering
\includegraphics[width=\textwidth]{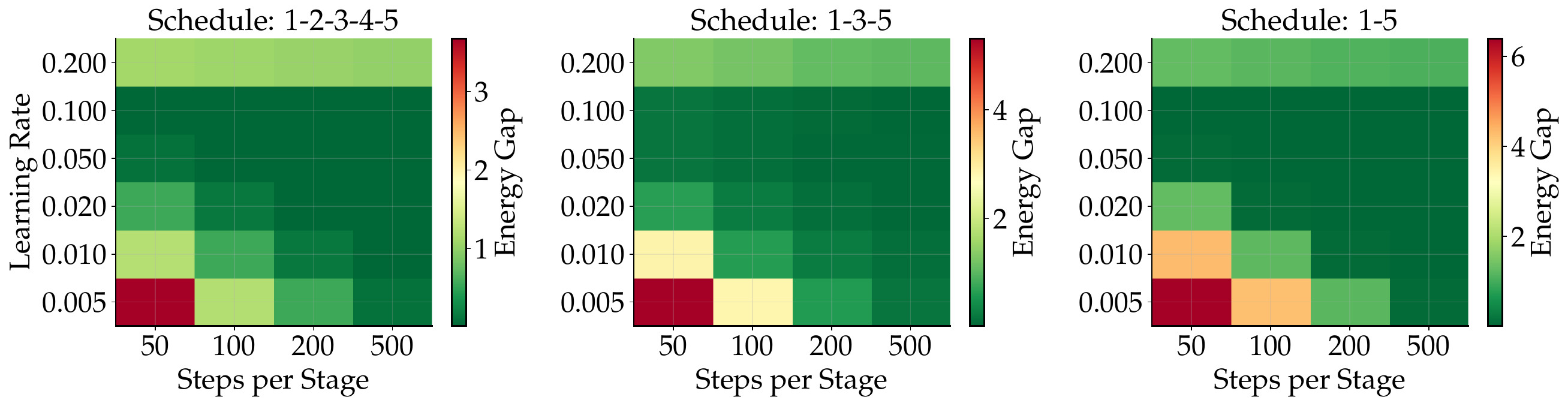}
\caption{Hyperparameter sensitivity: energy gap (color) as a function
of learning rate ($y$-axis) and steps per stage ($x$-axis).  Lower
(greener) is better.  IP-PDT is robust across a wide range of
$\eta$ and $T$.}
\label{fig:hyperparams}
\end{figure}
\subsection{Ablations}\label{sec:ablation}
Two loose ends remain, and we tie them off here.\label{sec:exp_jitter}
The first is the suspicion, raised repeatedly above, that the small
parameter jitter in our initialization is incidental rather than
essential.  To settle it we sweep the jitter scale
$\sigma\in\{0,0.01,0.05,0.1,0.5\}$ and compare against
zero-initialization, in which every new parameter is set exactly to zero,
and against scratch-$D = 5$ (\Cref{fig:jitter}).  All the progressive
variants perform nearly identically regardless of $\sigma$---including the
$\sigma = 0$ case---and all of them beat scratch.  The conclusion: it is the progressive depth curriculum, not the jitter, that
drives IP-PDT's advantage, so the method has essentially no tuning to get
wrong here.

The second loose end is a practical one: on real hardware we measure
energy, not fidelity, so can the energy gap itself certify how close we
are to the true ground state?  \Cref{tab:fidelity} places the fidelity
$F=|\langle \mathbf{v}_1|\boldsymbol{\psi}\rangle|^2$ next to the gap $\varepsilon = E -
\lambda_1$, and the gap--fidelity sandwich of \Cref{thm:gap_fidelity},
which bounds $\Delta(1{-}F)\leq\varepsilon$, holds on all $30$ runs ($10$
per Hamiltonian).  The certificate is tight where the ansatz is
expressive enough---Tilted Ising reaches $F=0.997$---and loose where it is
not, with TFIM and Random Ising showing lower fidelity that traces
straight back to the SE ceiling.  
\begin{table}[h]
\centering\small
\caption{Ground-state fidelity and energy gap (mean $\pm$ std, $10$
seeds).  The gap--fidelity sandwich (\Cref{thm:gap_fidelity}) is
satisfied on all seeds.}
\label{tab:fidelity}
\begin{tabular}{@{}lccc@{}}
\toprule
\textbf{Hamiltonian} & $F$ & $\varepsilon=E-\lambda_1$ & \textbf{Sandwich valid} \\
\midrule
Tilted Ising & $0.997\pm 0.002$ & $0.020\pm 0.009$ & 10/10 \\
TFIM         & $0.437\pm 0.313$ & $0.602\pm 0.386$ & 10/10 \\
Random Ising & $0.295\pm 0.448$ & $0.486\pm 0.397$ & 10/10 \\
\bottomrule
\end{tabular}
\end{table}

\begin{figure}[t]
\centering
\includegraphics[width=\textwidth]{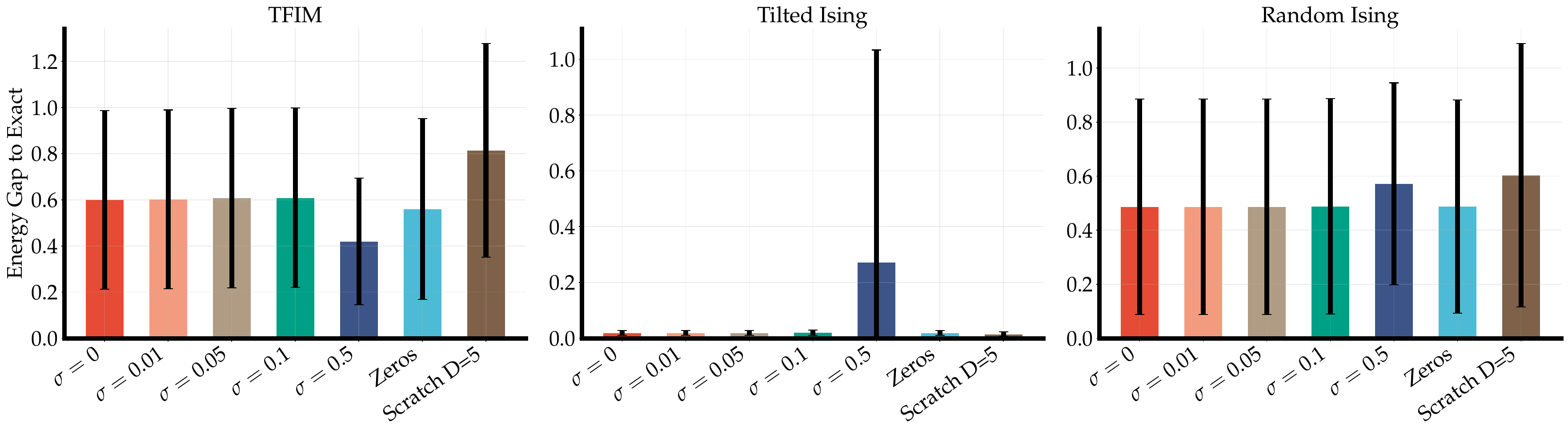}
\caption{Jitter ablation: energy gap for different $\sigma$ values and initialization strategies.  All progressive variants perform similarly regardless of $\sigma$; scratch-$D = 5$ is consistently worse.  Mean $\pm$ std, $10$ seeds.} \label{fig:jitter}
\end{figure}

\subsection{Scaling to $n=16$}\label{sec:exp_n16}
To assess whether IP-PDT scales beyond the $n=6$ setting used in the
preceding experiments, we re-run the extended Hamiltonian benchmark at
$n=16$.  The Hilbert space dimension grows from $d=64$ to $d=65{,}536$,
so the dense $\mathbf{H}\in\bbC^{d\times d}$ representation used at $n=6$ is no
longer practical; we use a Pauli-sum representation for $\mathbf{H}$ and the
Lanczos algorithm (\texttt{scipy.sparse.linalg.eigsh}) to compute exact
ground-state energies.  All other protocol parameters match
\Cref{sec:exp_extended}: stage depths $\{1,3,5\}$, $T=200$ steps per
stage, $\eta=0.02$, $\sigma=0.05$, $10$ seeds.
\textbf{Three findings emerge.}
\emph{HEA fails to scale.}  On every Hamiltonian, the depth-$5$
hardware-efficient ansatz produces energy gaps that are $2$--$10\times$
larger than the single-entangler methods (\Cref{tab:extended_n16}).
At $n=6$, HEA was competitive with IP-PDT on most Hamiltonians; at
$n=16$, it lags every other method on every Hamiltonian.  This is
consistent with the well-documented difficulty of training deep
hardware-efficient circuits as the qubit count grows---the regime in
which barren plateaus~\cite{mcclean2018barren} and trap-dominated
landscapes~\cite{anschuetz2022beyond} are expected to appear.
\emph{Progressive methods remain competitive with full-depth SE.}
IP-PDT and Na\"ive PDT achieve comparable gaps to the full-depth SE
baseline on 6 of 9 Hamiltonians, despite starting from a shallower
circuit.  The progressive curriculum scales.
\emph{The Eq.~\eqref{eq:ippdt_expand1} initialization and the na\"ive
copy initialization converge at scale.}  At $n=6$, IP-PDT (with the
trained-first-block initialization, Eq.~\ref{eq:ippdt_expand1}) and
Na\"ive PDT (with \textcolor{black}{copy} initialization) produce nearly identical
numerics on most Hamiltonians; at $n=16$ this convergence is even
tighter, with IP-PDT wins of $\le 0.1$ on three Hamiltonians (Spin
Glass, Power-law, XXZ) and Na\"ive-PDT-wins of $\le 0.3$ on three others.
This suggests that the jitter centre matters less at scale.

\textcolor{black}{We note one cautionary observation on the error bars.
On Tilted Ising at $n=16$, IP-PDT's larger mean and variance relative to
Na\"ive PDT in the $10$-seed table above are \emph{not} the artifact of a
single outlier seed: a $40$-seed robustness check finds IP-PDT falling into
a poor basin (final gap $>1$) on $4/40 \approx 10\%$ of seeds, a rate that
copy-initialized Na\"ive PDT essentially matches ($3/40 \approx 7.5\%$),
with \emph{identical} medians ($0.06$).  An occasional poor basin is
therefore a property of the single-entangler \emph{landscape} at this
scale, shared by both progressive methods, not an IP-PDT-specific
disadvantage.   Both progressive methods,
moreover, beat from-scratch SE, which fails on $16/40 \approx 40\%$ of
seeds---a $\approx 4\times$ lower failure rate.  On Spin Glass the large
spread is likewise a rugged-landscape property shared by all methods
(per-seed std $\approx 3$--$4$ across IP-PDT, Na\"ive PDT, and the SE
baseline alike); there IP-PDT's median gap ($\approx 0.65$) is in fact
\emph{better} than both Na\"ive PDT's ($\approx 3.3$) and the from-scratch
SE baseline's ($\approx 4.5$).}
Conversely, on Power-law Ising at $n=16$, IP-PDT
achieves $0.025\pm 0.001$---the tightest standard deviation of any
method on any Hamiltonian in our benchmark.
See also Figure \ref{fig:extended_n16}.

\begin{table}[t]
\centering\small
\caption{Energy gap $E-\lambda_1$ at $n=16$ (mean $\pm$ std, $10$ seeds).
\textbf{Bold} = best per Hamiltonian.}
\label{tab:extended_n16}
\begin{tabular}{@{}lcccc@{}}
\toprule
\textbf{Hamiltonian} & \textbf{IP-PDT} & \textbf{Na\"ive PDT} & \textbf{SE Baseline} & \textbf{HEA} \\
\midrule
Tilted Ising     & $0.356{\pm .955}$ & $\mathbf{0.055}{\pm .017}$ & $0.624{\pm 1.27}$ & $3.684{\pm 2.28}$ \\
Mixed-field      & $\mathbf{0.065}{\pm .020}^{\dagger}$ & $0.065{\pm .020}$ & $0.622{\pm 1.28}$ & $3.357{\pm 2.22}$ \\
Spin Glass       & $\mathbf{3.449}{\pm 4.12}$ & $4.012{\pm 3.89}$ & $4.814{\pm 2.93}$ & $6.003{\pm 4.78}$ \\
Random Ising     & $2.565{\pm 1.76}$ & $2.777{\pm 1.75}$ & $\mathbf{2.102}{\pm .81}$ & $5.861{\pm 2.94}$ \\
Power-law        & $\mathbf{0.025}{\pm .001}^{\dagger}$ & $0.025{\pm .001}$ & $1.500{\pm 1.10}$ & $3.103{\pm 3.41}$ \\
Random XY        & $1.917{\pm .69}$ & $\mathbf{1.633}{\pm .48}$ & $2.163{\pm 1.43}$ & $11.44{\pm 5.76}$ \\
XXZ              & $\mathbf{1.548}{\pm .82}$ & $1.763{\pm .97}$ & $1.553{\pm .73}$ & $4.184{\pm 2.35}$ \\
Heisenberg       & $5.140{\pm 1.69}$ & $4.892{\pm 1.49}$ & $\mathbf{4.204}{\pm 1.71}$ & $9.444{\pm 4.72}$ \\
TFIM             & $1.654{\pm .66}$ & $1.710{\pm .70}$ & $\mathbf{1.218}{\pm .59}$ & $1.924{\pm .77}$ \\
\bottomrule
\end{tabular}
\end{table}
\begin{figure}[t]
\centering
\includegraphics[width=\textwidth]{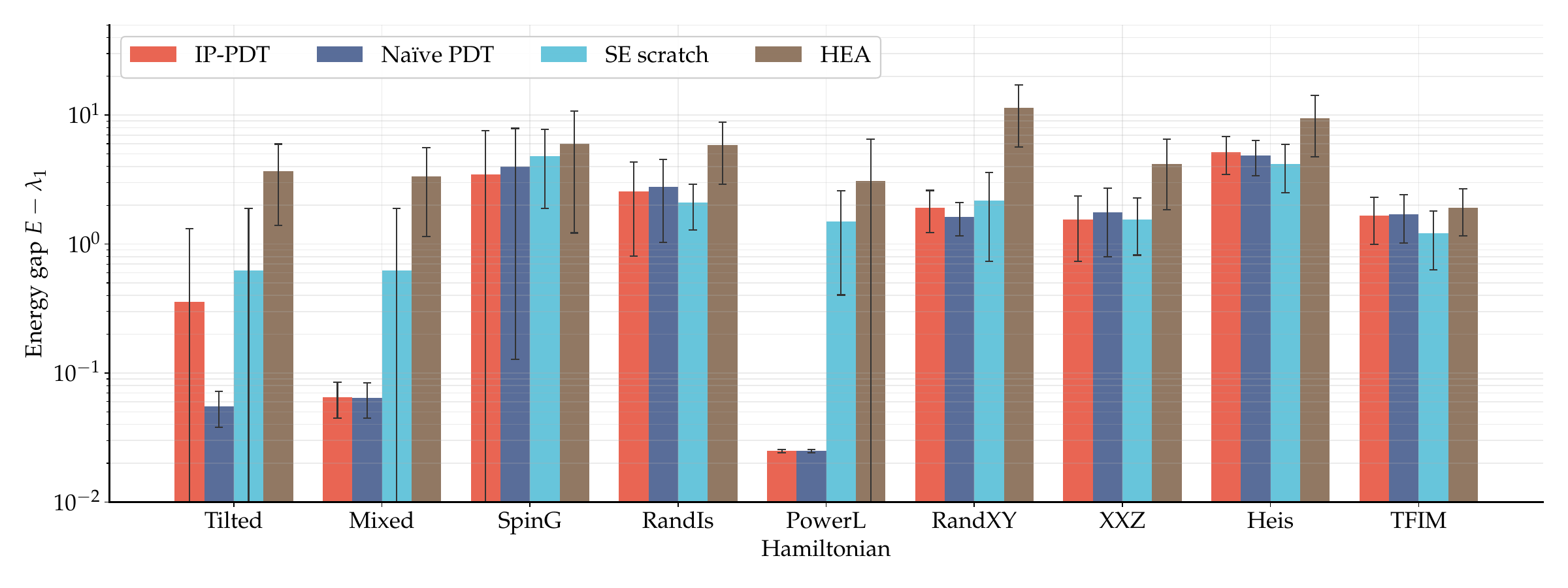}
\caption{Extended Hamiltonian benchmark at $n=16$.  HEA (brown) lags
every other method on every Hamiltonian, consistent with barren-plateau
onset.  IP-PDT (red) wins or ties on $4/9$; Na\"ive PDT on $2/9$;
SE-scratch baseline on $3/9$.  Error bars: $\pm 1$ std, $10$ seeds.
Y-axis is log-scale to accommodate the HEA range.}
\label{fig:extended_n16}
\end{figure}

\section{Conclusion}\label{sec:conclusion}
We introduced IP-PDT, a layerwise growth strategy that eliminates
initialization shock by appending forward/reverse block pairs whose CNOT
gates cancel.  The Reachable Set Saturation Theorem reveals that the
resulting circuit saturates its variational manifold after one expansion;
all subsequent depth increases are pure overparameterization of local
rotations---yet progressive training continues to yield optimization
benefits.  A resource analysis shows that IP-PDT achieves the lowest
total gate cost among the three protocols.  Experiments on three 6-qubit
Ising Hamiltonians show that IP-PDT matches or outperforms full-depth
HEA baselines despite using $5\times$ fewer CNOT gates, with the
strongest gains under limited budgets and on Hamiltonians whose ground
states are well-approximated by the single-entangler reachable set.
For such Hamiltonians, \emph{trainability gains can persist beyond
expressibility gains}, and structured initialization compensates for
reduced entangling power.  When the expressibility ceiling is itself
the binding constraint (as may occur for TFIM), deeper entangling
architectures remain necessary.
An immediate open question is whether the identity-pair construction
can be extended to $K>1$ backbone entangling layers, interpolating
between the single-entangler regime studied here and the full HEA,
and whether the Saturation Theorem generalizes to a $K$-dependent
hierarchy of reachable sets.
A second, orthogonal direction is to compose our warm-started
continuation in \emph{depth} space with continuation in \emph{Hamiltonian}
space: \v{Z}unkovi\v{c} et al.~\cite{zunkovic2026adiabatic} warm-start VQE
along an adiabatic path of Hamiltonians, so growing circuit depth while
adiabatically annealing the Hamiltonian would combine two warm-started
paths on independent axes---a combination we leave to future work.

Some limitations we have identified:
\begin{enumerate}[leftmargin=1.5em, nosep]
  \item \textbf{No general barren-plateau escape.}  The enlarged
    single-entangler gradient variance is confined to \emph{local} cost
    functions and is partly a shallow-depth effect shared with a shallow
    HEA; for \emph{global} cost functions all ans\"atze flatten
    identically and the single-entangler circuit confers no
    advantage~\cite{cerezo2021cost}.  IP-PDT is therefore not a universal
    remedy for barren plateaus.
  \item \textbf{Expressibility ceiling.}  For strongly correlated systems
    whose ground state is not $\cM_2$-structured, the SE circuit may be
    insufficient.  Our TFIM results confirm this: SE wins when the target
    lies in $\cM_2$ (including the highly entangled GHZ-like cat at
    $h_x = 0.5$) and loses near criticality where it does not.  On
    dense-coupling Hamiltonians (random spin glass with transverse field,
    random Heisenberg) the full-depth baseline can outperform the
    progressive strategies.
  \item \textbf{Single entangling layer.}  $\cM_3$ is limited by one
    CNOT ring.  A natural extension: $K>1$ ``backbone'' entangling
    layers with rotation-only growth.
\end{enumerate}

\section*{Acknowledgements}
This research was funded in part by: Rice University (Faculty Initiative
award); NSF FET:Small (award no.\ 1907936); NSF CAREER (award no.\ 2145629); a
Microsoft Research Award. AK would also like to thank the Ken Kennedy Institute
at Rice University for its support through the Research Cluster ``QuanTAS''.

\bibliographystyle{plain}
\nocite{quetschlich2023mqtbench,sakurai2020modern,arora2018optimization,zhou2020qaoa}
\bibliography{references}
\clearpage
\appendix
\section{Formal Proofs}\label{app:proofs}
\subsection{$\SU(2)$ coverage}\label{app:su2}
\begin{lemma}\label{lem:su2_cover}
  The map $\bbR^3\ni(\alpha,\beta,\gamma)\mapsto
  \mathbf{R}_X(-\gamma) \mathbf{R}_Z(-\beta) \mathbf{R}_Y(-\alpha)$ is surjective onto $\SU(2)$.
\end{lemma}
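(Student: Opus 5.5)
The plan is to reduce the claim to a standard Euler-angle decomposition of $\SU(2)$ and to absorb the negated angles by reparametrization. Since $(\alpha,\beta,\gamma)\mapsto(-\gamma,-\beta,-\alpha)$ is a bijection of $\bbR^3$, it suffices to show that the map $(a,b,c)\mapsto \mathbf{R}_X(a)\mathbf{R}_Z(b)\mathbf{R}_Y(c)$ is surjective onto $\SU(2)$. Equivalently, every $\mathbf{U}\in\SU(2)$ factors as an $X$-rotation times a $Z$-rotation times a $Y$-rotation. The same argument with the axis order reversed also gives surjectivity of $\mathbf{R}^{\mathrm{fwd}}$: either apply adjoints, since $\mathbf{U}\mapsto\mathbf{U}^\dagger$ is a bijection of $\SU(2)$ and $\mathbf{R}^{\mathrm{inv}}=[\mathbf{R}^{\mathrm{fwd}}]^\dagger$ by \Cref{lem:single_qubit_inv}, or run the argument directly. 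This is the ``Part~(a)'' cited in \Cref{thm:saturation}.

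First, I would reduce to the familiar $ZYZ$ decomposition by a change of basis. Let $\mathbf{C}\in\SU(2)$ be the Clifford unitary that cyclically permutes the Pauli axes. Concretely, choose $\mathbf{C}$ with $\mathbf{C}\mathbf{Z}\mathbf{C}^\dagger=\mathbf{X}$, $\mathbf{C}\mathbf{X}\mathbf{C}^\dagger=\mathbf{Y}$ and $\mathbf{C}\mathbf{Y}\mathbf{C}^\dagger=\mathbf{Z}$; an explicit choice is $\mathbf{C}=e^{-i\pi/4}\,\mathbf{S}\mathbf{H}$ up to the sign fixed by the determinant. Then $\mathbf{C}\mathbf{R}_\sigma(\theta)\mathbf{C}^\dagger=\mathbf{R}_{\mathbf{C}\sigma\mathbf{C}^\dagger}(\theta)$, which follows from \eqref{eq:pauli_rot}. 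Conjugating an $XZY$ product therefore yields a $ZYX$ product, and conjugating by $\mathbf{C}^2$ yields a $YXZ$ product. I would pick whichever power of $\mathbf{C}$ turns the $X\!-\!Z\!-\!Y$ pattern into a pattern of the form ``axis $A$, axis $B$, axis $C$'' with three distinct axes, since all such patterns are equivalent under axis relabeling. Because conjugation is a bijection of $\SU(2)$, surjectivity for one pattern transfers to the others.

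Second, I would prove surjectivity for the three-distinct-axes pattern directly, say $\mathbf{R}_Z(a)\mathbf{R}_Y(b)\mathbf{R}_X(c)$, and I expect this step to be the main obstacle. Distinct axes rule out a verbatim use of the textbook $ZYZ$ (repeated-axis) form, so I would argue via the $\mathrm{SO}(3)$ double cover. Any rotation $\mathbf{O}\in\mathrm{SO}(3)$ has Tait--Bryan angles: a $Z$-rotation then a $Y$-rotation can send the body $x$-axis $\mathbf{O}e_x$ to any point on the sphere. Here $b\in[-\pi/2,\pi/2]$ sets the latitude and $a$ sets the longitude, so $\mathbf{R}_Y(b)^{-1}\mathbf{R}_Z(a)^{-1}\mathbf{O}$ fixes $e_x$. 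A rotation fixing $e_x$ is an $X$-rotation, which determines $c$.

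Finally, I would lift back to $\SU(2)$. The adjoint map $\SU(2)\to\mathrm{SO}(3)$ is $2{:}1$ with kernel $\{\pm\mathbf{I}_2\}$, so a product matching $\mathbf{U}$ in $\mathrm{SO}(3)$ equals $\pm\mathbf{U}$. The sign is then corrected by replacing $c$ with $c+2\pi$, using $\mathbf{R}_X(c+2\pi)=-\mathbf{R}_X(c)$. This gives exact surjectivity with no global-phase caveat.
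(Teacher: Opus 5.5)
Your proof is correct, but it follows a different route from the paper's. The paper first proves surjectivity of the \emph{forward} map $\mathbf{R}_Y(\alpha)\mathbf{R}_Z(\beta)\mathbf{R}_X(\gamma)$ by citing the classical Tait--Bryan (generalized Euler-angle) theorem for $\mathrm{SO}(3)$. It lifts through the double cover with the same $2\pi$-shift sign correction you use, and then obtains the stated map by adjoint closure, $\mathbf{R}^{\mathrm{inv}}=[\mathbf{R}^{\mathrm{fwd}}]^\dagger$. Its Clifford conjugators act on individual factors, are explicitly said not to telescope, and carry no real weight.

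You differ in three ways:
\begin{itemize}
\item You treat the stated map directly via the bijective reparametrization $(a,b,c)=(-\gamma,-\beta,-\alpha)$.
\item You use a \emph{single} global conjugation of the whole product, which does telescope, to turn the $XZY$ pattern into $ZYX$.
\item You actually prove the $\mathrm{SO}(3)$ decomposition with an orbit/stabilizer argument: $\mathbf{R}_Z(a)\mathbf{R}_Y(b)e_x$ sweeps the sphere, and a rotation fixing $e_x$ is an $X$-rotation.
\end{itemize}

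What your route buys is a self-contained proof that exposes the hypothesis really used: consecutive axes are \emph{orthogonal}, which is what lets the first two rotations reach every point of the sphere. The paper's phrasing (``consecutive pairs non-parallel'') would not suffice for general axes, though it is harmless here because distinct coordinate axes are orthogonal. The paper's route is shorter and gives forward and inverse surjectivity in one stroke.

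Two bookkeeping slips, neither affecting validity:
\begin{itemize}
\item The explicit matrix $\mathbf{S}\mathbf{H}$ (phase gate times Hadamard) induces the cycle $\mathbf{Z}\mapsto\mathbf{Y}\mapsto\mathbf{X}\mapsto\mathbf{Z}$. This is opposite to the relations you state for $\mathbf{C}$; under those relations $\mathbf{C}$ sends $XZY$ to $YXZ$, and it is $\mathbf{C}^2$ that gives $ZYX$.
\item $\det(e^{-i\pi/4}\mathbf{S}\mathbf{H})=-1$, so the phase should be $e^{i\pi/4}$.
\end{itemize}
Since conjugation by any unitary maps $\SU(2)$ onto itself, and you allow ``whichever power of $\mathbf{C}$,'' neither slip matters.
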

\begin{proof}
We argue in two parts: (a) the forward $Y ZX$ axis sequence covers all
of $\SU(2)$; (b) negating the angles
covers the same set, which yields the stated $\mathbf{R}^{\mathrm{inv}}$ form.

\emph{(a) Surjectivity of the $Y ZX$ Euler sequence.}
We start from the standard $ZY Z$ Euler decomposition: for every
$\mathbf{U}\in\SU(2)$ there exist angles $\psi,\theta,\varphi$ with
\begin{equation}\label{eq:zyz}
  \mathbf{U} = \mathbf{R}_Z(\psi) \mathbf{R}_Y(\theta) \mathbf{R}_Z(\varphi),
  \qquad \psi,\varphi\in\bbR,\ \theta\in[0,\pi],
\end{equation}
which is surjective because writing
$\mathbf{R}_Z(\xi)=\mathrm{diag}(e^{-i\xi/2},e^{i\xi/2})$ and
$\mathbf{R}_Y(\theta)=\bigl(\begin{smallmatrix}
\cos\frac\theta2 & -\sin\frac\theta2\\
\sin\frac\theta2 & \cos\frac\theta2\end{smallmatrix}\bigr)$
and equating with a general
$\mathbf{U}=\bigl(\begin{smallmatrix} u & -\bar v\\ v & \bar u
\end{smallmatrix}\bigr)$, $|u|^2+|v|^2=1$, the four real degrees of
freedom of $\SU(2)$ are matched by choosing $\theta$ from
$\cos\frac\theta2=|u|$, then $\psi,\varphi$ from the phases of $u,v$:
the half-angle combinations $(\psi{+}\varphi)/2$ and
$(\psi{-}\varphi)/2$ must equal $-\arg u$ and $\arg v$ modulo $2\pi$,
which is always solvable as $\psi,\varphi$ range over $\bbR$ (each
angle over a full $4\pi$ period; over $[0,2\pi)^2$ only half the phase
pairs are attainable, by the same double-cover sign discussed below). This is the standard Euler-angle construction for
$\SU(2)$; see \cite{nielsen2010quantum}, Theorem~4.1.

To convert~\eqref{eq:zyz} into a $Y ZX$ sequence we conjugate single
axes by the fixed Clifford rotations that permute the Pauli axes. Let
$\mathbf{S}=\mathbf{R}_X(\pi/2)$; from $\mathbf{R}_\sigma(\xi)=\exp(-i\tfrac\xi2\sigma)$ and
$\mathbf{S} \mathbf{Z} \mathbf{S}^\dagger=\mathbf{Y}$ (a direct $2\times2$ computation, since
conjugation by a $\tfrac\pi2$ $X$-rotation rotates the Bloch axis
$\hat z\mapsto\hat y$) we get, for all $\xi$,
\begin{equation}\label{eq:conj_zy}
  \mathbf{S} \mathbf{R}_Z(\xi) \mathbf{S}^\dagger
  = \exp \bigl(-i\tfrac\xi2 \mathbf{S}\mathbf{Z}\mathbf{S}^\dagger\bigr)
  = \mathbf{R}_Y(\xi),
\end{equation}
using that conjugation commutes with the matrix exponential. Likewise,
with $\mathbf{T}=\mathbf{R}_Y(\pi/2)$ and $\mathbf{T} \mathbf{Z} \mathbf{T}^\dagger=\mathbf{X}$ we obtain
$\mathbf{T} \mathbf{R}_Z(\xi) \mathbf{T}^\dagger=\mathbf{R}_X(\xi)$, so the three coordinate
one-parameter rotation subgroups are mutually conjugate in $\SU(2)$.

The product $\mathbf{R}_Y(\alpha) \mathbf{R}_Z(\beta) \mathbf{R}_X(\gamma)$ is built from
rotations about three \emph{distinct} axes whose \emph{consecutive} pairs
are non-parallel: $\hat y\neq\pm\hat z$ and $\hat z\neq\pm\hat x$.  This is
precisely a \emph{Tait--Bryan} (generalized-Euler) axis sequence, and the
classical generalized-Euler-angle theorem states that \emph{any} such
sequence parametrizes the entire rotation group $\mathrm{SO}(3)$.  Via the
surjective double cover $\SU(2)\twoheadrightarrow\mathrm{SO}(3)$, under which
$\mathbf{R}_{\hat n}(\xi)$ covers the rotation by angle $\xi$ about axis $\hat n$,
a Tait--Bryan solution therefore reaches any given $\mathbf{U}\in\SU(2)$ up to
the covering sign, $\mathbf{R}_Y(\alpha)\mathbf{R}_Z(\beta)\mathbf{R}_X(\gamma)=\pm\mathbf{U}$;
since $\mathbf{R}_\sigma(\theta+2\pi)=-\mathbf{R}_\sigma(\theta)$ (rotation gates are
$4\pi$-periodic), shifting any one angle by $2\pi$ absorbs the sign
\cite{nielsen2010quantum}.  The conjugation identities~\eqref{eq:conj_zy}
serve only to verify the non-parallel-axis hypothesis (the three subgroups
are genuinely distinct one-parameter groups); we do \emph{not} claim the
fixed conjugators $\mathbf{S},\mathbf{T}$ telescope through the product---they do
not---and we do not rely on a degree-of-freedom count, which would
establish only local surjectivity.  Consequently
\begin{equation}\label{eq:yzx_surj}
 \{\mathbf{R}_Y(\alpha) \mathbf{R}_Z(\beta) \mathbf{R}_X(\gamma)
    : (\alpha,\beta,\gamma)\in\bbR^3\} = \SU(2);
\end{equation}
that is, the forward map $\mathbf{R}^{\mathrm{fwd}}$ of~\eqref{eq:fwd_rot} is
surjective onto $\SU(2)$.  (The angle triple is unique only away from the
measure-zero gimbal locus $\theta\in\{0,\pi\}$; surjectivity, which is all
we need, holds on all of $\SU(2)$.)

\emph{(b) Negating the angles.}
The inverse map of~\eqref{eq:inv_rot} satisfies, by
$\mathbf{R}_\sigma(\xi)^\dagger=\mathbf{R}_\sigma(-\xi)$ and reversal of a product
under the adjoint,
\begin{equation}\label{eq:inv_is_adjoint}
  \mathbf{R}^{\mathrm{inv}}(\alpha,\beta,\gamma)
  = \mathbf{R}_X(-\gamma) \mathbf{R}_Z(-\beta) \mathbf{R}_Y(-\alpha)
  = \bigl[\mathbf{R}_Y(\alpha) \mathbf{R}_Z(\beta) \mathbf{R}_X(\gamma)\bigr]^\dagger
  = \bigl[\mathbf{R}^{\mathrm{fwd}}(\alpha,\beta,\gamma)\bigr]^\dagger.
\end{equation}
Thus the image of the $\mathbf{R}^{\mathrm{inv}}$ map is
$\{\mathbf{U}^\dagger:\mathbf{U}\in\mathrm{image}(\mathbf{R}^{\mathrm{fwd}})\}
=\{\mathbf{U}^\dagger:\mathbf{U}\in\SU(2)\}=\SU(2)$, where the last equality holds
because $\SU(2)$ is a group, hence closed under the (bijective)
adjoint/inverse operation $\mathbf{U}\mapsto\mathbf{U}^\dagger=\mathbf{U}^{-1}$. Therefore
the map $(\alpha,\beta,\gamma)\mapsto\mathbf{R}^{\mathrm{inv}}(\alpha,\beta,\gamma)$
is surjective onto $\SU(2)$, as claimed. (Equivalently: replacing
$(\alpha,\beta,\gamma)$ by $(-\alpha,-\beta,-\gamma)$ in
$\mathbf{R}^{\mathrm{fwd}}$ and reversing the factor order produces every
adjoint, and the adjoints of $\SU(2)$ are again all of $\SU(2)$.)
\end{proof}
\subsection{Perturbation bound derivation}\label{app:perturb}
We derive the energy perturbation bound~\eqref{eq:perturb} stated in
\Cref{sec:continuation}.  Throughout, $\|\cdot\|$ denotes the spectral
(operator) norm, which is submultiplicative, satisfies the triangle
inequality, and is invariant under conjugation by a unitary
($\|\mathbf{V}\mathbf{M}\mathbf{V}^\dagger\|=\|\mathbf{M}\|$ for unitary
$\mathbf{V}$).

\emph{Setup.}
At depth~$D$, the SE circuit prepares the unit-norm state
$|\boldsymbol{\psi}_D\rangle=\mathbf{U}_D^{\mathrm{SE}}(\bm\theta)|0^n\rangle$ with energy
$E_D(\bm\theta)=\langle\boldsymbol{\psi}_D|\mathbf{H}|\boldsymbol{\psi}_D\rangle$.  At depth~$D{+}2$
with new-layer parameters $\bm\phi$, the state becomes
$|\boldsymbol{\psi}_{D+2}\rangle=\mathbf{W}(\bm\phi)|\boldsymbol{\psi}_D\rangle$, where
$\mathbf{W}(\bm\phi)=\exp(-i\mathbf{G}(\bm\phi))$ is the unitary of the two new
rotation layers and $\mathbf{G}(\bm\phi)$ is its Hermitian generator
(constructed in \Cref{app:homotopy_extended}).  Because the energy
depends on $\bm\phi$ only through the conjugated Hamiltonian,
\begin{equation}\label{eq:energy_as_conj}
  E_{D+2}(\bm\theta,\bm\phi)
  =\langle\boldsymbol{\psi}_{D+2}|\mathbf{H}|\boldsymbol{\psi}_{D+2}\rangle
  =\langle\boldsymbol{\psi}_D| \mathbf{W}(\bm\phi)^\dagger\mathbf{H} \mathbf{W}(\bm\phi) |\boldsymbol{\psi}_D\rangle .
\end{equation}

\emph{Step~1 (Hadamard/BCH expansion).}
We use the \emph{Hadamard lemma} (the operator identity obtained by
Taylor-expanding $t\mapsto e^{i\mathbf{A}t}\mathbf{H}e^{-i\mathbf{A}t}$ about $t=0$ and
identifying the $k$-th derivative with the $k$-fold nested commutator).
For a Hermitian operator $\mathbf{A}$ and a real parameter $t$,
\begin{equation}\label{eq:hadamard_lemma}
  e^{i\mathbf{A}t} \mathbf{H} e^{-i\mathbf{A}t}
  =\sum_{k=0}^{\infty}\frac{(it)^k}{k!} \mathrm{ad}_{\mathbf{A}}^{k}(\mathbf{H})
  =\mathbf{H}+it[\mathbf{A},\mathbf{H}]+\frac{(it)^2}{2!}[\mathbf{A},[\mathbf{A},\mathbf{H}]]+\cdots,
\end{equation}
where $[\mathbf{A},\mathbf{H}]=\mathbf{A}\mathbf{H}-\mathbf{H}\mathbf{A}$ is the commutator and
$\mathrm{ad}_{\mathbf{A}}(\mathbf{H})\coloneqq[\mathbf{A},\mathbf{H}]$,
$\mathrm{ad}_{\mathbf{A}}^{k}=\mathrm{ad}_{\mathbf{A}}\circ\mathrm{ad}_{\mathbf{A}}^{k-1}$;
the series converges in operator norm for every bounded $\mathbf{A}$ because
$\|\mathrm{ad}_{\mathbf{A}}^{k}(\mathbf{H})\|\le(2\|\mathbf{A}\|)^k\|\mathbf{H}\|$.  Setting
$\mathbf{A}=\mathbf{G}(\bm\phi)$ and $t=1$ in~\eqref{eq:hadamard_lemma},
\begin{equation}\label{eq:WHW_expand}
  \mathbf{W}(\bm\phi)^\dagger\mathbf{H} \mathbf{W}(\bm\phi)
  =e^{i\mathbf{G}(\bm\phi)} \mathbf{H} e^{-i\mathbf{G}(\bm\phi)}
  =\mathbf{H}+i[\mathbf{G}(\bm\phi),\mathbf{H}]
   +\tfrac{i^2}{2}[\mathbf{G}(\bm\phi),[\mathbf{G}(\bm\phi),\mathbf{H}]]+\cdots,
\end{equation}
using $\mathbf{W}(\bm\phi)^\dagger=e^{+i\mathbf{G}(\bm\phi)}$ since $\mathbf{G}(\bm\phi)$ is
Hermitian.  Subtracting $\mathbf{H}$ and taking the expectation in the unit
state $|\boldsymbol{\psi}_D\rangle$ via~\eqref{eq:energy_as_conj} yields the
first-order form
\begin{equation}\label{eq:energy_diff_expand}
  E_{D+2}(\bm\theta,\bm\phi)-E_D(\bm\theta)
  =i\langle\boldsymbol{\psi}_D|[\mathbf{G}(\bm\phi),\mathbf{H}]|\boldsymbol{\psi}_D\rangle
   +O(\|\mathbf{G}(\bm\phi)\|^2),
\end{equation}
where the remainder collects all $k\ge 2$ terms
of~\eqref{eq:WHW_expand}, whose norms are bounded by
$\sum_{k\ge2}(2\|\mathbf{G}(\bm\phi)\|)^k\|\mathbf{H}\|/k!=O(\|\mathbf{G}(\bm\phi)\|^2)$ as
$\bm\phi\to\bm0$.  Since $\mathbf{G}(\bm\phi)$ and $\mathbf{H}$ are both Hermitian,
\begin{equation}\label{eq:comm_hermitian}
  \bigl(i[\mathbf{G},\mathbf{H}]\bigr)^\dagger
  =-i (\mathbf{G}\mathbf{H}-\mathbf{H}\mathbf{G})^\dagger
  =-i (\mathbf{H}\mathbf{G}-\mathbf{G}\mathbf{H})
  =i (\mathbf{G}\mathbf{H}-\mathbf{H}\mathbf{G})
  =i[\mathbf{G},\mathbf{H}],
\end{equation}
so $i[\mathbf{G}(\bm\phi),\mathbf{H}]$ is Hermitian and the leading term
in~\eqref{eq:energy_diff_expand} is a real expectation value,
consistent with $E_{D+2}-E_D\in\bbR$.

\emph{Step~2 (Exact bound via the mean-value inequality).}
Rather than truncate~\eqref{eq:WHW_expand}, we bound the full
(resummed) deviation.  Define the operator-valued curve
\begin{equation}\label{eq:f_curve}
  f(t)\coloneqq e^{it\mathbf{G}(\bm\phi)} \mathbf{H} e^{-it\mathbf{G}(\bm\phi)},
  \qquad t\in[0,1],
\end{equation}
so that $f(0)=\mathbf{H}$ and $f(1)=\mathbf{W}(\bm\phi)^\dagger\mathbf{H} \mathbf{W}(\bm\phi)$
by~\eqref{eq:WHW_expand}.  Differentiating~\eqref{eq:f_curve} by the
product rule, and using
$\tfrac{d}{dt}e^{\pm it\mathbf{G}}=\pm i\mathbf{G} e^{\pm it\mathbf{G}}=\pm i e^{\pm it\mathbf{G}}\mathbf{G}$
(the factor $\mathbf{G}$ commutes with its own exponential),
\begin{equation}\label{eq:f_deriv}
  f'(t)
  =i\mathbf{G} e^{it\mathbf{G}}\mathbf{H}e^{-it\mathbf{G}}
   -e^{it\mathbf{G}}\mathbf{H}e^{-it\mathbf{G}} i\mathbf{G}
  =i e^{it\mathbf{G}(\bm\phi)} [\mathbf{G}(\bm\phi),\mathbf{H}] e^{-it\mathbf{G}(\bm\phi)},
\end{equation}
where the last equality factors $e^{it\mathbf{G}}$ to the left and
$e^{-it\mathbf{G}}$ to the right of the commutator
$[\mathbf{G},\mathbf{H}]=\mathbf{G}\mathbf{H}-\mathbf{H}\mathbf{G}$.  Taking norms in~\eqref{eq:f_deriv},
since $e^{\pm it\mathbf{G}(\bm\phi)}$ is unitary and the spectral norm is
conjugation-invariant,
\begin{equation}\label{eq:fprime_bound}
  \|f'(t)\|
  =\|[\mathbf{G}(\bm\phi),\mathbf{H}]\|
  \overset{(\ast)}{\le}\|\mathbf{G}(\bm\phi)\mathbf{H}\|+\|\mathbf{H}\mathbf{G}(\bm\phi)\|
  \overset{(\ast\ast)}{\le}2 \|\mathbf{G}(\bm\phi)\| \|\mathbf{H}\|,
\end{equation}
where $(\ast)$ is the triangle inequality applied to
$\mathbf{G}\mathbf{H}-\mathbf{H}\mathbf{G}$ and $(\ast\ast)$ is submultiplicativity of the
spectral norm.  The bound~\eqref{eq:fprime_bound} is uniform in
$t\in[0,1]$.  By the mean-value inequality for the
Banach-space-valued $C^1$ map $f$ (i.e.\
$\|f(1)-f(0)\|\le\sup_{t\in[0,1]}\|f'(t)\| |1-0|$, obtained by
integrating $f(1)-f(0)=\int_0^1 f'(t) dt$ and applying the triangle
inequality to the integral),
\begin{equation}\label{eq:mvt_bound}
  \bigl\|\mathbf{W}(\bm\phi)^\dagger\mathbf{H} \mathbf{W}(\bm\phi)-\mathbf{H}\bigr\|
  =\|f(1)-f(0)\|
  \le 2 \|\mathbf{G}(\bm\phi)\| \|\mathbf{H}\|.
\end{equation}

\emph{Step~3 (From operator deviation to energy deviation).}
Subtract $E_D(\bm\theta)=\langle\boldsymbol{\psi}_D|\mathbf{H}|\boldsymbol{\psi}_D\rangle$
from~\eqref{eq:energy_as_conj} and apply Cauchy--Schwarz together with
the definition of the operator norm
($|\langle\boldsymbol{\psi}|\mathbf{M}|\boldsymbol{\psi}\rangle|\le\|\mathbf{M}\|$ for unit
$|\boldsymbol{\psi}\rangle$):
\begin{equation}\label{eq:energy_le_opnorm}
  |E_{D+2}(\bm\theta,\bm\phi)-E_D(\bm\theta)|
  =\bigl|\langle\boldsymbol{\psi}_D|
     \bigl(\mathbf{W}(\bm\phi)^\dagger\mathbf{H} \mathbf{W}(\bm\phi)-\mathbf{H}\bigr)
     |\boldsymbol{\psi}_D\rangle\bigr|
  \le\bigl\|\mathbf{W}(\bm\phi)^\dagger\mathbf{H} \mathbf{W}(\bm\phi)-\mathbf{H}\bigr\|.
\end{equation}

\emph{Step~4 (First-order expansion of the generator norm).}
By the generator decomposition derived in \Cref{app:homotopy_extended}
(eq.~\eqref{eq:W_tensor_app} and the lines following it),
$\mathbf{G}(\bm\phi)=\sum_{j=1}^{6n}\phi_j\mathbf{G}_j+\mathbf{R}(\bm\phi)$, where the
remainder $\mathbf{R}(\bm\phi)$ is the second- and higher-order part of the
Baker--Campbell--Hausdorff/Taylor expansion of
$\mathbf{G}(\bm\phi)=i\log\mathbf{W}(\bm\phi)$ about $\bm\phi=\bm0$, hence
$\|\mathbf{R}(\bm\phi)\|\le C_R\|\bm\phi\|^2$ for a finite constant $C_R<\infty$
determined by the generators $\{\mathbf{G}_j\}$ (the series for $i\log\mathbf{W}$
converges in operator norm because each $\|\mathbf{G}_j\|=\tfrac12$ is bounded).
By the triangle inequality and absolute homogeneity of the norm,
\begin{equation}\label{eq:Gnorm_expand}
  \|\mathbf{G}(\bm\phi)\|
  \le\sum_{j=1}^{6n}|\phi_j| \|\mathbf{G}_j\|+\|\mathbf{R}(\bm\phi)\|
  \le\sum_{j=1}^{6n}|\phi_j| \|\mathbf{G}_j\|+C_R\|\bm\phi\|^2.
\end{equation}

\emph{Conclusion.}
Chaining~\eqref{eq:energy_le_opnorm}, \eqref{eq:mvt_bound},
and~\eqref{eq:Gnorm_expand},
\begin{equation}\label{eq:perturb_app_final}
  |E_{D+2}(\bm\theta,\bm\phi)-E_D(\bm\theta)|
  \le 2 \|\mathbf{H}\| \|\mathbf{G}(\bm\phi)\|
  \le 2 \|\mathbf{H}\|\Bigl(\sum_{j=1}^{6n}|\phi_j| \|\mathbf{G}_j\|
     +C_R\|\bm\phi\|^2\Bigr),
\end{equation}
which is exactly~\eqref{eq:perturb} with the second-order remainder
$\le 2\|\mathbf{H}\|C_R\|\bm\phi\|^2$ retained.  The bound holds for every fixed
$\bm\theta$ (the only $\bm\theta$-dependence is through the unit state
$|\boldsymbol{\psi}_D(\bm\theta)\rangle$, which~\eqref{eq:energy_le_opnorm}
bounds out uniformly).  In the degenerate limit $\bm\phi\to\bm0$ one
has $\mathbf{G}(\bm\phi)\to\mathbf{0}$ by~\eqref{eq:Gnorm_expand}, so the
right-hand side of~\eqref{eq:perturb_app_final} tends to $0$ and
$E_{D+2}\to E_D$, consistent with the identity embedding
(\Cref{def:iep}).  For the jittered initialization with
$\sigma=0.05$ and $\|\mathbf{G}_j\|=1/2$, the linear term dominates the retained
quadratic correction $2\|\mathbf{H}\|C_R\|\bm\phi\|^2$ in magnitude; the quadratic
term is nonetheless carried through every downstream bound, never dropped.
\subsection{Gradient and Hessian perturbation bounds}\label{app:grad_hess}
We justify the bounds~\eqref{eq:grad_diff}--\eqref{eq:hess_diff}
used in the proof of \Cref{thm:basin}.  The quantifier is uniform over
the neighbourhood $\cN_D$ of \Cref{ass:local_reg}: every bound below
holds for \emph{all} $\bm\theta\in\cN_D$ with constants that do not
depend on $\bm\theta$ (they depend only on $\|\mathbf{H}\|$, on the
generator norms $\le 1/2$, and---for the Euclidean gradient norm and
the Hessian operator norm---on the parameter dimension $3nD$, as made
explicit below).
\begin{proof}
\emph{Setup.}
Write $\widetilde{E}(\bm\theta)\coloneqq E_{D+2}(\bm\theta,\bm\phi^\star)$
and
$\Delta\mathbf{H}\coloneqq\mathbf{W}(\bm\phi^\star)^\dagger\mathbf{H} \mathbf{W}(\bm\phi^\star)-\mathbf{H}$.
The operator $\Delta\mathbf{H}$ is Hermitian (a difference of Hermitian
operators) and is \emph{independent of $\bm\theta$}, since
$\bm\phi^\star$ is fixed.  By~\eqref{eq:energy_as_conj},
\begin{equation}\label{eq:gh_diff_state}
  \widetilde{E}(\bm\theta)-E_D(\bm\theta)
  =\langle\boldsymbol{\psi}_D(\bm\theta)| \Delta\mathbf{H} |\boldsymbol{\psi}_D(\bm\theta)\rangle ,
  \qquad
  |\boldsymbol{\psi}_D(\bm\theta)\rangle=\mathbf{U}_D^{\mathrm{SE}}(\bm\theta)|0^n\rangle .
\end{equation}
Applying the perturbation bound~\eqref{eq:mvt_bound} (equivalently
\eqref{eq:perturb_app_final}) with $\bm\phi=\bm\phi^\star$ and
$\|\bm\phi^\star\|=O(\sigma)$, $\|\mathbf{G}_j\|=1/2$,
\begin{equation}\label{eq:dH_bound}
  \|\Delta\mathbf{H}\|
  \le 2 \|\mathbf{H}\| \|\mathbf{G}(\bm\phi^\star)\|
  \le 2 \|\mathbf{H}\|\Bigl(\tfrac12\sum_{j=1}^{6n}|\phi_j^\star|
     +O(\|\bm\phi^\star\|^2)\Bigr)
  =O(\sigma),
\end{equation}
where the implied constant is proportional to $n\|\mathbf{H}\|$ (there are
$6n$ summands, each $O(\sigma)$).

\emph{Step~1 (Parameter-derivative bound).}
Each angle $\theta_j$ enters $\mathbf{U}_D^{\mathrm{SE}}(\bm\theta)$ through a
single Pauli rotation $e^{-i\theta_j\mathbf{G}_j^{\mathrm{gen}}}$ with a
local generator $\mathbf{G}_j^{\mathrm{gen}}$ (a half-Pauli, so
$\|\mathbf{G}_j^{\mathrm{gen}}\|=1/2$); write
$\mathbf{U}_D^{\mathrm{SE}}=\mathbf{U}_{>j} e^{-i\theta_j\mathbf{G}_j^{\mathrm{gen}}} \mathbf{U}_{<j}$
with $\mathbf{U}_{>j},\mathbf{U}_{<j}$ unitary and independent of $\theta_j$.
Differentiating,
\begin{equation}\label{eq:dpsi}
  \partial_{\theta_j}|\boldsymbol{\psi}_D(\bm\theta)\rangle
  =-i \mathbf{U}_{>j} \mathbf{G}_j^{\mathrm{gen}} 
    e^{-i\theta_j\mathbf{G}_j^{\mathrm{gen}}} \mathbf{U}_{<j}|0^n\rangle ,
\end{equation}
so, using $\||{-i}\mathbf{V}\boldsymbol{\xi}\rangle\|=\|\boldsymbol{\xi}\|$ for unitary
$\mathbf{V}$, submultiplicativity, and $\||0^n\rangle\|=1$,
\begin{equation}\label{eq:dpsi_bound}
  \bigl\|\partial_{\theta_j}|\boldsymbol{\psi}_D(\bm\theta)\rangle\bigr\|
  =\bigl\|\mathbf{G}_j^{\mathrm{gen}} 
     e^{-i\theta_j\mathbf{G}_j^{\mathrm{gen}}} \mathbf{U}_{<j}|0^n\rangle\bigr\|
  \le\|\mathbf{G}_j^{\mathrm{gen}}\|
  =\tfrac12 ,
\end{equation}
uniformly in $\bm\theta$.  (The standard Pauli generators are
involutory up to a factor, so $\|\mathbf{G}_j^{\mathrm{gen}}\|=1/2$; the
bound $\le 1/2$ is all we use.)

\emph{Step~2 (Gradient identity and per-component bound).}
Since $\widetilde{E}-E_D=\langle\boldsymbol{\psi}_D|\Delta\mathbf{H}|\boldsymbol{\psi}_D\rangle$
with $\Delta\mathbf{H}$ Hermitian and $\bm\theta$-independent, the product
rule gives the real, manifestly symmetric form
\begin{equation}\label{eq:grad_identity}
  \partial_{\theta_j}\bigl[\widetilde{E}-E_D\bigr]
  =\langle\partial_{\theta_j}\boldsymbol{\psi}_D|\Delta\mathbf{H}|\boldsymbol{\psi}_D\rangle
   +\langle\boldsymbol{\psi}_D|\Delta\mathbf{H}|\partial_{\theta_j}\boldsymbol{\psi}_D\rangle
  =2 \mathrm{Re} 
    \langle\partial_{\theta_j}\boldsymbol{\psi}_D| \Delta\mathbf{H} |\boldsymbol{\psi}_D\rangle ,
\end{equation}
where the second equality uses that the two terms are complex
conjugates ($\Delta\mathbf{H}^\dagger=\Delta\mathbf{H}$).  Bounding by
Cauchy--Schwarz, then~\eqref{eq:dpsi_bound} and~\eqref{eq:dH_bound}
(with $\||\boldsymbol{\psi}_D\rangle\|=1$),
\begin{equation}\label{eq:grad_comp_bound}
  \bigl|\partial_{\theta_j}[\widetilde{E}-E_D]\bigr|
  \le 2 \bigl\|\partial_{\theta_j}|\boldsymbol{\psi}_D\rangle\bigr\| 
        \|\Delta\mathbf{H}\| \||\boldsymbol{\psi}_D\rangle\|
  \le 2\cdot\tfrac12\cdot\|\Delta\mathbf{H}\|\cdot 1
  =\|\Delta\mathbf{H}\|=O(\sigma),
\end{equation}
uniformly in $j\in\{1,\dots,3nD\}$ and $\bm\theta\in\cN_D$.

\emph{Step~3 (Euclidean gradient bound).}
Aggregating the $3nD$ components~\eqref{eq:grad_comp_bound},
\begin{equation}\label{eq:grad_l2}
  \|\nabla\widetilde{E}(\bm\theta)-\nabla E_D(\bm\theta)\|_2
  =\Bigl(\sum_{j=1}^{3nD}
     \bigl|\partial_{\theta_j}[\widetilde{E}-E_D]\bigr|^2\Bigr)^{1/2}
  \le\sqrt{3nD} \|\Delta\mathbf{H}\|=O(\sqrt{nD} \sigma),
\end{equation}
which is~\eqref{eq:grad_diff}.  We flag the dimension factor honestly:
the \emph{per-coordinate} deviation is $O(\sigma)$ with a constant
$\propto n\|\mathbf{H}\|$, while the \emph{Euclidean} gradient norm carries
an additional $\sqrt{3nD}$ from summing $3nD$ coordinates; the $O(\cdot)$
constant in~\eqref{eq:grad_diff} is therefore not dimension-free but
scales like $n^{3/2}D^{1/2}\|\mathbf{H}\|$.  For a fixed problem size this
is a $\sigma$-independent constant times $\sigma$, which is all the
descent/PL argument requires; the dependence does \emph{not} affect the
contraction factor $1-\eta\mu_{D+2}$ beyond shifting
$\mu_{D+2}=\mu_D-O(\sigma)$ (with the $n,D$ scaling of that constant
recorded at~\eqref{eq:basin_c_scaling}).

\emph{Step~4 (Hessian bound).}
Differentiating~\eqref{eq:grad_identity} once more in $\theta_k$ and
using that $\Delta\mathbf{H}$ is Hermitian and $\bm\theta$-independent,
\begin{equation}\label{eq:hess_identity}
  \partial^2_{\theta_k\theta_j}\bigl[\widetilde{E}-E_D\bigr]
  =2 \mathrm{Re}\Bigl(
     \langle\partial^2_{\theta_k\theta_j}\boldsymbol{\psi}_D|\Delta\mathbf{H}|\boldsymbol{\psi}_D\rangle
     +\langle\partial_{\theta_j}\boldsymbol{\psi}_D|
        \Delta\mathbf{H}|\partial_{\theta_k}\boldsymbol{\psi}_D\rangle\Bigr).
\end{equation}
By the same single-rotation argument as~\eqref{eq:dpsi_bound}, the
second mixed derivative satisfies
$\|\partial^2_{\theta_k\theta_j}|\boldsymbol{\psi}_D\rangle\|
\le\|\mathbf{G}_k^{\mathrm{gen}}\| \|\mathbf{G}_j^{\mathrm{gen}}\|\le 1/4$
(two factors of a half-Pauli; for $k=j$ the same bound holds with
$\|\mathbf{G}_j^{\mathrm{gen}}\|^2=1/4$).  Hence, by Cauchy--Schwarz on each
inner product in~\eqref{eq:hess_identity} together
with~\eqref{eq:dpsi_bound} and~\eqref{eq:dH_bound},
\begin{equation}\label{eq:hess_comp_bound}
  \bigl|\partial^2_{\theta_k\theta_j}[\widetilde{E}-E_D]\bigr|
  \le 2\Bigl(\tfrac14+\tfrac12\cdot\tfrac12\Bigr)\|\Delta\mathbf{H}\|
  =\|\Delta\mathbf{H}\|=O(\sigma),
\end{equation}
uniformly in $k,j$ and $\bm\theta\in\cN_D$.  The Hessian difference is
a symmetric $3nD\times 3nD$ matrix whose entries are all $O(\sigma)$,
so its spectral norm obeys
\begin{equation}\label{eq:hess_op_bound}
  \bigl\|\nabla^2\widetilde{E}(\bm\theta)-\nabla^2 E_D(\bm\theta)\bigr\|
  \le\bigl\|\nabla^2\widetilde{E}-\nabla^2 E_D\bigr\|_F
  =\Bigl(\sum_{k,j}\bigl|\partial^2_{\theta_k\theta_j}[\widetilde{E}-E_D]\bigr|^2
   \Bigr)^{1/2}
  \le 3nD \|\Delta\mathbf{H}\|=O(nD \sigma),
\end{equation}
which is~\eqref{eq:hess_diff}.  Here we used $\|\cdot\|\le\|\cdot\|_F$
for the spectral-vs-Frobenius comparison.  As in Step~3, the bound is
$O(\sigma)$ for fixed problem size, but its constant scales with the
dimension (here as $nD$, with an extra $n\|\mathbf{H}\|$ from
$\|\Delta\mathbf{H}\|$); we record this rather than absorb it silently.

\emph{Sanity check ($\sigma\to0$).}
As $\sigma\to0$, $\|\bm\phi^\star\|=O(\sigma)\to0$, so $\Delta\mathbf{H}\to\mathbf{0}$
by~\eqref{eq:dH_bound}, and~\eqref{eq:grad_l2}--\eqref{eq:hess_op_bound}
force $\nabla\widetilde{E}\to\nabla E_D$ and
$\nabla^2\widetilde{E}\to\nabla^2 E_D$; in particular
$\mu_{D+2}\ge\mu_D-c\sigma\to\mu_D$ and $L_{D+2}\le L_D+c\sigma\to L_D$,
recovering the depth-$D$ regularity constants.
\end{proof}
\subsection{Monotone acceptance under sampling noise}\label{app:sampling}
On real hardware the energies are not evaluated exactly but estimated
from a finite number of measurement shots, so the exact acceptance rule
of \Cref{thm:monotone} must be replaced by one that is robust to the
estimation error.  We model the estimator through a two-sided confidence
interval of half-width $\varepsilon_{\mathrm{CI}}$ and show that
widening the acceptance margin to $2\varepsilon_{\mathrm{CI}}$ preserves
true (noise-free) monotonicity with high probability.
\begin{proposition}\label{prop:finite_shots}
  With $|\hat{E}(\bm\theta)-E(\bm\theta)|\le\varepsilon_{\mathrm{CI}}$
  w.h.p., accepting only if
  $\hat{E}(\bm\theta')\le\hat{E}(\bm\theta)-2\varepsilon_{\mathrm{CI}}$
  guarantees $E(\bm\theta')\le E(\bm\theta)$.
\end{proposition}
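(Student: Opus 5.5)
The argument is a triangle-inequality chain on the high-probability event where both estimates are accurate. Let $\mathcal{G}$ be the event that $|\hat{E}(\bm\theta)-E(\bm\theta)|\le\varepsilon_{\mathrm{CI}}$ and $|\hat{E}(\bm\theta')-E(\bm\theta')|\le\varepsilon_{\mathrm{CI}}$ hold simultaneously. If each holds with probability at least $1-\delta$, a union bound gives $\bbP(\mathcal{G})\ge 1-2\delta$. This is the sense in which ``w.h.p.'' must be read: the guarantee is conditional on $\mathcal{G}$, and the failure probability doubles because two independent estimates enter the comparison.

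On $\mathcal{G}$, suppose the rule accepts, so that $\hat{E}(\bm\theta')\le\hat{E}(\bm\theta)-2\varepsilon_{\mathrm{CI}}$. Three inequalities then chain together:
\[
  E(\bm\theta')\le\hat{E}(\bm\theta')+\varepsilon_{\mathrm{CI}}
  \le\hat{E}(\bm\theta)-\varepsilon_{\mathrm{CI}}
  \le E(\bm\theta).
\]
The first uses the upper half of the confidence interval at $\bm\theta'$. The second is the acceptance condition. The third uses the lower half of the interval at $\bm\theta$. The margin $2\varepsilon_{\mathrm{CI}}$ is exactly what absorbs the two worst-case estimation errors, which point in opposite directions.

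If instead the rule rejects, the procedure falls back to the identity-embedded parameters, as in the fall-back branch of \Cref{thm:monotone}. The exact embedding~\eqref{eq:energy_preserve_exact} then gives true-energy equality, with no dependence on the estimator. This lets us lift the single-step statement to the monotone chain of \Cref{thm:monotone}. Over $M$ stages, a union bound makes the chain hold with probability at least $1-2M\delta$.

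There is no real technical obstacle. The only delicate point is stating the probabilistic quantifier honestly: the conclusion is deterministic on $\mathcal{G}$, not almost sure. It also helps to note that the rule is conservative. It may reject genuine improvements smaller than $4\varepsilon_{\mathrm{CI}}$, but it never accepts a true increase on $\mathcal{G}$.
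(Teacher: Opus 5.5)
Your proof is correct and follows the paper's argument: a union bound makes both confidence intervals hold together with probability at least $1-2\delta$, and on that event the same three-inequality chain (confidence bound at $\bm\theta'$, then the acceptance margin, then confidence bound at $\bm\theta$) gives $E(\bm\theta')\le E(\bm\theta)$; your $1-2M\delta$ multi-stage extension matches the paper's follow-up remark. One small wording fix: the union bound does not need the two estimates to be independent, so drop independence as the reason the failure probability doubles.
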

\begin{proof}
Let $E(\cdot)$ denote the exact energy (the quantity that
\Cref{thm:monotone} controls) and $\hat{E}(\cdot)$ its shot-based
estimate.  Fix the two parameter vectors $\bm\theta$ (the incumbent) and
$\bm\theta'$ (the candidate) being compared at a given expansion.  The
hypothesis is a two-sided confidence interval at each point: there is a
failure probability $\delta_{\mathrm{CI}}\in(0,1)$ (for example,
$\delta_{\mathrm{CI}}=2\exp \bigl(-2N\varepsilon_{\mathrm{CI}}^2/(\lambda_d-\lambda_1)^2\bigr)$
for $N$ \emph{independent} shots, by Hoeffding's inequality applied to the
i.i.d.\ per-shot energy outcomes, which lie in the spectral range
$[\lambda_1,\lambda_d]$ of $\mathbf{H}$, so the per-shot range is
$\lambda_d-\lambda_1\le 2\|\mathbf{H}\|$) such that, individually,
\begin{equation}\label{eq:ci_individual}
  \bbP\bigl(|\hat{E}(\bm\theta)-E(\bm\theta)|
            \le\varepsilon_{\mathrm{CI}}\bigr)\ge 1-\delta_{\mathrm{CI}}
  \quad\text{and}\quad
  \bbP\bigl(|\hat{E}(\bm\theta')-E(\bm\theta')|
            \le\varepsilon_{\mathrm{CI}}\bigr)\ge 1-\delta_{\mathrm{CI}} .
\end{equation}
By the union bound, the two events in~\eqref{eq:ci_individual} hold
\emph{simultaneously} with probability at least
$1-2\delta_{\mathrm{CI}}$.  We argue deterministically on this
good event, on which both estimates are within
$\varepsilon_{\mathrm{CI}}$ of their true values.  Suppose the
candidate is accepted, i.e.
\begin{equation}\label{eq:ci_accept}
  \hat{E}(\bm\theta')\le\hat{E}(\bm\theta)-2\varepsilon_{\mathrm{CI}} .
\end{equation}
Then the true candidate energy is controlled by the following chain of
three inequalities, each justified to its right:
\begin{align}
  E(\bm\theta')
  &\le\hat{E}(\bm\theta')+\varepsilon_{\mathrm{CI}}
  &&\text{(upper confidence bound at $\bm\theta'$, good event)}
  \label{eq:ci_chain1}\\
  &\le\bigl(\hat{E}(\bm\theta)-2\varepsilon_{\mathrm{CI}}\bigr)
      +\varepsilon_{\mathrm{CI}}
   =\hat{E}(\bm\theta)-\varepsilon_{\mathrm{CI}}
  &&\text{(acceptance rule~\eqref{eq:ci_accept})}
  \label{eq:ci_chain2}\\
  &\le\bigl(E(\bm\theta)+\varepsilon_{\mathrm{CI}}\bigr)
      -\varepsilon_{\mathrm{CI}}
   =E(\bm\theta)
  &&\text{(upper confidence bound at $\bm\theta$, good event).}
  \label{eq:ci_chain3}
\end{align}
The first inequality uses
$E(\bm\theta')\le\hat{E}(\bm\theta')+\varepsilon_{\mathrm{CI}}$ (the
left half of the confidence interval at $\bm\theta'$); the second
substitutes the acceptance condition~\eqref{eq:ci_accept}; the third
uses $\hat{E}(\bm\theta)\le E(\bm\theta)+\varepsilon_{\mathrm{CI}}$ (the
right half of the confidence interval at $\bm\theta$).  Chaining
\eqref{eq:ci_chain1}--\eqref{eq:ci_chain3} gives
$E(\bm\theta')\le E(\bm\theta)$ on the good event, which holds with
probability at least $1-2\delta_{\mathrm{CI}}$.
\end{proof}
\begin{remark}[High-probability qualifier and what the noise affects]
\label{rem:sampling_qualifier}
  The guarantee is high-probability, not deterministic: it can fail only
  if one of the two confidence intervals in~\eqref{eq:ci_individual} is
  violated, an event of probability $\le2\delta_{\mathrm{CI}}$ per
  accept/reject comparison.  Over a run with $K$ such comparisons (one
  per expansion stage that invokes the rule), a second union bound caps
  the total failure probability at $2K\delta_{\mathrm{CI}}$, so choosing
  $\delta_{\mathrm{CI}}\le\delta_{\mathrm{tot}}/(2K)$ (equivalently
  $N=\Theta(\varepsilon_{\mathrm{CI}}^{-2}\|\mathbf{H}\|^2
  \log(K/\delta_{\mathrm{tot}}))$ shots per estimate) certifies the
  whole produced sequence is monotone with probability
  $\ge1-\delta_{\mathrm{tot}}$.  As $\varepsilon_{\mathrm{CI}}\to0$ the
  margin $2\varepsilon_{\mathrm{CI}}\to0$ and the rule reduces to the
  exact acceptance rule of \Cref{thm:monotone}.  We emphasise that this
  proposition concerns \emph{energy} estimation noise only: it makes the
  \emph{acceptance decision} robust and thereby preserves monotonicity
  of the produced energies.  It says nothing about the \emph{gradient}
  noise that enters the optimizer producing the candidate
  $\tilde{\bm\theta}^{(D+2)}$; gradient shot noise affects \emph{which}
  candidate is found and hence the convergence rate of
  \Cref{thm:basin}, but---because the acceptance test is applied to the
  candidate after it is produced---it cannot break the monotonicity
  guarantee, which rests solely on the energy comparison above and on
  the lossless fallback~\eqref{eq:energy_preserve_exact}.
\end{remark}
\section{Extended Theoretical Discussion}\label{app:theory_extended}
This appendix provides detailed derivations, motivations, and
interpretations for the theoretical results stated in
\Cref{sec:theory}.  Each subsection corresponds to a specific result
in the main text and is cross-referenced from there.
\subsection{Identity-embedding property: extended discussion}
\label{app:iep_extended}
The IEP (\Cref{def:iep}) is the foundational mechanism that makes
progressive depth training safe.  The core idea is simple: if we can
add new rotation layers to the circuit in such a way that those layers
``do nothing'' (implement the identity), then the deeper circuit can
reproduce \emph{every} state that the shallower circuit can, simply by
keeping the extra parameters at their identity-pair configuration.
The identity pairing of \Cref{def:id_pair} provides a concrete
$\iota_D$: set
$\iota_D(\bm\theta)=(\bm\theta,\bm\theta_{\mathrm{ref}},
\bm\theta_{\mathrm{ref}})$, where the two appended copies of
$\bm\theta_{\mathrm{ref}}$ form an identity pair in the
single-entangler circuit (the inverse rotation layer followed by the
forward rotation layer with the same parameters cancels to $\mathbf{I}_{2^n}$
by \Cref{prop:block_inv}).
In the language of reachable sets, if $\cM_D$ denotes all states
producible by the depth-$D$ SE circuit (over all parameter choices),
then the IEP guarantees $\cM_D\subseteq\cM_{D+2}$: every state
reachable at depth~$D$ is also reachable at depth~$D{+}2$ (by
freezing the new layers at identity).  This monotonicity
is the key ingredient for both \Cref{cor:energy_monotone} (optimal
energy cannot increase with depth) and \Cref{thm:monotone} (practical
monotonicity via the acceptance rule).
\subsection{Homotopy construction: detailed derivation}
\label{app:homotopy_extended}
We provide the full derivation of the homotopy
formulation~\eqref{eq:homotopy} used in \Cref{sec:continuation}.

\emph{Step~1 ($\mathbf{W}(\bm\phi)$ is a tensor product of single-qubit
unitaries).}
The two appended layers form an identity pair: by
construction~\eqref{eq:UD}, the depth-$(D{+}2)$ SE circuit places the
two new \emph{rotation} layers $\mathbf{W}_D,\mathbf{W}_{D+1}$ in the leftmost
(last-applied) factors, with the shared entangler $\mathbf{U}_{\mathrm{ent}}$
already accounted for at lower depth; the CNOT ring contributed by the
expansion cancels through \Cref{prop:block_inv}, exactly as
in~\eqref{eq:nested_cancel}--\eqref{eq:nested_collapse}.  What survives
acting on $|\boldsymbol{\psi}_D\rangle$ is therefore the product of the two
remaining single-qubit rotation layers,
$\mathbf{W}(\bm\phi)=\mathbf{W}_{D+1}\mathbf{W}_D$, \emph{with no entangling gate between
them}.  Each rotation layer is by definition a tensor product over
qubits, $\mathbf{W}_D=\bigotimes_q r_q^{(D)}$ and
$\mathbf{W}_{D+1}=\bigotimes_q r_q^{(D+1)}$ with $r_q^{(\cdot)}\in\SU(2)$.
Since $(\mathbf{A}_1\otimes\cdots\otimes\mathbf{A}_n)(\mathbf{B}_1\otimes\cdots\otimes\mathbf{B}_n)
=(\mathbf{A}_1\mathbf{B}_1)\otimes\cdots\otimes(\mathbf{A}_n\mathbf{B}_n)$ (the mixed-product
property of the Kronecker product), the layer product is again a tensor
product of single-qubit unitaries,
\begin{equation}\label{eq:W_tensor_app}
  \mathbf{W}(\bm\phi)=\bigotimes_{q=1}^n w_q(\bm\phi_q),
  \qquad w_q(\bm\phi_q)=r_q^{(D+1)} r_q^{(D)}\in\SU(2),
\end{equation}
where $\bm\phi_q\in\bbR^6$ collects the six parameters of qubit~$q$
across the two new layers (three Euler angles per layer), and
$w_q(\bm 0)=\mathbf{I}_2$ for each qubit because the reference configuration
makes the pair the identity (\Cref{def:id_pair}).

\emph{Step~2 (Matrix-exponential representation; the generator is a sum
of local terms).}
The exponential map $\exp:\su(2)\to\SU(2)$ is a local diffeomorphism at
$\bm0$ (its differential at the origin is the identity, so the inverse
function theorem applies); hence there is a neighbourhood of $\mathbf{I}_2$ on
which the matrix logarithm $\log:\SU(2)\to\su(2)$ is a well-defined
smooth inverse.  On the explicit domain
$\{\bm\phi:\|\bm\phi_q\|_1<\pi\ \text{for every }q\}$---where each
single-qubit factor $w_q(\bm\phi_q)$ has $\SU(2)$ rotation angle below
$\pi$ (the angle of a product of rotations is at most the sum of the
individual angles), hence no eigenvalue at $-1$ and so lies in the domain
of the principal logarithm (the jitter regime $\sigma=0.05$ satisfies this
with room to spare)---define the Hermitian single-qubit generators
$g_q(\bm\phi_q)\coloneqq i\log w_q(\bm\phi_q)$,
so that $w_q(\bm\phi_q)=\exp(-i g_q(\bm\phi_q))$ with
$g_q(\bm 0)=\mathbf{0}$.  Using
$\exp(\mathbf{A}_1)\otimes\exp(\mathbf{A}_2)=\exp(\mathbf{A}_1\otimes\mathbf{I}+\mathbf{I}\otimes\mathbf{A}_2)$
(valid because $\mathbf{A}_1\otimes\mathbf{I}$ and $\mathbf{I}\otimes\mathbf{A}_2$ commute, so the
two single-qubit exponentials combine without a
Baker--Campbell--Hausdorff correction \emph{across} qubits), the global
unitary~\eqref{eq:W_tensor_app} is
\begin{equation}\label{eq:W_global_exp}
  \mathbf{W}(\bm\phi)=\exp\bigl(-i \mathbf{G}(\bm\phi)\bigr),
  \qquad
  \mathbf{G}(\bm\phi)=\sum_{q=1}^n
    \mathbf{I}^{\otimes(q-1)}\otimes g_q(\bm\phi_q)\otimes\mathbf{I}^{\otimes(n-q)} ,
\end{equation}
a sum of $n$ mutually commuting, single-qubit-supported Hermitian
terms.  This $\mathbf{G}(\bm\phi)$ is the Hermitian generator referenced
in~\eqref{eq:homotopy} and~\eqref{eq:WHW_expand}.

\emph{Step~3 (First-order decomposition into local Paulis with
$\|\mathbf{G}_j\|=1/2$).}
Each $g_q(\bm\phi_q)$ is smooth in $\bm\phi_q$ with $g_q(\bm0)=\mathbf{0}$, so
its first-order Taylor expansion is
$g_q(\bm\phi_q)=\sum_{\ell=1}^{6}\phi_{q,\ell} 
\partial_{\phi_{q,\ell}}g_q(\bm0)+O(\|\bm\phi_q\|^2)$.  Collecting the
$6n$ parameters into the single index
$j\leftrightarrow(q,\ell)$ and substituting
into~\eqref{eq:W_global_exp},
\begin{equation}\label{eq:G_first_order_app}
  \mathbf{G}(\bm\phi)=\sum_{j=1}^{6n}\phi_j \mathbf{G}_j+\mathbf{R}(\bm\phi),
  \qquad
  \mathbf{G}_j\coloneqq i \frac{\partial\mathbf{W}}{\partial\phi_j}\Big|_{\bm\phi=\bm0}
        =\frac{\partial\mathbf{G}}{\partial\phi_j}\Big|_{\bm\phi=\bm0},
  \qquad \|\mathbf{R}(\bm\phi)\|=O(\|\bm\phi\|^2),
\end{equation}
where the identity
$\mathbf{G}_j=i \partial_{\phi_j}\mathbf{W}|_{\bm0}=\partial_{\phi_j}\mathbf{G}|_{\bm0}$
follows by differentiating $\mathbf{W}=\exp(-i\mathbf{G})$ at $\bm\phi=\bm0$ (where
$\mathbf{G}(\bm0)=\mathbf{0}$, so the chain rule for the matrix exponential gives
$\partial_{\phi_j}\mathbf{W}|_{\bm0}=-i \partial_{\phi_j}\mathbf{G}|_{\bm0}$), and the
$O(\|\bm\phi\|^2)$ remainder $\mathbf{R}(\bm\phi)$ is precisely the
second- and higher-order Taylor/BCH tail used
in~\eqref{eq:Gnorm_expand} and~\eqref{eq:G_first_order_app}.  Each
$\partial_{\phi_{q,\ell}}g_q(\bm0)$ is the generator of one elementary
Pauli rotation: for the SE ansatz the single-qubit factors are
$\mathbf{R}_P(\phi)=\exp(-i\tfrac{\phi}{2}\mathbf{P})$ with
$\mathbf{P}\in\{\mathbf{X},\mathbf{Y},\mathbf{Z}\}$, so
$\partial_\phi\bigl(\tfrac{\phi}{2}\mathbf{P}\bigr)=\tfrac12\mathbf{P}$.  Hence, for
each $j$, $\mathbf{G}_j=\tfrac12\mathbf{P}_q$ is a single-qubit Pauli on the
corresponding qubit~$q$ tensored with identities; for example, if
$\phi_j$ is the angle of an $\mathbf{R}_Z$ gate on qubit~$q$ then
$\mathbf{G}_j=\tfrac12\mathbf{Z}_q$.  Because every Pauli matrix has eigenvalues
$\pm1$, its spectral norm is $1$, and therefore
\begin{equation}\label{eq:Gj_norm_half}
  \|\mathbf{G}_j\|=\bigl\|\tfrac12\mathbf{P}_q\bigr\|=\tfrac12 ,
  \qquad j=1,\dots,6n,
\end{equation}
using $\|\mathbf{A}\otimes\mathbf{B}\|=\|\mathbf{A}\| \|\mathbf{B}\|$ (the spectral norm is
multiplicative under tensor products) so that tensoring with identities
does not change the norm.  This is the $\|\mathbf{G}_j\|=1/2$ fact invoked
throughout \Cref{app:perturb,app:grad_hess}.

\emph{Step~4 (Homotopy interpretation).}
The homotopy parameter $\lambda\in[0,1]$ in~\eqref{eq:homotopy}
smoothly ``turns on'' these new layers via
$\exp(-i\lambda\mathbf{G}(\bm\phi))$.  This is the standard device of
numerical continuation theory~\cite{allgower2003introduction}: instead
of a discrete jump from depth~$D$ to depth~$D{+}2$, we study the
one-parameter family $F(\bm\theta,\bm\phi;\lambda)$
of~\eqref{eq:F_homotopy} that interpolates between
$F(\cdot;0)=E_D$ and $F(\cdot;1)=E_{D+2}$, to which the implicit
function theorem applies (\Cref{app:continuation_complete}) to track
how critical points of the energy landscape evolve as $\lambda$
increases.  In the degenerate case $\bm\phi=\bm0$, \eqref{eq:W_global_exp}
gives $\mathbf{G}(\bm0)=\mathbf{0}$, hence $\exp(-i\lambda\mathbf{G}(\bm0))=\mathbf{I}$ for all
$\lambda$ and $\mathbf{W}(\bm0)=\mathbf{I}$: the homotopy is the constant identity
deformation, consistent with \Cref{rem:degenerate_phi_zero_app}.
\subsection{Continuation theorem: complete proof}
\label{app:continuation_complete}
We provide the full five-step proof of \Cref{thm:continuation}, along
with remarks on genericity and the degenerate case.
\begin{proof}[Complete proof of \Cref{thm:continuation}]
We apply the Implicit Function Theorem (IFT) to the stationarity map
\begin{equation}\label{eq:Phi_def_app}
  \Phi:\bbR^{3nD}\times\bbR\to\bbR^{3nD},
  \qquad
  \Phi(\bm\theta,\lambda)
  \coloneqq\nabla_{\bm\theta}F(\bm\theta,\bm\phi^\star;\lambda),
\end{equation}
where $\bm\phi^\star$ is the fixed nonzero target from condition~(iv).
By condition~(ii), $F$ is $C^2$ near
$(\bm\theta_D^\star,\bm\phi^\star,0)$, so $\Phi=\nabla_{\bm\theta}F$ is
$C^1$ there; both partial Jacobians $\partial_{\bm\theta}\Phi$ and
$\partial_\lambda\Phi$ exist and are continuous, as the IFT requires.

\emph{Step~1: Verify the IFT hypotheses at $(\bm\theta_D^\star,0)$.}
\begin{enumerate}[label=(\alph*)]
  \item \textbf{Root condition} $\Phi(\bm\theta_D^\star,0)=\bm 0$:
    By condition~(iii), $F(\bm\theta,\bm\phi^\star;0)=E_D(\bm\theta)$
    identically in $\bm\theta$, so
    $\Phi(\bm\theta,0)=\nabla_{\bm\theta}F(\bm\theta,\bm\phi^\star;0)
    =\nabla E_D(\bm\theta)$; evaluating at the local minimizer
    $\bm\theta_D^\star$ and invoking first-order optimality
    (\Cref{ass:nondegen} gives a local minimizer, at which the
    gradient vanishes),
    $\Phi(\bm\theta_D^\star,0)=\nabla E_D(\bm\theta_D^\star)=\bm 0$.
  \item \textbf{Invertibility of the $\bm\theta$-Jacobian:}
    Differentiating $\Phi(\bm\theta,0)=\nabla E_D(\bm\theta)$ in
    $\bm\theta$,
    \begin{equation}\label{eq:Phi_theta_jac}
      \partial_{\bm\theta}\Phi(\bm\theta_D^\star,0)
      =\nabla^2_{\bm\theta\bm\theta}E_D(\bm\theta_D^\star)
      \succeq\mu\mathbf{I}\succ\mathbf 0
    \end{equation}
    by condition~(i)/\Cref{ass:nondegen}.  A symmetric matrix with
    $\nabla^2 E_D\succeq\mu\mathbf{I}$ has every eigenvalue $\ge\mu>0$, so
    it is positive definite, hence invertible, and its inverse has
    spectral norm equal to the reciprocal of its smallest eigenvalue:
    \begin{equation}\label{eq:hess_inv_bound}
      \bigl\|[\nabla^2_{\bm\theta\bm\theta}E_D(\bm\theta_D^\star)]^{-1}\bigr\|
      =\frac{1}{\lambda_{\min}(\nabla^2 E_D(\bm\theta_D^\star))}
      \le\frac{1}{\mu}.
    \end{equation}
\end{enumerate}

\emph{Step~2: Confirm the homotopy is non-trivial.}
With $\mathbf{U}_{D+2}^{(\lambda)}=e^{-i\lambda\mathbf{G}(\bm\phi^\star)}
\mathbf{U}_D^{\mathrm{SE}}(\bm\theta)$ and
$|\boldsymbol{\psi}_D(\bm\theta)\rangle=\mathbf{U}_D^{\mathrm{SE}}(\bm\theta)|0^n\rangle$,
the homotopy energy~\eqref{eq:F_homotopy} reads
$F(\bm\theta,\bm\phi^\star;\lambda)
=\langle\boldsymbol{\psi}_D(\bm\theta)| 
e^{+i\lambda\mathbf{G}(\bm\phi^\star)}\mathbf{H} e^{-i\lambda\mathbf{G}(\bm\phi^\star)} 
|\boldsymbol{\psi}_D(\bm\theta)\rangle$.
Differentiating in $\lambda$ and evaluating at $\lambda=0$, using
$\partial_\lambda e^{\mp i\lambda\mathbf{G}}|_{\lambda=0}=\mp i\mathbf{G}$,
\begin{equation}\label{eq:lambda_deriv_app}
  \partial_\lambda F(\bm\theta,\bm\phi^\star;\lambda)\big|_{\lambda=0}
  =\langle\boldsymbol{\psi}_D(\bm\theta)|
    \bigl(i\mathbf{G}(\bm\phi^\star)\mathbf{H}-\mathbf{H} i\mathbf{G}(\bm\phi^\star)\bigr)
    |\boldsymbol{\psi}_D(\bm\theta)\rangle
  =i\langle\boldsymbol{\psi}_D(\bm\theta)|[\mathbf{G}(\bm\phi^\star),\mathbf{H}]|\boldsymbol{\psi}_D(\bm\theta)\rangle,
\end{equation}
which is real because $i[\mathbf{G},\mathbf{H}]$ is Hermitian (Step~1
of~\Cref{app:perturb}).  The IFT does not require this $\lambda$-derivative
to be nonzero, but the \emph{content} of the theorem---that the minimum
genuinely \emph{moves}---requires the mixed second derivative
$\partial_\lambda\Phi(\bm\theta_D^\star,0)
=\partial_\lambda\nabla_{\bm\theta}F(\bm\theta_D^\star,\bm\phi^\star;\lambda)|_{\lambda=0}
=\nabla_{\bm\theta}\bigl(i\langle\boldsymbol{\psi}_D|[\mathbf{G}(\bm\phi^\star),\mathbf{H}]
|\boldsymbol{\psi}_D\rangle\bigr)\big|_{\bm\theta_D^\star}$
to be nonzero, which is exactly condition~(v).  (Exchanging the
$\bm\theta$- and $\lambda$-derivatives is justified by the
$C^2$ smoothness of condition~(ii) via Clairaut/Schwarz.)

\emph{Step~3: Apply the IFT.}
By Step~1, $\Phi$ is $C^1$ near $(\bm\theta_D^\star,0)$ with
$\Phi(\bm\theta_D^\star,0)=\bm0$ and $\partial_{\bm\theta}\Phi$
invertible there.  The Implicit Function Theorem
(see \cite[Theorem~1.3.1]{allgower2003introduction}) therefore provides
open neighbourhoods $\cU\ni\bm\theta_D^\star$ in $\bbR^{3nD}$ and
$\cV\ni 0$ in $\bbR$, and a unique $C^1$ map $\bm\theta:\cV\to\cU$ with
$\bm\theta(0)=\bm\theta_D^\star$ and
\begin{equation}\label{eq:ift_root_app}
  \Phi(\bm\theta(\lambda),\lambda)
  =\nabla_{\bm\theta}F(\bm\theta(\lambda),\bm\phi^\star;\lambda)=\bm0
  \qquad\text{for all }\lambda\in\cV.
\end{equation}
Choosing $\lambda_{\max}\in(0,1]$ with $[0,\lambda_{\max}]\subseteq\cV$
(intersecting $\cV$ with $[0,1]$, possible since $\cV$ is an open
neighbourhood of $0$) yields the curve of the statement.  Here $\cU$ is
a neighbourhood (a \emph{set}), distinct from the circuit unitary
$\mathbf{U}_D^{\mathrm{SE}}$.

\emph{Step~4: Derive the initial velocity~\eqref{eq:curve_deriv}.}
The identity~\eqref{eq:ift_root_app} holds for all $\lambda\in\cV$, so
we may differentiate it in $\lambda$.  By the chain rule applied to
$\lambda\mapsto\Phi(\bm\theta(\lambda),\lambda)$,
\begin{equation}\label{eq:chain_rule_app}
  \partial_{\bm\theta}\Phi(\bm\theta(\lambda),\lambda) \bm\theta'(\lambda)
  +\partial_\lambda\Phi(\bm\theta(\lambda),\lambda)=\bm0 .
\end{equation}
Evaluating~\eqref{eq:chain_rule_app} at $\lambda=0$ (where
$\bm\theta(0)=\bm\theta_D^\star$) and
substituting~\eqref{eq:Phi_theta_jac},
\begin{equation}\label{eq:velocity_eqn_app}
  \nabla^2_{\bm\theta\bm\theta}E_D(\bm\theta_D^\star) \bm\theta'(0)
  +\partial_\lambda\nabla_{\bm\theta}
   F(\bm\theta_D^\star,\bm\phi^\star;\lambda)\big|_{\lambda=0}=\bm0 .
\end{equation}
Since $\nabla^2 E_D(\bm\theta_D^\star)$ is invertible
by~\eqref{eq:hess_inv_bound}, we may left-multiply
by its inverse to obtain exactly~\eqref{eq:curve_deriv}:
\begin{equation}\label{eq:velocity_solved_app}
  \bm\theta'(0)
  =-\bigl[\nabla^2_{\bm\theta\bm\theta}E_D(\bm\theta_D^\star)\bigr]^{-1}
     \partial_\lambda\nabla_{\bm\theta}
     F(\bm\theta_D^\star,\bm\phi^\star;\lambda)\big|_{\lambda=0} .
\end{equation}
By condition~(v) the second factor is nonzero, and an invertible matrix
maps a nonzero vector to a nonzero vector, so $\bm\theta'(0)\neq\bm0$:
the minimum genuinely moves as $\lambda$ increases.  Taking norms
in~\eqref{eq:velocity_solved_app} with~\eqref{eq:hess_inv_bound} gives
the quantitative bound
$\|\bm\theta'(0)\|\le\mu^{-1} 
\|\partial_\lambda\nabla_{\bm\theta}F|_{\lambda=0}\|
=O(\|\bm\phi^\star\|/\mu)=O(\sigma/\mu)$ quoted in the main text, since
$\partial_\lambda\nabla_{\bm\theta}F|_{\lambda=0}$ is linear in
$\mathbf{G}(\bm\phi^\star)$ by~\eqref{eq:lambda_deriv_app} and
$\|\mathbf{G}(\bm\phi^\star)\|=O(\|\bm\phi^\star\|)$
by~\eqref{eq:Gnorm_expand}.

\emph{Step~5: The continued critical point remains a local minimum.}
The map $\lambda\mapsto\nabla^2_{\bm\theta\bm\theta}
F(\bm\theta(\lambda),\bm\phi^\star;\lambda)$ is continuous on $\cV$:
$F$ is $C^2$ (condition~(ii)) so $\nabla^2_{\bm\theta\bm\theta}F$ is
continuous in $(\bm\theta,\lambda)$, and $\bm\theta(\cdot)$ is $C^1$ by
Step~3, so the composition is continuous.  At $\lambda=0$ it equals
$\nabla^2_{\bm\theta\bm\theta}E_D(\bm\theta_D^\star)\succeq\mu\mathbf{I}$
by~\eqref{eq:Phi_theta_jac}.  By continuity of the matrix-valued map in
the spectral norm, there is $\lambda_1\in(0,\lambda_{\max}]$ such that
\begin{equation}\label{eq:hess_close_app}
  \bigl\|\nabla^2_{\bm\theta\bm\theta}
    F(\bm\theta(\lambda),\bm\phi^\star;\lambda)
    -\nabla^2_{\bm\theta\bm\theta}E_D(\bm\theta_D^\star)\bigr\|
  \le\frac{\mu}{2}
  \qquad\text{for all }\lambda\in[0,\lambda_1].
\end{equation}
Writing $\mathbf{A}=\nabla^2 E_D(\bm\theta_D^\star)$ and $\mathbf{E}(\lambda)=
\nabla^2_{\bm\theta\bm\theta}F(\bm\theta(\lambda),\bm\phi^\star;\lambda)-\mathbf{A}$,
both symmetric, Weyl's inequality
$\lambda_{\min}(\mathbf{A}+\mathbf{E})\ge\lambda_{\min}(\mathbf{A})-\|\mathbf{E}\|$ gives, for
$\lambda\in[0,\lambda_1]$,
\begin{equation}\label{eq:weyl_app}
  \lambda_{\min}\bigl(\nabla^2_{\bm\theta\bm\theta}
    F(\bm\theta(\lambda),\bm\phi^\star;\lambda)\bigr)
  \ge\lambda_{\min}(\mathbf{A})-\|\mathbf{E}(\lambda)\|
  \ge\mu-\frac{\mu}{2}=\frac{\mu}{2}>0 ,
\end{equation}
i.e.\ $\nabla^2_{\bm\theta\bm\theta}F(\bm\theta(\lambda),\bm\phi^\star;\lambda)
\succeq(\mu/2)\mathbf{I}\succ\mathbf 0$.  A critical point
(by~\eqref{eq:ift_root_app}) with positive-definite Hessian is a
strict local minimum (second-order sufficiency), so the tracked point
$\bm\theta(\lambda)$ is a nondegenerate local minimizer of
$F(\cdot,\bm\phi^\star;\lambda)$ throughout $[0,\lambda_1]$; replacing
$\lambda_{\max}$ by $\min\{\lambda_{\max},\lambda_1\}$ if necessary
completes the proof.
\end{proof}
\begin{remark}[Genericity of condition~(v)]
\label{rem:nontrivial_genericity_app}
  Condition~(v) fails only when $[\mathbf{G}(\bm\phi^\star),\mathbf{H}]=0$ or when the
  expectation $\langle\boldsymbol{\psi}_D|[\mathbf{G}(\bm\phi^\star),\mathbf{H}]|\boldsymbol{\psi}_D\rangle$ is
  stationary at $\bm\theta_D^\star$ by a symmetry accident.  For
  Hamiltonians with no special commutation relations with single-qubit
  Pauli generators (our Tilted Ising and Random Ising benchmarks),
  this holds for all nonzero $\bm\phi^\star$.  For the TFIM, the
  $Z_2$ symmetry $[\mathbf{H},\prod_q \mathbf{X}_q]=0$ could create cancellations for
  specific $\bm\phi^\star$, but these form a measure-zero set.
\end{remark}
\begin{remark}[Contrast with the degenerate case $\bm\phi=\bm 0$]
\label{rem:degenerate_phi_zero_app}
  If $\bm\phi=\bm 0$, then $\mathbf{G}(\bm 0)=0$ and $\exp(-i\lambda\cdot 0)=\mathbf{I}$
  for all $\lambda$.  The homotopy energy reduces to
  $F(\bm\theta,\bm 0;\lambda)=E_D(\bm\theta)$ independently of
  $\lambda$, the mixed derivative is zero, and $\bm\theta'(0)=\bm 0$.
  The IFT yields only the trivial constant curve.  Condition~(iv)
  ($\bm\phi^\star\neq\bm 0$) is essential for non-trivial content.
\end{remark}
\begin{remark}[Interpretation]
\label{rem:continuation_interp}
  \Cref{thm:continuation} guarantees \emph{continuity with genuine
  deformation}: a good local minimum at depth~$D$ deforms smoothly
  into a nearby local minimum of the deeper landscape.  The driving
  force is the commutator expectation
  $\langle\boldsymbol{\psi}_D^\star|[\mathbf{G}(\bm\phi^\star),\mathbf{H}]|\boldsymbol{\psi}_D^\star\rangle$:
  the new rotation layers perturb the landscape precisely to the
  extent that their generators fail to commute with $\mathbf{H}$.
  For the jittered identity-pair initialization,
  $\|\bm\phi^\star\|=O(\sigma)$, so $\|\mathbf{G}(\bm\phi^\star)\|=O(\sigma)$
  and $\|\bm\theta'(0)\|=O(\sigma/\mu)$.  The old parameters shift by
  only a small amount---exactly the stability property we seek.
\end{remark}
\begin{remark}[Relationship to the classical IFT]
\label{rem:ift_scope_app}
  At its core, \Cref{thm:continuation} applies the standard IFT to a
  smooth family of energy functions.  The quantum-specific content
  enters through the perturbation bound~\eqref{eq:perturb}, which
  exploits the tensor-product structure of the new rotation layers to
  show that the perturbation is controlled by
  $\|\bm\phi^\star\|\cdot\sum_j\|\mathbf{G}_j\|$ rather than by the operator
  norm of a generic unitary.
\end{remark}
\subsection{Monotone energy guarantees: extended discussion}
\label{app:monotone_extended}
\Cref{cor:energy_monotone} established that the \emph{globally
optimal} energy is non-increasing with depth:
$E_{D+2}^\star\le E_D^\star$.  However, this is a statement about
the best possible parameters, which we cannot generally find in
polynomial time.  In practice, gradient descent produces a
\emph{local} minimum $\hat{\bm\theta}^{(D)}$, and there is no a
priori reason why the local minimum at depth~$D{+}2$ should have
lower energy than the one at depth~$D$.
This gap between theory and practice is the central challenge of
progressive depth training.  Na\"ive PDT can (and in our experiments,
does) exhibit energy \emph{increases} at expansion: the randomly
initialized new HEA blocks, each containing a CNOT ring, disrupt
the optimized state, and gradient descent may converge to a worse
local minimum.
IP-PDT addresses this through two mechanisms:
\begin{enumerate}[leftmargin=1.5em]
  \item \textbf{Identity-pair initialization}: the new rotation layers
    start at $E_{D+2}\approx E_D(\hat{\bm\theta}^{(D)})+O(\sigma)$.
  \item \textbf{Acceptance rule}: if optimization fails to improve,
    fall back to the embedded parameters
    $\iota_D(\hat{\bm\theta}^{(D)})$.
\end{enumerate}
The fallback is only possible because of the IEP.  In the standard HEA,
no identity embedding exists because every new block introduces an
irremovable CNOT ring.  In practice, the acceptance rule rarely triggers
the fallback: gradient descent from the identity-pair initialization
almost always improves the energy, because the optimizer starts in a
basin of the previous-stage minimum.
\subsection{Local strong convexity and the PL inequality: motivation and
validity}\label{app:pl_motivation}
\Cref{ass:local_reg} posits that $E_D$ is locally $\mu_D$-strongly
convex.  This is stronger than the Polyak--{\L}ojasiewicz (PL) condition
of Polyak~\cite{polyak1963gradient}---one of the weakest conditions
under which gradient descent converges linearly---but it is the
hypothesis a warm-started expansion actually needs, for two reasons
developed below: it \emph{implies} the PL inequality with a sharp
constant, and, unlike PL, it is stable under the $O(\sigma)$ depth
perturbation.

\medskip\noindent\textbf{Strong convexity near a nondegenerate minimum.}
If $\bm\theta_D^\star$ is a nondegenerate local minimum with
$\nabla^2 E_D(\bm\theta_D^\star)\succeq\mu \mathbf{I}$, then by continuity of
the Hessian there is a neighbourhood $\cN_D$ on which
$\nabla^2 E_D(\bm\theta)\succeq\tfrac{\mu}{2}\mathbf{I}$; on $\cN_D$ the exact
second-order Taylor expansion with mean-value remainder,
\[
  E_D(\bm\theta)= E_D(\bm\theta_D^\star)
  +\tfrac{1}{2}(\bm\theta-\bm\theta_D^\star)^T
  \nabla^2 E_D(\bm\xi) (\bm\theta-\bm\theta_D^\star)
\]
for some $\bm\xi$ on the segment $[\bm\theta_D^\star,\bm\theta]$ (using
$\nabla E_D(\bm\theta_D^\star)=\bm 0$), certifies local strong convexity
with $\mu_D$ of order the smallest Hessian eigenvalue at the minimum.

\medskip\noindent\textbf{Strong convexity $\Rightarrow$ PL, sharply.}
For any $\mu$-strongly convex $f$, minimizing both sides of
$f(\bm y)\ge f(\bm x)+\langle\nabla f(\bm x),\bm y-\bm x\rangle
+\tfrac{\mu}{2}\|\bm y-\bm x\|^2$ over $\bm y$ (the minimizer is
$\bm y=\bm x-\tfrac1\mu\nabla f(\bm x)$) gives
$f^\star\ge f(\bm x)-\tfrac{1}{2\mu}\|\nabla f(\bm x)\|^2$, i.e.\ the PL
inequality $\|\nabla f(\bm x)\|^2\ge 2\mu (f(\bm x)-f^\star)$ with
constant $\mu$.  The sharp constant matters: the cruder chaining
$\|\nabla f\|^2\ge\mu^2\|\bm x\|^2$ with $f\le\tfrac{L}{2}\|\bm x\|^2$
yields only $\|\nabla f\|^2\ge 2(\mu^2/L)f=2(\mu/\kappa)f$, weaker by the
condition number $\kappa=L/\mu$.  It is this strong-convexity argument,
not the chaining, that supplies the constant $\mu_{D+2}$ used
in~\eqref{eq:basin_PL}.

\medskip\noindent\textbf{When it fails, and why locality helps.}
Local strong convexity (hence PL) can fail (i)~far from the minimum,
where the landscape may have flat (barren) regions with near-zero
gradient but large suboptimality, or (ii)~at degenerate minima with a
singular Hessian.  Our assumption is explicitly \emph{local}: it need
only hold in a neighbourhood $\cN_D$ where the IP-PDT optimizer
operates, and since IP-PDT initializes each stage near the previous
stage's minimum, that neighbourhood can be taken small, increasing
plausibility.  In the barren-plateau regime, however, $\mu_D$ can be
exponentially small in $n$, giving an exponentially large condition
number $\kappa_D=L_D/\mu_D$ and an exponentially small admissible jitter
$\sigma_0=\mu_D/(2c)$; IP-PDT's shallow starting depth is designed to
mitigate this, but we do not claim the regime is escaped
unconditionally.
\subsection{Basin preservation: complete proof}
\label{app:basin_complete}
We provide the complete five-step proof of \Cref{thm:basin}.  Throughout,
$\bm\theta^+\coloneqq\bm\theta-\eta\nabla\widetilde{E}(\bm\theta)$ denotes
one gradient-descent step on $\widetilde{E}$ at $\bm\theta\in\cN_D$ with
step size $\eta\in(0,1/L_{D+2}]$, and all displayed constants are tracked
to the precision needed to certify $\mu_{D+2}>0$.
\begin{proof}[Complete proof of \Cref{thm:basin}]
Write $\widetilde{E}(\bm\theta)\coloneqq E_{D+2}(\bm\theta,\bm\phi^\star)$
and let $g\coloneqq 3nD$ be the number of old-layer coordinates.

\emph{Step~1: Relate to the depth-$D$ landscape and transfer the
regularity constants.}
By the perturbation bound~\eqref{eq:perturb} evaluated at the fixed
new-layer parameters $\bm\phi^\star$ with $\|\bm\phi^\star\|=O(\sigma)$
and $\|\mathbf{G}_j\|=1/2$,
\begin{equation}\label{eq:basin_energy_prox}
  |\widetilde{E}(\bm\theta)-E_D(\bm\theta)|
  \le 2\|\mathbf{H}\| \|\mathbf{G}(\bm\phi^\star)\|
  \eqqcolon\delta(\sigma),
  \qquad\text{uniformly for }\bm\theta\in\cN_D,
\end{equation}
the \emph{exact} bound~\eqref{eq:perturb} (first inequality, no
truncation); expanding the generator gives
$\delta(\sigma)\le 2\|\mathbf{H}\|\bigl(\sum_{j=1}^{6n}|\phi_j^\star| \|\mathbf{G}_j\|
+C_R\|\bm\phi^\star\|^2\bigr)$, which scales linearly in the jitter
$\sigma$ with the quadratic term retained.
The gradient and Hessian of the difference $\widetilde{E}-E_D$ obey, by
the component-wise bounds~\eqref{eq:grad_l2} and~\eqref{eq:hess_op_bound}
of \Cref{app:grad_hess},
\begin{equation}\label{eq:basin_grad_hess}
  \|\nabla\widetilde{E}(\bm\theta)-\nabla E_D(\bm\theta)\|
  \le\sqrt{g} \|\Delta\mathbf{H}\|,
  \qquad
  \|\nabla^2\widetilde{E}(\bm\theta)-\nabla^2 E_D(\bm\theta)\|
  \le g \|\Delta\mathbf{H}\|,
\end{equation}
where $\Delta\mathbf{H}=\mathbf{W}(\bm\phi^\star)^\dagger\mathbf{H} \mathbf{W}(\bm\phi^\star)-\mathbf{H}$
and $\|\Delta\mathbf{H}\|=O(\sigma)$ by~\eqref{eq:dH_bound}; both bounds hold
uniformly on $\cN_D$.  We name the resulting deviation constants
\begin{equation}\label{eq:basin_deviation_constants}
  a(\sigma)\coloneqq\sqrt{g} \|\Delta\mathbf{H}\|=O(\sigma),
  \qquad
  b(\sigma)\coloneqq g \|\Delta\mathbf{H}\|=O(\sigma).
\end{equation}

\textbf{Smoothness transfer.}  Fix $\bm\theta,\bm\theta'\in\cN_D$.
Adding and subtracting $\nabla E_D$ and applying the triangle
inequality,
\begin{align}
  \|\nabla\widetilde{E}(\bm\theta)
  -\nabla\widetilde{E}(\bm\theta')\|
  &\le\|\nabla E_D(\bm\theta)-\nabla E_D(\bm\theta')\|
  +\|(\nabla\widetilde{E}-\nabla E_D)(\bm\theta)
  -(\nabla\widetilde{E}-\nabla E_D)(\bm\theta')\|
  \notag\\
  &\le L_D\|\bm\theta-\bm\theta'\|
  +\Bigl(\sup_{\bm\theta''\in\cN_D}
     \|\nabla^2\widetilde{E}(\bm\theta'')-\nabla^2 E_D(\bm\theta'')\|\Bigr)
     \|\bm\theta-\bm\theta'\|
  \label{eq:basin_smooth_transfer}\\
  &\le\bigl(L_D+b(\sigma)\bigr)\|\bm\theta-\bm\theta'\| ,
  \notag
\end{align}
where the first norm uses the $L_D$-smoothness of $E_D$
(\Cref{ass:local_reg}), and the second uses the mean-value inequality
applied to the map $\bm\theta\mapsto(\nabla\widetilde{E}-\nabla E_D)
(\bm\theta)$ (whose Jacobian is the Hessian
difference~\eqref{eq:basin_grad_hess}) on the convex
set\footnote{If $\cN_D$ is not convex, replace it by an open ball
$\subseteq\cN_D$ around $\bm\theta_D^\star$; \Cref{ass:local_reg} holds
on any such sub-neighbourhood, and the basin argument only ever uses
points in this ball.} $\cN_D$.  Hence $\widetilde{E}$ is
$L_{D+2}$-smooth on $\cN_D$ with
\begin{equation}\label{eq:basin_L}
  L_{D+2}=L_D+b(\sigma)=L_D+O(\sigma).
\end{equation}

\textbf{Strong-convexity (PL) constant transfer.}  By
\Cref{ass:local_reg}, $\nabla^2 E_D(\bm\theta)\succeq\mu_D\mathbf{I}$ for all
$\bm\theta\in\cN_D$, i.e.\ the smallest eigenvalue satisfies
$\lambda_{\min}(\nabla^2 E_D(\bm\theta))\ge\mu_D$.  The two Hessians
$\nabla^2\widetilde{E}(\bm\theta)$ and $\nabla^2 E_D(\bm\theta)$ are
real symmetric, and their difference has spectral norm at most
$b(\sigma)$ by~\eqref{eq:basin_grad_hess}.  Weyl's eigenvalue
perturbation inequality, $|\lambda_{\min}(\mathbf{A}+\mathbf{E})
-\lambda_{\min}(\mathbf{A})|\le\|\mathbf{E}\|$ for symmetric $\mathbf{A},\mathbf{E}$, applied with
$\mathbf{A}=\nabla^2 E_D(\bm\theta)$ and
$\mathbf{E}=\nabla^2\widetilde{E}(\bm\theta)-\nabla^2 E_D(\bm\theta)$, gives
\begin{equation}\label{eq:basin_weyl}
  \lambda_{\min}\bigl(\nabla^2\widetilde{E}(\bm\theta)\bigr)
  \ge\lambda_{\min}\bigl(\nabla^2 E_D(\bm\theta)\bigr)-b(\sigma)
  \ge\mu_D-b(\sigma),
  \qquad\bm\theta\in\cN_D.
\end{equation}
Because $b(\sigma)=g \|\Delta\mathbf{H}\|$ is proportional to $\sigma$
(by~\eqref{eq:dH_bound}, $\|\Delta\mathbf{H}\|\le 2\|\mathbf{H}\|
\sum_{j}|\phi_j^\star|$ with $\|\bm\phi^\star\|=O(\sigma)$), write
$b(\sigma)=c \sigma+O(\sigma^2)$ where
\begin{equation}\label{eq:basin_c_scaling}
  c=O \bigl(g \|\mathbf{H}\|\bigr)=O \bigl(n^{3/2}D \|\mathbf{H}\|\bigr)
\end{equation}
is a \emph{problem-dependent} constant: it is not absolute but grows with
the old-parameter count $g=3nD$ and the Hamiltonian norm (and depends on
the realized jitter through $\sum_j|\phi_j^\star|/\sigma$).  Consequently
the admissible-jitter threshold $\sigma_0=\mu_D/(2c)$ shrinks both with
the (possibly exponentially small) curvature $\mu_D$ \emph{and}
polynomially in $n,D$---a limitation we record explicitly.  Setting
\begin{equation}\label{eq:basin_mu}
  \mu_{D+2}\coloneqq\mu_D-b(\sigma)=\mu_D-c \sigma+O(\sigma^2),
\end{equation}
the lower bound~\eqref{eq:basin_weyl} reads
$\nabla^2\widetilde{E}\succeq\mu_{D+2}\mathbf{I}$ on $\cN_D$.  This is a
\emph{positive} bound precisely under the quantitative smallness
condition
\begin{equation}\label{eq:basin_sigma_threshold}
  \sigma\le\sigma_0\coloneqq\mu_D/(2c)
  \qquad\Bigl(\text{equivalently, }b(\sigma)\le\mu_D/2\Bigr),
\end{equation}
which we assume henceforth; under it
$\mu_{D+2}=\mu_D-b(\sigma)\ge\mu_D/2>0$, the explicit form of the
statement's clause $\mu_{D+2}>0$.
Then $\widetilde{E}$ is
$\mu_{D+2}$-strongly convex on the convex neighbourhood $\cN_D$, and
$\mu_{D+2}$-strong convexity implies the $\mu_{D+2}$-PL inequality on
$\cN_D$ (a standard implication: strong convexity $\Rightarrow$
$\widetilde{E}(\bm\theta)-\widetilde{E}^\star
\le\tfrac{1}{2\mu_{D+2}}\|\nabla\widetilde{E}(\bm\theta)\|^2$, which is
exactly PL with constant $\mu_{D+2}$), where
$\widetilde{E}^\star\coloneqq\widetilde{E}(\bm\theta_{D+2}^\star)
=\min_{\bm\theta\in\cN_D}\widetilde{E}(\bm\theta)$ and
$\bm\theta_{D+2}^\star\in\cN_D$ is the (unique, by strong convexity)
minimizer.  The PL inequality used in Step~4 is therefore
\begin{equation}\label{eq:basin_PL}
  \|\nabla\widetilde{E}(\bm\theta)\|^2
  \ge 2\mu_{D+2}\bigl(\widetilde{E}(\bm\theta)-\widetilde{E}^\star\bigr),
  \qquad\bm\theta\in\cN_D.
\end{equation}

\emph{Interior stationarity of $\bm\theta_{D+2}^\star$.}
Strong convexity makes the minimizer of $\widetilde{E}$ over $\cN_D$
unique; we check it is an \emph{interior} stationary point, so that
$\nabla\widetilde{E}(\bm\theta_{D+2}^\star)=\bm 0$ (used in the
invariance argument and in Step~5).  The previous-stage optimum
$\bm\theta_D^\star$ is interior to $\cN_D$, and
$\|\nabla\widetilde{E}(\bm\theta_D^\star)\|
=\|\nabla\widetilde{E}(\bm\theta_D^\star)-\nabla E_D(\bm\theta_D^\star)\|
\le a(\sigma)$ by~\eqref{eq:basin_grad_hess} (using
$\nabla E_D(\bm\theta_D^\star)=\bm 0$), where $a(\sigma)=\sqrt{g} \|\Delta\mathbf{H}\|$
is the gradient-deviation bound, continuous and strictly increasing in
$\sigma$ with $a(0)=0$.  Strong convexity then places the stationary point
within $\|\bm\theta_{D+2}^\star-\bm\theta_D^\star\|\le a(\sigma)/\mu_{D+2}$
of $\bm\theta_D^\star$.  Let $r_D:=\mathrm{dist}(\bm\theta_D^\star,\partial\cN_D)>0$
be the in-radius of $\cN_D$ at $\bm\theta_D^\star$.  The displacement stays
below $r_D$---so $\bm\theta_{D+2}^\star$ is interior---whenever
$a(\sigma)/\mu_{D+2}<r_D$; using $\mu_{D+2}\ge\mu_D/2$ this holds for
$\sigma\le\sigma_1$, where $\sigma_1$ is the explicit threshold defined by
$a(\sigma_1)=\mu_D r_D/2$ (unique since $a$ is increasing).  We henceforth
take $\sigma\le\min(\sigma_0,\sigma_1)$ (still abbreviated $\sigma_0$), giving
$\nabla\widetilde{E}(\bm\theta_{D+2}^\star)=\bm 0$.

\emph{Step~2: Descent lemma from $L_{D+2}$-smoothness.}
A function that is $L_{D+2}$-smooth on the convex set $\cN_D$ satisfies,
for all $\bm\theta,\bm y\in\cN_D$, the quadratic upper bound
$\widetilde{E}(\bm y)\le\widetilde{E}(\bm\theta)
+\langle\nabla\widetilde{E}(\bm\theta),\bm y-\bm\theta\rangle
+\tfrac{L_{D+2}}{2}\|\bm y-\bm\theta\|^2$ (the descent lemma, obtained
by integrating $\nabla\widetilde{E}$ along the segment $[\bm\theta,\bm y]$
and bounding the integrand via Lipschitzness of the gradient).  Taking
$\bm y=\bm\theta^+=\bm\theta-\eta\nabla\widetilde{E}(\bm\theta)$, so that
$\bm\theta^+-\bm\theta=-\eta\nabla\widetilde{E}(\bm\theta)$,
\begin{align}
  \widetilde{E}(\bm\theta^+)
  &\le\widetilde{E}(\bm\theta)
  +\langle\nabla\widetilde{E}(\bm\theta),
  -\eta\nabla\widetilde{E}(\bm\theta)\rangle
  +\frac{L_{D+2}}{2}\|{-}\eta\nabla\widetilde{E}(\bm\theta)\|^2
  \notag\\
  &=\widetilde{E}(\bm\theta)
  -\eta\|\nabla\widetilde{E}(\bm\theta)\|^2
  +\frac{L_{D+2}\eta^2}{2}\|\nabla\widetilde{E}(\bm\theta)\|^2
  =\widetilde{E}(\bm\theta)
  -\eta\Bigl(1-\tfrac{L_{D+2}\eta}{2}\Bigr)
   \|\nabla\widetilde{E}(\bm\theta)\|^2 .
  \label{eq:basin_descent}
\end{align}
(The descent lemma requires the segment
$[\bm\theta,\bm\theta^+]\subseteq\cN_D$.  This is supplied by the
iterate-invariance argument below, which shows---from strong convexity
alone, \emph{independently} of the contraction derived here---that every
iterate remains in $\cN_D$.)

\emph{Step~3: The step restriction $\eta\le 1/L_{D+2}$ gives a clean
$\tfrac12$ factor.}
For $\eta\le 1/L_{D+2}$ we have
$\tfrac{L_{D+2}\eta}{2}\le\tfrac12$, hence
$1-\tfrac{L_{D+2}\eta}{2}\ge\tfrac12>0$.  The coefficient
in~\eqref{eq:basin_descent} is therefore at least $\eta/2$, and because
$\|\nabla\widetilde{E}(\bm\theta)\|^2\ge0$,
\begin{equation}\label{eq:basin_half}
  \widetilde{E}(\bm\theta^+)
  \le\widetilde{E}(\bm\theta)
  -\frac{\eta}{2}\|\nabla\widetilde{E}(\bm\theta)\|^2 .
\end{equation}

\emph{Step~4: Apply the PL inequality to obtain the contraction.}
Subtract $\widetilde{E}^\star$ from both sides
of~\eqref{eq:basin_half} and substitute the PL
inequality~\eqref{eq:basin_PL} into the negative term:
\begin{align}
  \widetilde{E}(\bm\theta^+)-\widetilde{E}^\star
  &\le\bigl(\widetilde{E}(\bm\theta)-\widetilde{E}^\star\bigr)
  -\frac{\eta}{2}\|\nabla\widetilde{E}(\bm\theta)\|^2
  \notag\\
  &\le\bigl(\widetilde{E}(\bm\theta)-\widetilde{E}^\star\bigr)
  -\frac{\eta}{2}\cdot 2\mu_{D+2}
    \bigl(\widetilde{E}(\bm\theta)-\widetilde{E}^\star\bigr)
  =(1-\eta\mu_{D+2})
  \bigl(\widetilde{E}(\bm\theta)-\widetilde{E}^\star\bigr).
  \label{eq:basin_contraction}
\end{align}
This is~\eqref{eq:linear_rate}.  The contraction factor lies in
$[0,1)$: it is $<1$ because $\eta\mu_{D+2}>0$
(using~\eqref{eq:basin_sigma_threshold}), and it is $\ge0$ because
$\eta\le 1/L_{D+2}\le 1/\mu_{D+2}$ (as $\mu_{D+2}\le\lambda_{\min}
\le\lambda_{\max}\le L_{D+2}$), so $\eta\mu_{D+2}\le1$.

\emph{Iterate invariance (non-circular).}
Before iterating we must check that every iterate stays in $\cN_D$, so
that the one-step bound applies along the whole trajectory; we establish
this from the Hessian lower bound alone, \emph{without} invoking the
value contraction~\eqref{eq:basin_contraction}.  Since $\widetilde{E}$ is
$\mu_{D+2}$-strongly convex and $L_{D+2}$-smooth on the convex set
$\cN_D$, the gradient-descent map
$\Phi(\bm\theta)\coloneqq\bm\theta-\eta\nabla\widetilde{E}(\bm\theta)$ with
$\eta\le 1/L_{D+2}$ contracts toward its fixed point
$\bm\theta_{D+2}^\star$ in Euclidean distance,
\begin{equation}\label{eq:basin_iterate_contraction}
  \|\Phi(\bm\theta)-\bm\theta_{D+2}^\star\|
  \le(1-\eta\mu_{D+2}) \|\bm\theta-\bm\theta_{D+2}^\star\|
  \le\|\bm\theta-\bm\theta_{D+2}^\star\| ,
  \qquad\bm\theta\in\cN_D,
\end{equation}
the standard contraction bound for gradient descent on a smooth strongly
convex function with $\eta\le 1/L_{D+2}$
(see \cite[\S3.4.2]{bubeck2015convex}).  Writing
$\bm u=\bm\theta-\bm\theta_{D+2}^\star$ and
$\bm g=\nabla\widetilde{E}(\bm\theta)$ (so that
$\nabla\widetilde{E}(\bm\theta_{D+2}^\star)=\bm 0$ and
$\Phi(\bm\theta)-\bm\theta_{D+2}^\star=\bm u-\eta\bm g$),
\[
  \|\bm u-\eta\bm g\|^2
  =\|\bm u\|^2-2\eta\langle\bm g,\bm u\rangle+\eta^2\|\bm g\|^2 ;
\]
the strong-convexity/smoothness co-coercivity inequality
$\langle\bm g,\bm u\rangle\ge
\tfrac{\mu_{D+2}L_{D+2}}{\mu_{D+2}+L_{D+2}}\|\bm u\|^2
+\tfrac{1}{\mu_{D+2}+L_{D+2}}\|\bm g\|^2$ makes the $\|\bm g\|^2$
coefficient $\eta^2-\tfrac{2\eta}{\mu_{D+2}+L_{D+2}}\le0$ for
$\eta\le 1/L_{D+2}\le 2/(\mu_{D+2}+L_{D+2})$; lower-bounding that
non-positive term via $\|\bm g\|\ge\mu_{D+2}\|\bm u\|$ (strong monotonicity)
and recombining leaves
$\|\bm u-\eta\bm g\|^2\le(1-\eta\mu_{D+2})^2\|\bm u\|^2$, i.e.\ contraction
by the factor $(1-\eta\mu_{D+2})$.  (Strong monotonicity alone would not
suffice---the combined co-coercivity bound is required.)  This uses only
strong convexity and smoothness, \emph{not} the energy
contraction~\eqref{eq:basin_contraction}.  By the hypothesis of
\Cref{thm:basin}, the closed ball
$\overline{B}(\bm\theta_{D+2}^\star,\|\bm\theta^{(0)}
-\bm\theta_{D+2}^\star\|)\subseteq\cN_D$; induction on $k$
using~\eqref{eq:basin_iterate_contraction} gives
$\|\bm\theta^{(k)}-\bm\theta_{D+2}^\star\|
\le\|\bm\theta^{(0)}-\bm\theta_{D+2}^\star\|$ for every $k$, so each iterate
remains in that ball and hence in $\cN_D$.  The descent lemma therefore
applies at every step, and iterating the one-step
contraction~\eqref{eq:basin_contraction} over the $T$ updates
$\bm\theta^{(0)}\mapsto\bm\theta^{(1)}\mapsto\cdots\mapsto\bm\theta^{(T)}$
yields the $T$-step bound~\eqref{eq:T_step_convergence}.

\emph{Step~5: Energy proximity of the two minimizers.}
Finally we quantify how close the depth-$(D{+}2)$ minimum value is to
the depth-$D$ minimum value, justifying that the inherited basin is the
\emph{right} one.  By the triangle inequality,
\begin{equation}\label{eq:basin_step5_split}
  |\widetilde{E}(\bm\theta_{D+2}^\star)-E_D(\bm\theta_D^\star)|
  \le\underbrace{|\widetilde{E}(\bm\theta_{D+2}^\star)
     -E_D(\bm\theta_{D+2}^\star)|}_{(\mathrm{i})}
  +\underbrace{|E_D(\bm\theta_{D+2}^\star)
     -E_D(\bm\theta_D^\star)|}_{(\mathrm{ii})} .
\end{equation}
Term~(i) is bounded by $\delta(\sigma)=O(\sigma)$ directly
from~\eqref{eq:basin_energy_prox} (with
$\bm\theta=\bm\theta_{D+2}^\star\in\cN_D$).  For term~(ii), $\bm\theta_D^\star$
is the minimizer of the $L_D$-smooth function $E_D$ on $\cN_D$, so
$\nabla E_D(\bm\theta_D^\star)=\bm 0$; the descent-lemma upper bound for
$E_D$ at $\bm y=\bm\theta_{D+2}^\star$ then gives
\begin{equation}\label{eq:basin_step5_ii}
  E_D(\bm\theta_{D+2}^\star)-E_D(\bm\theta_D^\star)
  \le\langle\underbrace{\nabla E_D(\bm\theta_D^\star)}_{= \bm 0},
       \bm\theta_{D+2}^\star-\bm\theta_D^\star\rangle
  +\frac{L_D}{2}\|\bm\theta_{D+2}^\star-\bm\theta_D^\star\|^2
  =\frac{L_D}{2}\|\bm\theta_{D+2}^\star-\bm\theta_D^\star\|^2 .
\end{equation}
The minimizer displacement is itself $O(\sigma)$, by a self-contained
strong-convexity argument that needs no appeal to the continuation
theorem.  Both points are stationary,
$\nabla\widetilde{E}(\bm\theta_{D+2}^\star)=\bm 0$ and
$\nabla E_D(\bm\theta_D^\star)=\bm 0$.  Strong convexity of
$\widetilde{E}$ gives the monotonicity lower bound
$\|\nabla\widetilde{E}(\bm\theta_D^\star)
-\nabla\widetilde{E}(\bm\theta_{D+2}^\star)\|
\ge\mu_{D+2}\|\bm\theta_D^\star-\bm\theta_{D+2}^\star\|$, while its
left-hand side equals the gradient deviation
$\nabla\widetilde{E}(\bm\theta_D^\star)-\nabla E_D(\bm\theta_D^\star)$
(using $\nabla E_D(\bm\theta_D^\star)=\bm 0$), of norm at most
$a(\sigma)=O(\sigma)$ by~\eqref{eq:basin_grad_hess}.  Hence
$\|\bm\theta_{D+2}^\star-\bm\theta_D^\star\|
\le a(\sigma)/\mu_{D+2}=O(\sigma/\mu_D)$, so the right-hand side
of~\eqref{eq:basin_step5_ii} is $O(\sigma^2)$.  Combining,
\begin{equation}\label{eq:basin_step5_final}
  |\widetilde{E}(\bm\theta_{D+2}^\star)-E_D(\bm\theta_D^\star)|
  \le\delta(\sigma)+\frac{L_D}{2} O(\sigma^2/\mu_D^2)
  =O(\sigma),
\end{equation}
so the value of the inherited basin differs from the previous-stage
optimum by only $O(\sigma)$, confirming that the contraction
in~\eqref{eq:basin_contraction} drives the energy toward a minimizer
$O(\sigma)$-close to $E_D(\bm\theta_D^\star)$.
\end{proof}
\begin{remark}[Degenerate case $\bm\phi=\bm 0$]
\label{rem:degenerate_basin_app}
  Setting $\bm\phi=\bm 0$ gives
  $E_{D+2}(\bm\theta,\bm 0)=E_D(\bm\theta)$ identically, and the
  bound reduces to the standard PL convergence theorem on $E_D$.  The
  nonzero $\bm\phi^\star$ is essential for new content.
\end{remark}
\begin{remark}[Contrast with na\"ive PDT]
\label{rem:naive_contrast_app}
  In na\"ive PDT, the energy after expansion is at some random point
  in the depth-$(D{+}2)$ landscape.  The optimizer may land outside
  the basin of any good minimum, requiring an unpredictable ``recovery''
  phase.  IP-PDT avoids this by starting each stage inside the basin
  of the previous-stage minimum.
\end{remark}
\subsection{Gap--fidelity sandwich: complete proof and applications}
\label{app:gap_fidelity_extended}
\begin{proof}[Complete proof of \Cref{thm:gap_fidelity}]
\emph{Step~1: Expand $|\boldsymbol{\psi}\rangle$ in the eigenbasis.}
Write $|\boldsymbol{\psi}\rangle=\sum_{i=1}^d c_i|\mathbf{v}_i\rangle$ with $c_i\in\bbC$,
$\sum_i|c_i|^2=1$.  The fidelity is $F(\boldsymbol{\psi})=|c_1|^2$, so
$\sum_{i\ge 2}|c_i|^2=1-F(\boldsymbol{\psi})$.
\emph{Step~2: Express the energy suboptimality.}
\begin{align*}
  \langle\boldsymbol{\psi}|\mathbf{H}|\boldsymbol{\psi}\rangle-\lambda_1
  &=\sum_i\lambda_i|c_i|^2-\lambda_1\sum_i|c_i|^2
  =\sum_{i\ge 2}(\lambda_i-\lambda_1)|c_i|^2.
\end{align*}
\emph{Step~3: Lower bound.}
$\lambda_i-\lambda_1\ge\lambda_2-\lambda_1=\Delta$ for $i\ge 2$, so
$\sum_{i\ge 2}(\lambda_i-\lambda_1)|c_i|^2
\ge\Delta\sum_{i\ge 2}|c_i|^2=\Delta(1-F(\boldsymbol{\psi}))$.
\emph{Step~4: Upper bound.}
$\lambda_i-\lambda_1\le\lambda_d-\lambda_1$ for all $i$, so
$\sum_{i\ge 2}(\lambda_i-\lambda_1)|c_i|^2
\le(\lambda_d-\lambda_1)(1-F(\boldsymbol{\psi}))$.
\emph{Step~5: Combine.}
$\Delta(1-F(\boldsymbol{\psi}))
\le\langle\boldsymbol{\psi}|\mathbf{H}|\boldsymbol{\psi}\rangle-\lambda_1
\le(\lambda_d-\lambda_1)(1-F(\boldsymbol{\psi}))$.
\end{proof}

\medskip\noindent\textbf{Applications to our benchmarks.}
\noindent\textbf{1.\ Energy--fidelity certificate.}
If $\langle\boldsymbol{\psi}|\mathbf{H}|\boldsymbol{\psi}\rangle-\lambda_1\le\varepsilon$, then
$F(\boldsymbol{\psi})\ge 1-\varepsilon/\Delta$.  For our Tilted Ising Hamiltonian,
IP-PDT achieves $\varepsilon\approx 0.02$ with $\Delta\approx 0.44$,
giving $F\ge 1-0.02/0.44\approx 0.955$.
\noindent\textbf{2.\ Condition number interpretation.}
The ratio $(\lambda_d-\lambda_1)/\Delta$ quantifies how efficiently
energy improvements translate to fidelity gains:
\begin{itemize}[leftmargin=1.5em]
  \item TFIM: $\Delta\approx 0.27$,
    $\lambda_d-\lambda_1\approx 12.7$, ratio $\approx 47$.
  \item Tilted Ising: $\Delta\approx 0.44$,
    $\lambda_d-\lambda_1\approx 18.6$, ratio $\approx 42$.
  \item Random Ising: $\Delta\approx 0.53$,
    $\lambda_d-\lambda_1\approx 5.8$, ratio $\approx 11$.
\end{itemize}
The Random Ising Hamiltonian has the most favorable ratio, consistent
with IP-PDT's strong performance on this benchmark.

\end{document}